\def\DONOTINSERTCOMMENTS{}
\not \isundefined{\disputationsdatum} 
\not \isundefined{\disputationslokal}}   
  \or \boolean{detectedSTOC} \or \boolean{detectedFOCS}
  \or \boolean{detectedSIAM} \or \boolean{detectedIEEE}
  \or \boolean{detectedPoster}}
\or \boolean{detectedSIAM}
  \or \boolean{detectedSIAM}     \or \boolean{detectedLIPIcs}}
\or \boolean{detectedSIAM}         \or 
\or \boolean{detectedNOW}          \or 
\or \boolean{detectedACM}          \or
\or \boolean{detectedLIPIcs}       \or
\or \boolean{detectedAAAI}         \or
\or \boolean{detectedSigplanconf}}
\or \boolean{detectedFOCS}         \or 
\or \boolean{detectedPoster}       \or
\or \boolean{detectedLMCS}         \or
\or \boolean{detectedNOW}          \or
\or \boolean{detectedThesis}       \or
\or \boolean{detectedACM}          \or 
\or \boolean{detectedAAAI}         \or
\or \boolean{detectedIJCAI}        \or 
\or \boolean{detectedSigplanconf}}
\or \boolean{detectedSIAM} 
  \or \boolean{detectedLIPIcs}}
\DeclareMathAlphabet{\mathsfsl}{OT1}{cmss}{m}{sl}
\newcommand{\Bigoh}[1]{\mathrm{O} \bigl( #1 \bigr)}
\newcommand{\bigoh}[1]{\mathrm{O} ( #1 )}
\newcommand{\Bigomega}[1]{\Omega \bigl( #1 \bigr)}
\newcommand{\bigomega}[1]{\Omega ( #1 )}
  \newcommand{\N}         {\mathbb{N}}
  \newcommand{\Nplus}     {\mathbb{N}^{+}}
\newcommand{\CEILING}[1]{\left \lceil #1 \right \rceil}
\newcommand{\MAXOFEXPR}[2][]{\max_{#1} \left\{ #2 \right\}}
\newcommand{\MINOFEXPR}[2][]{\min_{#1} \left\{ #2 \right\}}
\newcommand{\Maxofexpr}[2][]{\max_{#1} \bigl\{ #2 \bigr\}}
\newcommand{\Minofexpr}[2][]{\min_{#1} \bigl\{ #2 \bigr\}}
\newcommand{\maxofexpr}[2][]{\max_{#1} \{ #2 \}}
\newcommand{\MAXOFSET}[3][:]%
     {\ifthenelse{\equal{#1}{;}}%
     {\MAXOFEXPR{ #2 \,;\, #3 }}
     {\ifthenelse{\equal{#1}{:}}%
     {\MAXOFEXPR{ #2 \,:\, #3 }}
     {\max \twincommandJN{\left\{}{#2}{\left#1}{\right}{\,#3}{\right\}}}}}
\newcommand{\MINOFSET}[3][:]%
     {\ifthenelse{\equal{#1}{;}}%
     {\MINOFEXPR{ #2 \,;\, #3 }}
     {\ifthenelse{\equal{#1}{:}}%
     {\MINOFEXPR{ #2 \,:\, #3 }}
     {\min \twincommandJN{\left\{}{#2}{\left#1}{\right}{\,#3}{\right\}}}}}
\newcommand{\Maxofset}[3][:]%
     {\ifthenelse{\equal{#1}{;}}%
     {\Maxofexpr{ #2 \,;\, #3 }}
     {\ifthenelse{\equal{#1}{:}}%
     {\Maxofexpr{ #2 \,:\, #3 }}
     {\max \twincommandJN{\bigl\{}{#2}{\bigl#1}{\bigr}{\,#3}{\bigr\}}}}}
\newcommand{\Minofset}[3][:]%
     {\ifthenelse{\equal{#1}{;}}%
     {\Minofexpr{ #2 \,;\, #3 }}
     {\ifthenelse{\equal{#1}{:}}%
     {\Minofexpr{ #2 \,:\, #3 }}
     {\min \twincommandJN{\bigl\{}{#2}{\bigl#1}{\bigr}{\,#3}{\bigr\}}}}}
\newcommand{\F}{\mathbb{F}}
\newcommand{\gf}[1]{\mathrm{GF} ( #1 )}
\DeclareMathOperator{\Expop}{E}
\newcommand{\twincommandJN}[6]%
    {#1#2#3\vphantom{#2#5}\mspace{-2.05mu}#4.#5#6}
\newcommand{\CondExp}[2]%
    {\Expop\twincommandJN{\bigl[}{#1}{\bigl|}{\bigr}{\,#2}{\bigr]}}
\newcommand{\CONDEXP}[2]%
     {\Expop\twincommandJN{\left[}{#1}{\left|}{\right}{\,#2}{\right]}}
\newcommand{\Condprob}[3][]%
    {\Pr_{#1}\twincommandJN{\bigl[}{#2}{\bigl|}{\bigr}{\,#3}{\bigr]}}
\newcommand{\CONDPROB}[3][]%
    {\Pr_{#1}\twincommandJN{\left[}{#2}{\left|}{\right}{\,#3}{\right]}}
\newcommand{\funcdescr}[3]{\ensuremath{ #1 : #2 \to #3}}
\newcommand{\domainof}[1]{\ensuremath{\mathrm{dom} ( #1 )}}
\newcommand{\set}[1]{\{ #1 \}}
\newcommand{\Set}[1]{\bigl\{ #1 \bigr\}}
\newcommand{\setdescr}[3][\mid]{\set{ #2 #1 #3 }}
\newcommand{\Setdescr}[3][|]%
     {\ifthenelse{\equal{#1}{;}}%
     {\Set{ #2 \,;\, #3 }}
     {\ifthenelse{\equal{#1}{:}}%
     {\Set{ #2 \,:\, #3 }}
     {\twincommandJN{\bigl\{}{#2\,}{\bigl#1}{\bigr}{\,#3}{\bigr\}}}}}
\newcommand{\SETDESCR}[3][|]%
     {\twincommandJN{\left\{}{#2\,}{\left#1}{\right}{\,#3}{\right\}}}
\newcommand{\Setdescrbrackets}[3][|]%
     {\twincommandJN{\bigl[}{#2}{\bigl#1}{\bigr}{\,#3}{\bigr]}}
\newcommand{\SETDESCRBRACKETS}[3][|]%
     {\twincommandJN{\left[}{#2}{\left#1}{\right}{\,#3}{\right]}}
\newcommand{\Setsize}[1]{\bigl\lvert#1\bigr\rvert}
\newcommand{\setsize}[1]{\lvert#1\rvert}
\newcommand{\intersection}{\cap}
\newcommand{\union}{\cup}
\newcommand{\olnot}[1]{\overline{#1}}
\newcommand{\nvar}{n}
\newcommand{\nclause}{m}
\newcommand{\clwidth}{k}
\newcommand{\randkcnfnclwrepl}[3][\clwidth]%
        {\ensuremath{\mathcal{F}^{#2, #3}_{#1}}}
\newcommand{\randkcnfnclwreplstd}%
        {\randkcnfnclwrepl{\clwidth}{\nvar}{\nclause}}
\newcommand{\complclassformat}[1]%
        {\textrm{\upshape{\textsf{#1}}}\xspace}
\newcommand{\cocomplclass}[1]%
        {\textrm{\upshape{\textsf{co#1}}}\xspace}
\newcommand{\DTIMEadviceclass}[2]%
    {\ensuremath{\complclassformat{DTIME}\bigl(#1\bigr)/{#2}}}
\newcommand{\NP}{\complclassformat{NP}}
\newcommand{\PCPalph}[5]%
    {\ensuremath{\complclassformat{PCP}_{{#1},{#2}}[{#3}, {#4}, {#5}]}}
\newcommand{\PCP}[4]%
    {\ensuremath{\complclassformat{PCP}_{{#1},{#2}}[{#3}, {#4}]}}
\newcommand{\introduceterm}[1]{{\emph{#1}}}
\newcommand{\eqperiod}{\enspace .}
\newcommand{\eqcomma}{\enspace ,}
\newcommand{\wrt}{with respect to\xspace}
\newcommand{\ie}{i.e.,\ }
\renewcommand{\st}{\errmessage{Please do not use st}}}
\newcommand{\st}{such that\xspace}}
\newcommand{\etal}{et al.\@\xspace}
\newcommand{\ifaoif}{if and only if\xspace}
\newcommand{\wolog}{without loss of generality\xspace}
\newcommand{\refsec}[1]{Section~\ref{#1}}
\newcommand{\reffig}[1]{Figure~\ref{#1}}
\newcommand{\reftwofigs}[2]{Figures~\ref{#1} and~\ref{#2}}
\newcommand{\refth}[1]{Theorem~\ref{#1}}
\newcommand{\reflem}[1]{Lemma~\ref{#1}}
\newcommand{\refpr}[1]{Proposition~\ref{#1}}
\newcommand{\refcor}[1]{Corollary~\ref{#1}}
\newcommand{\refex}[1]{Example~\ref{#1}}
\newcommand{\refeq}[1]{\eqref{#1}}}
\renewcommand{\refeq}[1]{\eqref{#1}}}
\not \boolean{detectedSTOC}        \and \not \boolean{detectedFOCS}
\not \boolean{detectedPoster}      \and \not \boolean{detectedElsevier} 
\not \boolean{detectedSIAM}        \and \not \boolean{detectedACM}
\not \boolean{detectedIEEE}        \and \not \boolean{detectedNOW}
\not \boolean{detectedToC}         \and \not \boolean{detectedThesis}
\not \boolean{detectedLIPIcs}      \and \not \boolean{detectedSIAM}
\not \boolean{detectedAAAI}        \and \not \boolean{detectedIJCAI}
\not \boolean{detectedSigplanconf} \and \not \boolean{detectedACMconf}   
\not \boolean{detectedCompCplx} \and \not \boolean{detectedEasyChair}}
                          \or \boolean{detectedElsevier}
                          \or \boolean{detectedEasyChair}}
\newtheorem{standardlocalcounter}{Dummy}[section]
\newtheorem{standardglobalcounter}{Dummy}
\theoremstyle{plain}    
\newtheorem{theorem}[standardlocalcounter]{Theorem}
\newtheorem{lemma}[standardlocalcounter]{Lemma}
\newtheorem{proposition}[standardlocalcounter]{Proposition}
\newtheorem{corollary}[standardlocalcounter]{Corollary}
\newtheorem{openproblem}[standardglobalcounter]{Open Problem}
\theoremstyle{definition}
\newtheorem{claim}[standardlocalcounter]{Claim}
\theoremstyle{remark}
\newtheorem{example}[standardlocalcounter]{Example}
\or \boolean{detectedThesis} \or 
\or \boolean{detectedToC}    \or 
\or \boolean{detectedAAAI}   \or
\or \boolean{detectedSIAM}}
\def\SetTime{\hours=\time
\global\divide\hours by 60
\minutes=\hours
\multiply\minutes by 60
\advance\minutes by-\time
\global\multiply\minutes by-1 }
\def\now{\number\hours:\ifnum\minutes<10 0\fi\number\minutes}
\newcommand{\proofstd}{\pi}
\newcommand{\varx}{\ensuremath{x}}
\providecommand{\vary}{\ensuremath{y}}}
\newcommand{\vary}{\ensuremath{y}}}
\newcommand{\SETSOFVARSORLIT}[2]%
        {\mathit{#1}\left({#2}\right)}
\newcommand{\setsofvarsorlit}[2]%
        {\mathit{#1}({#2})}
\newcommand{\Setsofvarsorlit}[2]%
        {\mathit{#1}\bigl({#2}\bigr)}
\newcommand{\restrict}[2]{{{#1}\!\!\upharpoonright_{#2}}}
\newcommand{\derivabbrev}[2]{\bigl( #1 \vdash #2 \bigr)}
\newcommand{\derivabbrevsmall}[2]{( #1 \vdash #2 )}
\newcommand{\derivabbrevcompact}[2]{\bigl( #1 \vdash #2 \bigr)}
\newcommand{\refutabbrevsmall}[1]{\derivabbrevsmall{#1}{\!\bot}}
\newcommand{\refutabbrevcompact}[1]{\derivabbrevcompact{#1}{\!\bot}}
\newcommand{\genericrefsmall}[3]%
    {{\mathit{#1}}_{#2}\refutabbrevsmall{#3}}
\newcommand{\genericrefcompact}[3]%
    {{\mathit{#1}}_{#2}\refutabbrevcompact{#3}}
\newcommand{\genericderiv}[4]%
    {{\mathit{#1}}_{#2}\derivabbrev{#3}{#4}}
\newcommand{\genericderivsmall}[4]%
    {{\mathit{#1}}_{#2}\derivabbrevsmall{#3}{#4}}
\newcommand{\genericderivcompact}[4]%
    {{\mathit{#1}}_{#2}\derivabbrevcompact{#3}{#4}}
\newcommand{\generictaut}[3]%
    {{\mathit{#1}}_{#2}\derivabbrev{}{#3}}
\newcommand{\generictautcompact}[3]%
    {{\mathit{#1}}_{#2}\derivabbrevcompact{}{#3}}
\newcommand{\generictautsmall}[3]%
    {{\mathit{#1}}_{#2}\derivabbrevsmall{}{#3}}
\newcommand{\formulaformat}[1]{\mathit{#1}}
\newcommand{\extendedversion}[1]{\widetilde{#1}}
\newcommand{\epopnot}[1]%
    {\extendedversion{\formulaformat{POP}}_{#1}}
\newcommand{\elopnot}[1]%
    {\extendedversion{\formulaformat{LOP}}_{#1}}
\newcommand{\ephpnot}[2]%
    {\vphantom{\extendedversion{\formulaformat{PHP}}}
      {\smash{\extendedversion{\formulaformat{PHP}}}
        \vphantom{\formulaformat{PHP}}}^{#1}_{#2}}
\newcommand{\efphpnot}[2]%
    {\vphantom{\extendedversion{\formulaformat{FPHP}}}
      {\smash{\extendedversion{\formulaformat{FPHP}}}
        \vphantom{\formulaformat{FPHP}}}^{#1}_{#2}}
\newcommand{\ontophpnot}[2]%
    {\formulaformat{Onto}\text{-}\formulaformat{PHP}^{#1}_{#2}}
\newcommand{\ontofphpnot}[2]%
    {\formulaformat{Onto}\text{-}\formulaformat{FPHP}^{#1}_{#2}}
\newcommand{\graphontophpnot}[1][G]%
    {\text{$\formulaformat{Onto}$-$\formulaformat{PHP}$}({#1})}
\newcommand{\perfectmatchingnot}[1][G]%
    {\formulaformat{PM}({#1})}
\newcommand{\precolgadgetsize}{M}
\newcommand{\cplengthconst}{\kappa}
\newcommand{\cplength}{L}
\newcommand{\cpcol}{c}
\newcommand{\rcvertex}[1]{r_{#1}}
\newcommand{\lcvertex}[1]{\ell_{#1}}
\newcommand{\clrvertex}[1]{\gamma_{#1}}
\newcommand{\clrvindex}{t}
\newcommand{\substarrowdisplayed}{\ \mapsto \ }
\newcommand{\permsigma}{\sigma}
\newcommand{\vcolouring}{\chi}
\newcommand{\vcolour}[1]{\vcolouring({#1})}
\newcommand{\flightviaedge}[2]{{#1} \leftarrow {#2}}
\newcommand{\flightnotviaedge}[2]{{#1} \not\leftarrow {#2}}
\newcommand{\alldiffcolgadgets}{\widehat{\graphg}}
\newcommand{\diffcolgadget}[4]%
        {\graphg_{({#1},{#2}) \not\leftarrow ({#3},{#4})}}
\newcommand{\diffcolgadgetstd}%
        {\diffcolgadget{\pigeonindex}{\pigeonindex'}
        {\colstd}{\colstd'}}
\newcommand{\diffcolgadgetsamestd}%
        {\diffcolgadget{\pigeonindex}{\pigeonindex'}
        {\colstd}{\colstd}}
\newcommand{\pigeonset}{I}
\newcommand{\pigeonindex}{i}
\newcommand{\holeset}{J}
\newcommand{\holeindex}{j}
\newcommand{\holesforpigeon}[1]{J({#1})}
\newcommand{\xcolouring}[1]{${#1}$\nobreakdash-colouring\xspace}
\newcommand{\xcolourable}[1]{${#1}$\nobreakdash-colourable\xspace}
\newcommand{\xcolourability}[1]{${#1}$\nobreakdash-colourability\xspace}
\newcommand{\kcolouring}{$\numcolours$\nobreakdash-colouring\xspace}
\newcommand{\kcoloured}{$\numcolours$\nobreakdash-coloured\xspace}
\newcommand{\kcolourable}{$\numcolours$\nobreakdash-colourable\xspace}
\newcommand{\kcolourability}{$\numcolours$\nobreakdash-colourability\xspace}
\newcommand{\E}{\mathbb{E}}
\newcommand{\colstd}{c}
\newcommand{\colalt}{b}
\newcommand{\numcolours}{k}
\newcommand{\numcolors}{\numcolours}
\newcommand{\graphg}{G}
\newcommand{\graphstd}{\graphg}
\newcommand{\bipartstd}{B}
\newcommand{\colouringstd}{\chi}
\newcommand{\chomatic}{\chi}
\newcommand{\runity}{w}
\renewcommand{\runity}{\omega}
\newcommand{\PC}{PC\xspace}
\newcommand{\PCR}{PCR\xspace}
\newcommand{\degreestd}{d}
\newcommand{\poly}[1]{\uppercase{#1}}
\newcommand{\polyp}{\poly{p}}
\newcommand{\polyq}{\poly{q}}
\newcommand{\polyr}{\poly{r}}
\newcommand{\polyset}[1]{\mathcal{\uppercase{#1}}}
\newcommand{\polysets}{\polyset{S}}
\newcommand{\dualvar}[1]{\olnot{#1}}
\newcommand{\dvarx}{\dualvar{x}}
\providecommand{\stoptime}{\tau}
\renewcommand{\stoptime}{\tau}
\newcommand{\timet}{t}
\newcommand{\CP}{CP\xspace}
\newcommand{\lin}[1][]{A_{#1}}
\newcommand{\linaux}[1][]{B_{#1}}
\newcommand{\lincoeff}[1][]{a_{#1}}
\newcommand{\lindiv}{c}
\newcommand{\linconst}{\gamma}
\newcommand{\linset}[1]{\mathcal{\uppercase{#1}}}
\newcommand{\partassign}{\rho}
\newcommand{\partundefined}{*}
\newcommand{\partdomain}{D}
\newcommand{\partelem}{d}
\newcommand{\partrange}{R}
\DeclareMathOperator{\dom}{dom}
\renewcommand{\domainof}[1]{\ensuremath{\dom ( #1 )}}
\newcommand{\theauthorML}{the first author\xspace}
\newcommand{\TheauthorJN}{The second author\xspace}
\newtheoremstyle{metacommenttheoremstyle}%
    {3pt}%
    {3pt}%
    {\sffamily \itshape \scriptsize
    }%
    {}%
    {\bfseries \scshape \footnotesize }%
    {:}%
    { }%
    {}%
\theoremstyle{metacommenttheoremstyle}
\newtheorem{jncommentcontainer}{Jakob's comment}
\newtheorem{mlcommentcontainer}{Massimo's comment}
\newtheorem{mmcommentcontainer}{Mladen's comment}
\newtheorem{mvcommentcontainer}{Marc's comment}
  \newcommand{\jncomment}[1]%
  {\begin{jncommentcontainer} \textcolor{blue}{#1} \end{jncommentcontainer}}
  \newcommand{\mlcomment}[1]%
  {\begin{mlcommentcontainer} \textcolor{OliveGreen}{#1} \end{mlcommentcontainer}}
  \newcommand{\mmcomment}[1]%
  {\begin{mmcommentcontainer} \textcolor{magenta}{#1} \end{mmcommentcontainer}}
  \newcommand{\mvcomment}[1]%
  {\begin{mvcommentcontainer} \textcolor{orange}{#1} \end{mvcommentcontainer}}
  \newcommand{\jncomment}[1]{}
  \newcommand{\mlcomment}[1]{}
  \newcommand{\mmcomment}[1]{}
  \newcommand{\mvcomment}[1]{}
  \numberwithin{equation}{section}
\begin{document}

\title{%
  Graph Colouring is Hard for Algorithms Based on \\
  Hilbert's Nullstellensatz and Gröbner Bases%
  \thanks{This is the full-length version of the paper with the same
    title that appeared in
    \emph{Proceedings of the 32nd Annual
      Computational Complexity Conference ({CCC}~'17)}.}}

\author{%
  Massimo Lauria \\
  Sapienza --- Università di Roma, Italy\\
  \texttt{massimo.lauria@uniroma1.it}
  \and
  Jakob Nordström \\
  University of Copenhagen, Denmark, and Lund University, Sweden\\
  \texttt{jn@di.ku.dk}
}

\date{\today}

\maketitle

\ifthenelse{\boolean{conferenceversion}}
{}
{
  \thispagestyle{empty}

  \pagestyle{fancy}
  \fancyhead{}
  \fancyfoot{}
  \renewcommand{\headrulewidth}{0pt}
  \renewcommand{\footrulewidth}{0pt}

  \fancyhead[CE]{\slshape 
    GRAPH COLOURING IS HARD FOR
    HILBERT'S NULLSTELLENSATZ AND GRÖBNER BASES}
  \fancyhead[CO]{\slshape \nouppercase{\leftmark}}
  \fancyfoot[C]{\thepage}

  \setlength{\headheight}{13.6pt}
}
\begin{abstract}
  We consider the graph \mbox{$k$-colouring problem} encoded 
  as a set of polynomial equations
  in the standard way
  over \mbox{$0/1$-valued} variables.
  We prove that there are bounded-degree graphs that do not have legal
  $k$-colourings but for which the polynomial calculus proof system
  defined in [Clegg \etal~'96, Alekhnovich \etal~'02]
  requires linear degree, and hence exponential size, to establish
  this fact. This implies a linear degree lower bound for any algorithms based
  on Gröbner bases solving 
  graph \mbox{$k$-colouring}
  using this encoding.
  The same bound applies also for the algorithm studied in a sequence of papers
  [De Loera \etal~'08,~'09,~'11,~'15]
  based on Hilbert's Nullstellensatz proofs for a slightly
  different encoding, thus resolving an open problem  mentioned
  in [De Loera \etal~'08,~'09,~'11]  and  [Li \etal~'16].
  We obtain our results by combining the polynomial calculus degree lower
  bound for functional pigeonhole principle (FPHP) formulas over
  bounded-degree bipartite graphs in
  [Mik\v{s}a and Nordström~'15]
  with a reduction from FPHP to $k$-colouring
  derivable by polynomial calculus in constant degree.
\end{abstract}

\jncomment{Check that the paper reflects fairly that we \\
  (a) get very strong lower bounds for the De Loera \etal standard
  algorithm, but that \\ 
  (b) our lower bounds don't obviously extend to their enhanced
  version dealing with symmetries, added redudant polynomials,
  alternative versions of the Nullstellensatz, et cetera.} 
\section{Introduction}
\label{sec:intro}

Given an undirected graph
$G = (V,E)$ and a positive integer~$\numcolours$,  
can the vertices
$v \in V$ be coloured with at most $\numcolours$~colours so that no
vertices connected by an edge
have the same colour?
This
\introduceterm{graph colouring problem}
is perhaps one of the most extensively studied
\NP-complete problems.
It is widely believed that any algorithm for this problem has to run
in exponential time in the worst case, and indeed the currently
fastest algorithm for $3$\nobreakdash-colouring runs in time $O(1.3289^{n})$
\cite{Beigel05Coloring}.
A survey on various algorithms and techniques for so-called exact
algorithms is~\cite{Husfeldt15Colouring}.

Many
graph colouring
instances
of interest might not exhibit
worst-case behaviour, 
however,
and therefore it makes sense to study algorithms
without worst-case guarantees and examine how they perform in
practice. Dually, it can be of interest to study weak models of
computation, which are nevertheless strong enough to capture the power
of such algorithms, and prove unconditional lower bounds 
for these
models. 
Obtaining such lower bounds is
the goal of this work.

\subsection{Brief Background}

Since current state-of-the-art algorithms for
propositional satisfiability such as
\introduceterm{conflict-driven clause learning (CDCL)}~%
\cite{BS97UsingCSP,MS99Grasp,MMZZM01Engineering}
are ultimately based on 
\ifthenelse{\boolean{conferenceversion}}
{\introduceterm{resolution}~\cite{Blake37Thesis},} 
{the \introduceterm{resolution proof system}~\cite{Blake37Thesis},} 
it is perhaps not so surprising that this approach can be used to solve
\ifthenelse{\boolean{conferenceversion}}
{colouring} 
{graph colouring} 
problems as well.
According to~\cite{BCCM05RandomGraph}, McDiarmid developed a method
for deciding \kcolourability that captures many concrete
algorithms~\cite{McDiarmid84Colouring}. This method, 
viewed
as a proof
system, is simulated by resolution.

There are exponential lower bounds for resolution proofs of 
non-\kcolourability that apply to any such method. In particular, 
the paper
\cite{BCCM05RandomGraph}~presents
average-case exponential lower bounds for random graph \kcolouring
instances sampled so that the graphs are highly likely not to be
\kcolourable.
This result is obtained by
proving width lower bounds, \ie
lower bounds on the size of a largest clause in any 
resolution refutations of the formulas, 
and then using that linear width lower
bounds implies exponential size lower bounds~\cite{BW01ShortProofs}.

Another possible approach is to attack the \kcolouring problem using
algebra. Various algebraic methods have been considered in
\ifthenelse{\boolean{conferenceversion}}
{\cite{AT92Colorings,Lovasz94Stable,Matiyasevich74Criterion,Matiyasevich04Algebraic}.}  
{\cite{Matiyasevich74Criterion,Matiyasevich04Algebraic,Lovasz94Stable,AT92Colorings}.}  
The thesis \cite{Bayer82Division} contains the first explicit attempt we know of
to encode the \xcolouring{3} problem using Hilbert's Nullstellensatz.
At a high level, the idea is to 
write the problem as a set of polynomial equations
$\setdescr{f_i(x_1,\ldots, x_n) = 0}{i \in [m]}$
over a suitable field~$\F$
so that legal colourings correspond to solutions,
and if this is done in the right way it holds that this
system of equations has no solution if and only if there are
polynomials
$g_1, \ldots, g_m$ 
such that
$
\sum_{i=1}^{m} g_i f_i = 1
$.
This latter equality is referred to as a 
\introduceterm{Nullstellensatz certificate}
of non-colourability, and the \introduceterm{degree} of this
certificate is the largest degree of any polynomial  $g_i f_i$ in the sum.
Later papers based on Nullstellensatz and Gröbner bases  such as
\ifthenelse{\boolean{conferenceversion}}
{\cite{DeLoera95Grobner,HW08Algebraic,Mnuk01Representing}}
{\cite{DeLoera95Grobner,Mnuk01Representing,HW08Algebraic}}
have attracted a fair amount of attention.
For this work, 
we are particularly interested in the sequence of papers
\cite{DLMM08Hilbert,DLMO09ExpressingCombinatorial,DLMM11ComputingInfeasibility,DMPRRSSS15GraphColouring},
which uses an encoding of the \kcolouring problem that will be
discussed more in detail later in the paper.

There seem to be no formally proven lower bounds for these algebraic
methods.  On the contrary, the authors
of~\cite{DLMO09ExpressingCombinatorial} report that essentially all of
the benchmarks they have studied have Nullstellensatz certificates of
constant (and very small) degree.
Indeed, no lower bounds for graph colouring is known for the 
corresponding proof systems
\introduceterm{Nullstellensatz}~\cite{BIKPP94LowerBounds}
or the stronger system
\introduceterm{polynomial calculus}
\ifthenelse{\boolean{conferenceversion}}
{\cite{ABRW02SpaceComplexity,CEI96Groebner}.}
{\cite{CEI96Groebner,ABRW02SpaceComplexity}.}
Intriguingly, in a close parallel to the case for resolution it is
known that strong enough lower bounds on polynomial calculus degree
imply exponential lower bounds on proof size~\cite{IPS99LowerBounds},
but the techniques for proving degree lower bounds are much less
developed than the width lower bound techniques for resolution.

There have been degree lower bounds proven for polynomial calculus
refutations of concrete systems of polynomial equations, but in most
cases these have been encodings of obviously false statements
(such as negations of the pigeonhole principle or graph handshaking
lemma), rather than computationally hard problems.
For some of these problems
degree lower bounds can be obtained by making an affine transformation from
$\set{0,1}$\nobreakdash-valued variables to
$\set{-1,+1}$\nobreakdash-valued variables
\ifthenelse{\boolean{conferenceversion}}
{\cite{BI10Random,BGIP01LinearGaps},}
{\cite{BGIP01LinearGaps,BI10Random},}
but this only works for polynomial equations with the right structure
and only for fields of characteristic distinct from~$2$.
A  general and powerful method, which is independent of the field
characteristic, was developed in~\cite{AR03LowerBounds}, but has
turned out to be not so easy to 
to apply (except in a few papers such as
\cite{GL10Automatizability,GL10Optimality}).
A slightly different, 
and formally speaking somewhat incomparable, 
version of the approach in~\cite{AR03LowerBounds} was recently
presented in~\cite{MN15GeneralizedMethodDegree}, and this latter paper
also more clearly highlighted the similarities and differences between
resolution width lower bound techniques and polynomial calculus degree
lower bound techniques. 
The new framework in~\cite{MN15GeneralizedMethodDegree} was used to
establish a new degree lower bound which plays a key role in our
paper.

\subsection{Our Contributions}

\ifthenelse{\boolean{conferenceversion}}
{We exhibit families}
{We exhibit explicit families}
of non-\kcolourable graphs of 
\ifthenelse{\boolean{conferenceversion}}
{bounded  degree} 
{bounded vertex degree} 
such that the canonical encoding of the corresponding
\kcolouring instances into systems of polynomial equations over
$\set{0,1}$\nobreakdash-valued variables require linear degree to be
refuted in polynomial calculus.

\begin{theorem}[informal]
  \label{th:main-theorem}
  For any constant $\numcolours \geq 3$ there are explicit families of
  graphs $\{\graphstd_{n}\}_{n\in \N}$ of size~$\bigoh{n}$ and
  constant vertex degree, which are not \kcolourable but for which the
  polynomial calculus proof system requires linear degree, and hence
  exponential size, to prove this fact, regardless of the underlying
  field. 
\end{theorem}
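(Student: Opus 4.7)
The plan is to prove the theorem by reduction from a problem already known to require linear polynomial calculus degree, namely the functional pigeonhole principle (FPHP) over bounded-degree bipartite graphs, for which \cite{MN15GeneralizedMethodDegree} supplies precisely the kind of lower bound we need. Given such a bounded-degree bipartite graph $H$ with, say, $n$ holes and $n+1$ pigeons, I will construct an explicit graph $G_H$ of size $\bigoh{n}$ and bounded vertex degree whose $k$-colouring formula encodes the (un)satisfiability of FPHP$(H)$ in a way that lets polynomial calculus proofs be transferred between the two formulas with only constant additive degree overhead.

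Concretely, I would assemble $G_H$ out of three kinds of gadgets. A \emph{palette} gadget is a $k$-clique of reference vertices $r_1, \ldots, r_k$ whose pairwise distinct colours represent the $k$ ``holes''. A \emph{pigeon} gadget for each pigeon $p$ provides a representative vertex $v_p$ together with a constant number of auxiliary vertices wired to the palette, forcing the colour of $v_p$ to equal that of some $r_j$ with $j$ a hole available to $p$ in $H$. Finally, an \emph{exclusion} gadget for each hole $j$ takes the constantly many $v_p$'s corresponding to pigeons adjacent to $j$ and forbids more than one of them from receiving the colour of $r_j$. Because $H$ has bounded degree, each gadget is of constant size, so $G_H$ has $\bigoh{n}$ vertices and bounded vertex degree, and a legal $k$-colouring of $G_H$ corresponds exactly to a perfect matching witnessing satisfiability of FPHP$(H)$; in particular, if $H$ is chosen from the family of \cite{MN15GeneralizedMethodDegree}, then $G_H$ is not \kcolourable.

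The heart of the proof is the \emph{algebraic} side of the reduction: showing that any polynomial calculus refutation of the colouring axioms for $G_H$ can be pulled back to a refutation of FPHP$(H)$ with only constant additive degree overhead. After applying a symmetry-breaking restriction that fixes $r_j$ to colour $j$, I would associate with each FPHP variable $x_{p,j}$ a constant-degree polynomial $\widetilde{x}_{p,j}$ in the remaining colouring variables that equals $1$ exactly when $v_p$ is coloured $j$. One then verifies that, under the substitution $x_{p,j} \mapsto \widetilde{x}_{p,j}$, (i) each FPHP$(H)$ axiom is derivable from the restricted colouring axioms in constant polynomial calculus degree, and (ii) conversely, every restricted colouring axiom admits a constant-degree derivation from the translated FPHP axioms. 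Grafting these constant-degree derivations onto a hypothetical sub-linear degree refutation of the colouring formula would produce a sub-linear degree refutation of FPHP$(H)$, contradicting \cite{MN15GeneralizedMethodDegree}; the exponential size lower bound in the theorem then follows from the degree-to-size trade-off of \cite{IPS99LowerBounds}.

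The most delicate step, I expect, will be the design of the exclusion gadget. It has to simultaneously be of bounded vertex degree (to keep $G_H$ sparse), field-independent (to sidestep the characteristic-$2$ pathologies that have historically hampered algebraic degree lower bounds), and ``PC-locally simple'' in the sense that the key implication ``at most one pigeon adjacent to hole $j$ receives the colour of $r_j$'' is derivable from the gadget's own colouring axioms in constant polynomial calculus degree over any field. Ensuring this last property is what forces one to use a carefully tailored combinatorial gadget rather than any of the generic NP-completeness reductions, and is the place where the construction of $G_H$ has to be designed with proof complexity, rather than just expressiveness, in mind.
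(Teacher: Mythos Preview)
Your high-level strategy matches the paper's: reduce from FPHP on bounded-degree bipartite expanders, show polynomial calculus can carry out the reduction in constant degree, then invoke \cite{MN15GeneralizedMethodDegree} and \cite{IPS99LowerBounds}. But the gadget design rests on a confusion. You describe the palette as a $k$-clique whose vertices ``represent the $k$ holes,'' and your exclusion gadget for hole $j$ forbids multiple pigeon vertices from receiving ``the colour of $r_j$.'' The bipartite graph $H$, however, has $\Theta(n)$ holes, not $k$; only its \emph{left} degree is $k$. There is no global colour-to-hole correspondence when there are far more holes than colours. The paper's key idea is that the $k$ colours encode \emph{local edge labels}: enumerate the $k$ edges leaving each pigeon arbitrarily, and let colour $c$ on pigeon vertex $i$ mean ``pigeon $i$ takes its $c$th edge.'' A collision at hole $j$ between pigeons $i,i'$ then forbids a specific colour \emph{pair} $(c,c')$ for $(i,i')$, with $c,c'$ generally distinct, and the paper builds one constant-size gadget per such forbidden pair rather than one gadget per hole.

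Two smaller points. First, wiring every pigeon gadget to a single shared palette $k$-clique would give the palette vertices degree $\Theta(n)$, violating bounded degree; the paper faces the same issue with its pre-coloured reference vertices and resolves it with a chain of overlapping $k$-cliques so that each reference vertex serves only one gadget. Second, your substitution runs the wrong direction for the lower bound: to turn a low-degree \emph{colouring} refutation into a low-degree FPHP refutation you must substitute colouring variables by polynomials in FPHP variables and then show that the substituted colouring axioms follow from the FPHP axioms, not map FPHP variables into colouring polynomials. The paper does precisely this, renaming pigeon-vertex variables $x_{i,c}$ to $p_{i,j}$ and replacing each auxiliary-vertex variable by a degree-$2$ polynomial in the pigeon variables determined by a fixed family of legal gadget colourings.
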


Our degree lower bound also applies to a slightly different encoding
with primitive $\numcolours$th roots of unity used
in~\cite{DLMM08Hilbert,DLMM11ComputingInfeasibility} 
to build
\kcolouring algorithms based on Hilbert's Nullstellensatz. These
algorithms construct certificates of non-$\numcolours$-colourability by 
solving linear systems of equations over the coefficients of all
monomials up to a certain degree. 

Just as the algorithms
in~\cite{DLMM08Hilbert,DLMM11ComputingInfeasibility}, our lower bound
does not work for all fields (the field must have an extension
field in which there is a primitive $\numcolours$th root of unity).
For simplicity, we state below a concrete result for Nullstellensatz
certificates over~$\gf{2}$ for non-\xcolourability{3}, which is
one of the main cases considered
in~\cite{DLMM08Hilbert,DLMM11ComputingInfeasibility}.

\begin{corollary}
  \label{cor:main-cor}
  There are explicit families of non-\xcolourable{3} graphs such that
  the algorithms based on Hilbert's Nullstellensatz
  over~$\gf{2}$
  in~\cite{DLMM08Hilbert,DLMM11ComputingInfeasibility} need to find
  certificates of linear degree, and hence must solve systems of
  linear equations of exponential size, in order to certify
  non-\xcolourability{3}.
\end{corollary}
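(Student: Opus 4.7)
The plan is to derive the corollary by observing that any Nullstellensatz certificate is a polynomial calculus refutation of the same degree, and then showing that the roots-of-unity encoding used in \cite{DLMM08Hilbert,DLMM11ComputingInfeasibility} and the canonical $0/1$ encoding of the main theorem are interreducible in constant polynomial calculus degree (for any fixed $\numcolours$, in particular $\numcolours = 3$). Combined with the linear degree lower bound of \refth{th:main-theorem}, this will force any Nullstellensatz certificate in the roots-of-unity encoding to have linear degree as well, which in turn means that the linear system over coefficients of monomials up to that degree has exponential size in the number of vertices.

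First, I would set up the two encodings explicitly. The $0/1$ encoding has indicator variables $x_{v,\colstd}$ for each vertex $v$ and colour $\colstd \in [\numcolours]$, with axioms enforcing that every vertex gets exactly one colour and that adjacent vertices do not share a colour. The roots-of-unity encoding has a single variable $y_v$ per vertex satisfying $y_v^{\numcolours} - 1 = 0$, together with an edge axiom $\sum_{j=0}^{\numcolours-1} y_u^j y_v^{\numcolours - 1 - j} = 0$ for each $\{u,v\} \in E$. Working in an extension field $\E$ that contains a primitive $\numcolours$th root of unity $\runity$ (for $\numcolours=3$ and base field $\gf{2}$, take $\E = \gf{4}$), the natural translation $y_v = \sum_{\colstd=0}^{\numcolours - 1} \runity^{\colstd} x_{v,\colstd}$ and its Fourier inverse $x_{v,\colstd} = \frac{1}{\numcolours} \sum_{j=0}^{\numcolours-1} \runity^{-j\colstd} y_v^j$ give substitutions of degree at most $\numcolours - 1 = O(1)$.

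Second, I would verify that under these substitutions each axiom of one system is derivable in constant polynomial calculus degree from the axioms of the other. The vertex axioms of the $0/1$ system translate into $y_v^{\numcolours} = 1$ after a constant-degree calculation, and conversely $y_v^{\numcolours} - 1$ together with the expressions for the $x_{v,\colstd}$'s yields the uniqueness and existence axioms in constant degree. Similarly, the edge axiom in either formulation reduces to the other in constant degree because both express the same polynomial identity after substitution. This means that a polynomial calculus refutation of degree $\degreestd$ of the roots-of-unity encoding yields a polynomial calculus refutation of degree $\bigoh{\degreestd}$ of the $0/1$ encoding over~$\E$.

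Third, since any Nullstellensatz certificate of degree $\degreestd$ for non-\kcolourability in the De Loera \etal\ encoding is in particular a polynomial calculus refutation of degree $\degreestd$, the translation above turns it into a polynomial calculus refutation over $\E$ of the $0/1$ encoding of degree $\bigoh{\degreestd}$. By \refth{th:main-theorem} (which holds over any field, in particular over $\E$) such a refutation must have degree $\bigomega{n}$ on the hard graph families, so $\degreestd = \bigomega{n}$. The algorithms of \cite{DLMM08Hilbert,DLMM11ComputingInfeasibility} then have to solve a linear system whose variables are the coefficients of all monomials of degree up to $\bigomega{n}$ over the $n$ vertex variables, of size $2^{\bigomega{n}}$, yielding the stated bound. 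The main obstacle is bookkeeping: I must make sure the constant-degree interreduction really is derivable in polynomial calculus (not just a formal substitution), and that passing to the extension field $\E = \gf{4}$ does not affect the lower bound — which is fine since the lower bound of \refth{th:main-theorem} is field-independent and the algorithms themselves work over $\E$.
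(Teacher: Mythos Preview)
Your proposal is correct and follows essentially the same route as the paper: the paper's \refpr{pr:alternative-encoding-reduction} carries out exactly the substitution $y_v \mapsto \sum_j \runity^{j} x_{v,j}$ you describe, shows that the substituted roots-of-unity axioms are derivable from the $0/1$ axioms in degree $\leq 2\numcolours$, and then invokes the field-independent lower bound of \refth{th:main-theorem} over the extension field~$\E$. The only (harmless) difference is that you set up the reduction bidirectionally via the Fourier inverse $x_{v,\colstd} = \tfrac{1}{\numcolours}\sum_j \runity^{-j\colstd} y_v^{j}$; for the corollary only the forward direction (from a roots-of-unity refutation to a $0/1$ refutation) is needed, and since that substitution is \emph{linear} in the $x$-variables the degree is preserved exactly rather than merely up to an $O(\numcolours)$ factor.
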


We remark that \refcor{cor:main-cor} answers an open question raised
in, for example,
\cite{DLMM08Hilbert,DLMO09ExpressingCombinatorial,DLMM11ComputingInfeasibility,LLO16LowDegreeColorability}.

Finally, we want to mention that the graph colouring instances that we
construct turn out to be easy for the proof system
\introduceterm{cutting  planes}~\cite{CCT87ComplexityCP},
which formalizes the integer linear programming algorithm 
\ifthenelse{\boolean{conferenceversion}}
{in~\cite{Chvatal73EdmondPolytopes,Gomory63AlgorithmIntegerSolutions}}
{in~\cite{Gomory63AlgorithmIntegerSolutions,Chvatal73EdmondPolytopes}}
and underlies so-called
\introduceterm{pseudo-Boolean} SAT solvers
such as, for instance, 
\introduceterm{Sat4j}~\cite{LP10Sat4j,Sat4j}.

\begin{proposition}
  \label{pr:cp-refutation}
  The graph colouring instances for the non-\kcolourable graphs in
  \refth{th:main-theorem} have polynomial-size refutations in the cutting
  planes proof system.
\end{proposition}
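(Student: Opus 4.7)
The plan is to exploit the fact that the non-$\numcolours$-colourable graphs in \refth{th:main-theorem} are obtained via a degree-preserving reduction from FPHP instances on bounded-degree bipartite graphs $H = (P \disjointunion Q, E_H)$ with $|P| > |Q|$, combined with the folklore fact that cutting planes (CP) refutes such FPHP instances in polynomial size by a simple counting argument. It therefore suffices to show that a CP refutation of the colouring formula can be obtained by first recovering underlying pigeon-to-hole variables from each gadget of the reduction and then invoking the pigeonhole summation.

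Fix the standard $0/1$ encoding of $\numcolours$-colouring, with variables $x_{v,c}$, at-least-one-colour axioms $\sum_{c \in [\numcolours]} x_{v,c} \geq 1$ for every vertex $v$, and edge axioms $x_{u,c} + x_{v,c} \leq 1$ for every edge $(u,v)$ and colour $c$. For each pigeon gadget (of constant size) in the reduction I would derive, in constantly many CP steps, an ``assignment variable'' $y_{p,q}$ expressible as a small linear combination of the $x_{v,c}$ inside the gadget, together with the FPHP axiom $\sum_{q : (p,q) \in E_H} y_{p,q} \geq 1$. Symmetrically, for each hole gadget I would derive the injectivity constraints $y_{p,q} + y_{p',q} \leq 1$ for every pair of distinct pigeons $p, p'$ adjacent to the corresponding hole $q$. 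Since each gadget has $\bigoh{1}$ vertices and edges, every such implication is a tautology of a constant-size system and admits a constant-size CP derivation; as there are only $\bigoh{\setsize{\vertices{G}}}$ gadgets, this stage contributes linearly many inequalities with small coefficients.

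Then I would apply the textbook CP refutation of FPHP on $H$: sum the pigeon axioms over all $p \in P$ to obtain $\sum_{p,q} y_{p,q} \geq |P|$, sum the hole axioms over all $q \in Q$ to obtain $\sum_{p,q} y_{p,q} \leq |Q|$, and combine to derive $|P| \leq |Q|$, contradicting the choice of $H$. This final stage uses only $\bigoh{\setsize{\vertices{G}}}$ additions with unit coefficients, giving a refutation of size linear in $\setsize{\vertices{G}}$ overall.

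The main obstacle is verifying the gadget-level simulation, i.e., that each gadget in the reduction admits the claimed constant-size CP derivation of its pigeon-to-hole inequalities. Because the gadgets have constant size, however, this is essentially a routine finite check rather than a genuine technical difficulty: any semantic implication from a constant-size set of linear inequalities in $\set{0,1}$ variables is derivable in constant size by cutting planes, for instance by brute-force case analysis over the $\bigoh{1}$ feasible partial assignments to the gadget.
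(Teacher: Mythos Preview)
Your high-level plan---extract the FPHP inequalities from the gadgets and then apply the standard CP counting refutation of the pigeonhole principle---matches the paper's approach. However, there is a genuine gap in the gadget-level step.

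The injectivity gadget $\diffcolgadgetstd$ in the paper only rules out the pair of colours $(\colstd,\colstd')$ for the pigeon vertices $(\pigeonindex,\pigeonindex')$ \emph{conditionally on the two ``pre-coloured'' vertices $w,w'$ actually receiving colours $\colstd$ and $\colstd'$}. In the final graph $\graphg$ these vertices are \emph{not} pre-coloured: they are identified with vertices $\clrvertex{\clrvindex}$ in the long pre-colouring chain of \reffig{fig:gadgetseq}, and their colours are determined only globally by how the initial $\numcolours$-clique $\clrvertex{1},\ldots,\clrvertex{\numcolours}$ is coloured. Consequently, the inequality $\varx_{\pigeonindex,\colstd}+\varx_{\pigeonindex',\colstd'}\leq 1$ is \emph{not} a semantic consequence of the constant-size set of colouring axioms local to the gadget, and your brute-force argument over the $\bigoh{1}$ feasible gadget assignments does not establish it. (Concretely, if $w$ receives a colour other than $\colstd$, the gadget admits legal colourings with $\vcolour{\pigeonindex}=\colstd$ and $\vcolour{\pigeonindex'}=\colstd'$.)

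The paper closes this gap by first branching on all $\numcolours^{\numcolours}$ colourings of the initial clique $\clrvertex{1},\ldots,\clrvertex{\numcolours}$ (\refpr{pr:refutation-brute-force}). Non-permutation colourings violate an edge axiom immediately; for each permutation, the colours propagate deterministically along the entire pre-colouring chain, after which every $w,w'$ has a known colour and the injectivity inequality \emph{does} follow locally from the gadget axioms plus the fixed values $\varx_{w,\colstd}=1$, $\varx_{w',\colstd'}=1$. Only then does the brute-force constant-size CP derivation apply. You need to add this initial branching step (at multiplicative cost $\numcolours^{\bigoh{\numcolours}}$, still polynomial for fixed~$\numcolours$) before your gadget-level extractions become sound.
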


\subsection{Techniques}

Perhaps somewhat surprisingly, no heavy-duty machinery is required to
establish \refth{th:main-theorem}. Instead, all that is needed is a
nifty
reduction.
Our starting point is the so-called 
\introduceterm{functional pigeonhole principle (FPHP) formula}  
restricted to a bipartite graph of bounded left degree~$\numcolours$.
This formula expresses the claim that a set of pigeons 
$\pigeonindex \in \pigeonset$
can be mapped to a set of pigeonholes 
$\holeindex \in \holeset$
in a one-to-one fashion,
where in addition the pigeons are constrained so that every pigeon can
choose not between all available holes but only between a set of
$\numcolours$~holes 
as specified by the bipartite graph.
Clearly, FPHP formulas are unsatisfiable 
when  $\setsize{\pigeonset} > \setsize{\holeset}$.

Any instance of a graph FPHP formula
can be viewed as a constraint satisfaction
problem by ordering the available holes for every pigeon in some
arbitrary but fixed way, and then keeping track of where each pigeon
is mapped by recording the ordinal number of its chosen pigeonhole.
If the
$\colstd$th hole for pigeon~$\pigeonindex$ and the
$\colstd'$th hole for pigeon~$\pigeonindex'$ is one and the same
hole~$\holeindex$, 
then
pigeons~$\pigeonindex$ and~$\pigeonindex'$ cannot be allowed to make
choices
$\colstd$ and~$\colstd'$ simultaneously.
If we view this constraint as an edge in graph with the
pigeons~$\pigeonset$ as 
vertices, this is already close to a graph colouring instance, except
that what is forbidden for the neighbours
$\pigeonindex$ and~$\pigeonindex'$
is not the same colour~$\colstd$
but some arbitrary pair of possibly distinct colours
$(\colstd, \colstd')$.
However, the idea outlined above can be turned into a proper reduction
from graph FPHP formulas to \kcolouring instances by using
appropriately constructed gadgets of constant size.

We then combine this reduction with the recent polynomial calculus degree
lower bound in~\cite{MN15GeneralizedMethodDegree},
which works as long as the underlying bipartite graph is a 
\introduceterm{boundary expander}
(a.k.a.~\introduceterm{unique-neighbour expander}).
More precisely, we show that the reduction from FPHP to graph
\kcolouring sketched above can be computed in polynomial calculus in
low degree.  Therefore, any low-degree polynomial calculus refutations
of the graph \kcolouring instances could be used to obtain low-degree
refutations of FPHP instances, but \cite{MN15GeneralizedMethodDegree}
tells us that FPHP instances over expander graphs require linear degree.

In order to obtain \refcor{cor:main-cor},
we assume that we have a low-degree Nullstellensatz certificate 
(or, more generally, a polynomial calculus proof) of
non-colourability for the roots-of-unity encoding 
in~\cite{DLMM08Hilbert,DLMM11ComputingInfeasibility}.
Then it is not hard to show that if the field we are working in
contains a primitive $\numcolours$th root of unity,  
we can apply a linear variable substitution to obtain a
polynomial calculus refutation in essentially the same degree of the
colouring instance in the encoding with 
$\set{0,1}$\nobreakdash-valued variables.
The corollary now follows from \refth{th:main-theorem}. 

As should be clear from the discussion above, the hardness of our
graph colouring instances ultimately derives from the pigeonhole
principle. However, this combinatorial principle is well-known to be
easy for cutting planes. 
We establish
\refpr{pr:cp-refutation}
by showing that cutting planes can unpack the reduction described
above to recover the original pigeonhole principle instance, after which this
instance can be efficiently refuted.

\subsection{Outline of This Paper}

The rest of this paper is organized as follows.  We start by
presenting some proof complexity preliminaries and discussing how to
encode the graph colouring problem in \refsec{sec:preliminaries}.
In
\refsec{sec:pc-lower-bound}
we describe our graph \kcolouring instances and prove that they are
hard for polynomial calculus, and in
\refsec{sec:cp-refutation}
we show that the same instances are easy for cutting planes.
We conclude in 
\refsec{sec:conclusion}
by discussing some directions for future research.
\ifthenelse{\boolean{conferenceversion}}
{We refer to the upcoming full-length version for all missing proofs.}
{}

\section{Preliminaries}
\label{sec:preliminaries}

Throughout this paper $\varx_{1}, \ldots,\varx_{n}$ denote
\mbox{$\set{0,1}$-valued} variables, where we think of $1$~as true and
$0$~as false.  We write $\N = \set{0,1,2,\ldots}$ for the natural
numbers and denote $\Nplus = \N \setminus \set{0}$.  For
$n \in \Nplus$ we use the standard notation
$[n] = \set{1,2,\ldots,n}$.  For a set~$E$, we use the shorthand
$e \neq e' \in E$ to index over pairs of distinct elements
$e, e' \in E$, $e \neq e'$.

\subsection{Proof Complexity}

\introduceterm{Polynomial calculus (PC)}~\cite{CEI96Groebner} 
is a proof system based on algebraic reasoning where one expresses constraints
over Boolean variables as polynomial equations and applies algebraic
manipulations to deduce new equations.
The constraints are over \mbox{$\set{0,1}$-valued} variables
$\varx_{1}, \ldots,\varx_{n}$, and each constraint 
is encoded as a polynomial $\polyq$ in
the ring $\F[\varx_{1},\ldots,\varx_{n}]$, where $\F$ is some fixed field.
The intended meaning is that $\polyq=0$ if and only if the constraint
is satisfied, but we omit ``$=0$'' below and only write the
polynomial~$\polyq$.  
A \introduceterm{\PC derivation} of a polynomial~$\polyr$ from a set of
polynomials 
$\polysets = \{\polyq_1,\ldots,\polyq_m\}$ 
is a sequence 
$(\polyp_{1}, \ldots, \polyp_\stoptime)$ 
such that $\polyp_{\stoptime}=\polyr$ and for $1 \leq \timet \leq
\stoptime$ the polynomial 
$\polyp_{\timet}$ is obtained by one of the following derivation rules:
\begin{itemize}       \itemsep=0pt
\item \textbf{Boolean axiom:} 
  $\polyp_{\timet}$ is $\varx^{2}-\varx$ 
  for some variable $\varx$;
\item \textbf{Initial axiom:} $\polyp_{\timet}$ is one of the
  polynomials $ \polyq_{j} \in \polysets$; 
\item 
  \textbf{Linear combination:} 
  $\polyp_\timet = \alpha \polyp_{i} + \beta \polyp_{j} $ for
  $ 1 \leq i,j < \timet $ and some $\alpha, \beta \in \mathbb{F}$;
\item \textbf{Multiplication:} $ \polyp_{\timet} = \varx \polyp_{i}$ 
  for $ 1 \leq i <\timet $ and some  variable $\varx$.
\end{itemize}
A
\introduceterm{PC refutation}
of~$\polysets$
is a derivation of the multiplicative identity~$1$ of~$\F$ from~$\polysets$.
Note that the Boolean axioms make sure that variables can only take
values $0$ and~$1$. For this reason, we can assume without loss of
generality that all polynomials appearing in PC derivations are
multilinear. 

The 
\introduceterm{size} of a polynomial~$\polyp$
is the number of distinct monomials in it when it is expanded out as a
linear combination of monomials,%
\footnote{Just to make terminology precise, in this paper
  a \introduceterm{monomial} is a product of variables,
  a \introduceterm{term} is a monomial multiplied by a non-zero
  coefficient from the field~$\F$, and a \introduceterm{polynomial} is
  always considered as a linear combinations of terms over
  pairwise distinct monomials.}
and the 
\introduceterm{degree} of~$\polyp$ is the largest (total) degree of
any monomial in~$\polyp$.
The size of a PC derivation~$\proofstd$ is the sum of the sizes of
all polynomials in~$\proofstd$,
and the degree is the maximal degree of any 
polynomial in~$\proofstd$.
One can also define the 
\introduceterm{length} of a PC derivation as the number of derivation
steps in it, but this not so interesting a measure since it may fail
to take account of polynomials of exponential size.%
\footnote{Indeed, if multiplication is defined to multilinearize
  polynomials automatically, as in, e.g.,~\cite{AR03LowerBounds}, then
  any unsatisfiable CNF formula encoded into polynomials in the
  natural way can be refuted in linear
  length---see~\cite{MN15GeneralizedMethodDegree} for details.}
A fundamental fact about PC is that the size and degree measures are
tightly related as stated next.

\begin{theorem}[\cite{IPS99LowerBounds}]
  \label{th:ips}
  For any set $\polysets$ of inconsistent polynomials of degree at
  most $\degreestd'$ over $n$ variables
  it holds that 
  if the minimum degree of any \PC
  refutation for $\polyset{S}$ is at least $\degreestd$, then 
  any \PC refutation of~$\polyset{S}$
  has size
  $\exp \bigl( \Bigomega{(\degreestd-\degreestd')^{2} / n } \bigr)$.
\end{theorem}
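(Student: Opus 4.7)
The plan is to prove the equivalent contrapositive: every \PC refutation of $\polysets$ of size $S$ can be converted into one of degree at most $\degreestd' + \bigoh{\sqrt{n \log S}}$. The engine of the argument is a greedy restriction procedure that kills all high-degree monomials of a hypothetical short refutation, followed by a case-analysis lifting step that reassembles the restricted refutations into a low-degree refutation of the original system.

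Fix a \PC refutation $\proofpi$ of $\polysets$ of size $S$ and a degree threshold~$D$ to be optimized below. Let $\Lambda$ denote the set of distinct monomials of degree strictly greater than $D$ occurring in $\proofpi$; then $\setsize{\Lambda} \leq S$. Every monomial in $\Lambda$ involves more than $D$ variables, so by double counting some variable~$\varx$ lies in at least $(D+1)\setsize{\Lambda}/n$ of them, and fixing $\varx := 0$ annihilates all such monomials. Iterating this greedy choice for $k$ steps produces a partial assignment $\restr$ of size $k$ under which the number of surviving large monomials is at most $S \exp(-k(D+1)/n)$; choosing $k = \Ceiling{n\ln(2S)/(D+1)}$ drives this count below~$1$, so $\proofpi|_\restr$ is a derivation of $1$ from $\polysets|_\restr$ in which every line has degree at most~$D$.

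To lift $\proofpi|_\restr$ back to the unrestricted system, introduce for each completion $\restr$ of the $k$ restricted variables the indicator polynomial
\begin{equation*}
  \phi_\restr \;=\; \prod_{\varx_i \in \domainof{\restr}}
  \bigl(\restr(\varx_i)\varx_i + (1 - \restr(\varx_i))(1 - \varx_i)\bigr)
  \eqcomma
\end{equation*}
which has degree $k$ and satisfies $\sum_\restr \phi_\restr = 1$ identically. A straightforward induction on the lines of $\proofpi|_\restr$ then shows that each such line~$\pcpolyp$ admits a \PC derivation of $\phi_\restr \cdot \pcpolyp$ from $\polysets$: initial axioms are handled using the congruence $\phi_\restr \cdot \polyq|_\restr \equiv \phi_\restr \cdot \polyq$ modulo the Boolean axioms, while linear combinations and multiplications by unrestricted variables simply carry the factor $\phi_\restr$ through unchanged. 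The degree incurred is at most $\max(D,\degreestd') + k$ throughout. Summing the resulting derivations of $\phi_\restr$ over all $2^k$ completions then yields $1$ identically, hence a refutation of $\polysets$ of degree $\max(D,\degreestd') + k$. Optimizing $D \approx \sqrt{n\log S}$ balances $D$ against $k \sim n\log S/D$ and gives the claimed degree bound $\degreestd' + \bigoh{\sqrt{n\log S}}$, which is equivalent to the stated size lower bound.

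The main technical obstacle is the lifting step. One has to verify carefully that the inductive translation of $\proofpi|_\restr$ into a derivation of $\phi_\restr$ from $\polysets$ respects every \PC rule and truly incurs only an additive overhead of $k$ (from carrying $\phi_\restr$ through) and $\degreestd'$ (from the degrees of the original axioms). Once this is in place, the double-counting analysis of the greedy restriction and the final parameter optimization are routine.
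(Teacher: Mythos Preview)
The paper does not prove this theorem; it is quoted from \cite{IPS99LowerBounds} and used as a black box, so there is no in-paper proof to compare against. Your overall plan (prove the contrapositive via restriction plus case-analysis lifting) is indeed the shape of the Impagliazzo--Pudl\'{a}k--Sgall argument, but your execution has a genuine gap.

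The greedy step fixes $k$ variables to~$0$ and guarantees only that $\restrict{\proofpi}{\restr}$ has degree at most~$D$ for \emph{that one} all-zero restriction. Your lifting step, however, ranges over all $2^{k}$ assignments to those same $k$ variables and needs, for each such~$\restr$, that the derivation of $\phi_{\restr}\cdot 1$ stay within degree $\max(D,\degreestd')+k$; this is precisely $k+\deg\bigl(\restrict{\proofpi}{\restr}\bigr)$, so you need $\deg\bigl(\restrict{\proofpi}{\restr}\bigr)\le D$ for \emph{every} branch. But setting a chosen variable to~$1$ only lowers the degree of a fat monomial by one rather than annihilating it: a monomial of degree $D+10$ that meets the chosen set in a single variable is still fat under the all-ones branch. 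Concretely, with $n=4$, $D=1$, and a single fat monomial $x_{1}x_{2}x_{3}x_{4}$, your greedy picks one variable, say~$x_{1}$; on the branch $x_{1}=1$ the monomial survives with degree~$3$, so that branch contributes degree $1+3=4$ to the lifted proof, not~$2$.

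The actual argument repairs this by branching \emph{adaptively}, one variable at a time, and invoking a two-branch combination lemma (the PC analogue of the Ben-Sasson--Wigderson width lemma): if $\polysets|_{x=0}$ has a refutation of degree at most $w-1$ and $\polysets|_{x=1}$ one of degree at most~$w$, then $\polysets$ has one of degree at most $\max(w,\degreestd'+1)$. One then runs a double induction on the pair (number of fat monomials, number of variables): on the $x=0$ branch the fat-monomial count drops by a factor $1-\Theta(D/n)$, while on the $x=1$ branch the variable count drops by one. The asymmetry of the combination lemma means only the $0$-branches contribute additive~$+1$'s to the degree, so the total degree is $\max(D,\degreestd')+O(n\ln S/D)$, which optimizes to $\degreestd'+O(\sqrt{n\ln S})$ at $D\approx\sqrt{n\ln S}$. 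Your ``sum over all $2^{k}$ indicators at once'' shortcut is exactly what fails; the inductive, one-variable-at-a-time combination is where the work is.
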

In particular,
if the polynomials in 
$\polysets$
have constant degree
but require refutations of 
degree linear in the number of variables~$n$, then
any refutation must have exponential size.

We remark that there is also a slightly more general version of this
proof system known as
\introduceterm{polynomial calculus (with) resolution
  (\PCR)}~\cite{ABRW02SpaceComplexity}.
The difference is that PCR has separate formal
variables~$\varx$ and~$\dvarx$ to represent both positive and negative
literals when translating CNF formulas into sets of polynomials,  as
well as \introduceterm{complementarity axioms}
$\varx + \dvarx -  1$ 
to ensure that  
$\varx$ and~$\dvarx$ 
take opposite values. This yields a nicer and
more well-behaved proof system. The change from PC to PCR does not
affect the degree needed to refute an inconsistent set of polynomial
equations, however, and
\refth{th:ips}
holds also for PCR. Therefore, the lower bounds we show in this paper
apply both to PC and PCR.
The presence of Boolean axioms allows to derive
$\prod_{i}x_{i}^{\ell_{i}} - \prod_{i}x_{i}$ for $\ell_{i} \geq 1$ in
degree $\sum_{i}\ell_{i}$ and polynomial size. Therefore whenever we
need to derive some polynomial it is sufficient to derive its
multilinear version.

Another aspect worth noticing is that it makes perfect sense to define
polynomial calculus also for sets of polynomial equations that do not
include Boolean axioms
$\varx^{2}-\varx$. 
One variant studied in the literature is to 
add
include 
axioms
$\varx^{\numcolours} - 1$ instead, \ie to insist that
the value of~$\varx$ should be a \mbox{$\numcolours$th root} of unity.
In such a setting it is no longer necessarily true that large degree
implies large size, however.

In this paper we will also consider
\introduceterm{cutting planes (\CP)}~\cite{CCT87ComplexityCP},
which is a proof system based on manipulation of  inequalities
$
  \sum_{i} \lincoeff[i]\varx_{i}  \geq \linconst
$
where $\lincoeff[i]$ and $\linconst$ are integers
and
$\varx_{1}, \ldots,\varx_{n}$
are
$\set{0,1}$-valued variables.
A \introduceterm{\CP derivation} of an inequality $\linaux$ from a set
of
inequalities~%
$\linset{S}=\{\lin[1],\ldots,\lin[m]\}$ 
is a sequence 
$(\linaux[1], \ldots, \linaux[\stoptime])$ 
such that $\linaux[\stoptime]=\linaux$ and for $1 \leq \timet \leq
\stoptime$ the inequality
$\linaux[\timet]$ is obtained by one of the
following 
derivation
rules:
\begin{itemize} \itemsep=0pt
\item \textbf{Variable axiom:} 
  $\linaux[\timet]$ is either 
  $\varx \geq 0$ or 
  $-\varx \geq -1$ 
  for some variable $\varx$.
\item \textbf{Initial axiom:}  %
  $\linaux[\timet]$ is some $\lin[j] \in \linset{S}$;
\item 
  \textbf{Sum:} 
  $\linaux[\timet] = \linaux[i] + \linaux[j] $ for
  $ 1 \leq i,j < \timet $.

\item 
  \textbf{Scalar multiplication:} 
  $\linaux[\timet] = \lindiv \linaux[i] $ for
  $ 1 \leq i < \timet $ and $\lindiv \in \mathbb{N}$;

\item
  \textbf{Division:}
  The inequality $\linaux[\timet]$ is
  \begin{equation}
    \sum_{i} \frac{\lincoeff[i]}{\lindiv}\varx_{i}  \geq 
    \CEILING{  \frac{\linconst}{\lindiv} }
\end{equation}
where $\lindiv$ divides all $\lincoeff[1], \ldots, \lincoeff[n]$ and
$\sum_{i} \lincoeff[i]\varx_{i} \geq \linconst$ 
is some inequality
$\linaux[i]$ for $1 \leq i<\timet$.
\end{itemize}
A \CP refutation of $\linset{S}=\{\lin[1],\ldots,\lin[m]\}$ is
a derivation from $\linset{S}$ of the inequality
$ 0 \geq 1$.
In what follows, we will 
often write
$  \sum_{i} \lincoeff[i]\varx_{i}  \leq \linconst $ 
as an alias for
\mbox{$  \sum_{i} - \lincoeff[i]\varx_{i}  \geq -\linconst $,}
and we will also use
$  \sum_{i} \lincoeff[i]\varx_{i}  = \linconst $ 
as a shorthand for the two inequalities
$  \sum_{i} \lincoeff[i]\varx_{i}  \leq \linconst $ 
and
$  \sum_{i} \lincoeff[i]\varx_{i}  \geq \linconst $.

The 
\introduceterm{length} of a CP derivation is the number of derivation
steps in it.
The 
\introduceterm{size} of a linear inequality
$\sum_{i} \lincoeff[i]\varx_{i} \geq \linconst$ 
is the number of variables plus the bit size of representations of the
constant term $\linconst$ and all coefficients~$\lincoeff[i]$, and the
size of a \CP derivation~$\proofstd$ is the sum of the sizes of
all inequalities in~$\proofstd$. We do not know of any degree-like
measure for \CP that would yield relation as that between size and
degree for \PC in \refth{th:ips}. One usually does not distinguish too
carefully between length and size for \CP since
by~\cite{BC96CuttingPlanes} all coefficients in a \CP refutation can
be assumed to have at most exponential size, and are hence
representable with a linear number of bits.

For a partial mapping 
$\funcdescr{\partassign}{\partdomain}{\partrange}$ 
from a domain $\partdomain$ to a range~$\partrange$
we let
$\domainof{\partassign}$ denote  the set of element with an image. 
For 
$\partelem \in \partdomain \setminus \domainof{\partassign}$ 
we write
$\partassign(\partelem)=\partundefined$.
Given a partial assignment or  \introduceterm{restriction}~$\partassign$ 
of variables
$\varx_1, \ldots, \varx_n$
to values in~$\set{0,1}$
and
a polynomial~$\polyp$
or a  linear inequality $\lin$, we denote by
$\restrict{\polyp}{\partassign}$
and
$\restrict{\lin}{\partassign}$
the polynomial and linear inequality obtained from 
$\polyp$ and~$\lin$ by restricting the
variables in the domain of $\partassign$ to the corresponding values
and making obvious syntactic simplifications.
Given a derivation~$\proofstd$ in \PC or \CP, we denote by
$\restrict{\proofstd}{\partassign}$ 
the sequence of restricted polynomials or linear inequalities,
respectively. 
It is straightforward to verify that if $\proofstd$ is a \CP
derivation of an inequality $\lin$ from~$\linset{S}$, then
$\restrict{\proofstd}{\partassign}$ can be viewed (after simple
syntactic manipulations) as a derivation of
$\restrict{\lin}{\partassign}$ from
$\restrict{\linset{S}}{\partassign}$ of at most the same length,
and the same holds for \PC with respect to size and degree.

\subsection{The Graph Colouring Problem}

A \introduceterm{legal $\numcolours$-colouring}
of an undirected graph 
$\graphg = (V, E)$ 
with vertices
$V(\graphg) = V$
and edges
$E(\graphg) = E$
is
a mapping $\funcdescr{\colouringstd}{V}{[k]}$ such that for every edge
$(u,v) \in E$ it holds that 
$\colouringstd(u) \neq \colouringstd(v)$.
The \introduceterm{chromatic number}~$\chomatic(\graphstd)$
of~$\graphstd$ is the smallest $\numcolours$ such that
a legal $\numcolours$-colouring of~$\graphstd$ exists.
In the rest of this paper, colourings will often be assumed to be legal
unless specified otherwise, so we will sometimes omit this prefix when
no misunderstanding can occur. 
Also, it will sometimes be convenient to number the
$\numcolours$~colours 
$0,1, \ldots, \numcolours-1$
instead of
$1, 2, \ldots, \numcolours$,
and we will be fairly relaxed about this issue, implicitly identifying
colours~$0$ and~$\numcolours$ whenever convenient.

Given a graph $\graphstd$ we can encode the 
$\numcolours$-colourability 
problem in a natural way as a system of
polynomial equations over Boolean variables
\begin{subequations}
  \label{eq:colouring}
  \begin{align}
    \label{eq:colouring_defined}
      \sum^{\numcolours}_{j=1} \varx_{v,j} &= 1 
    &&
       \text{$v\in V(\graphstd)$,}
    \\
    \label{eq:colouring_unique}
      \varx_{v,j} \varx_{v,j'} &=  0
    &&
       \text{$v\in V(\graphstd)$, $j \neq j' \in [k]$,}
    \\
    \label{eq:colouring_edges}
      \varx_{u,j}\varx_{v,j} &= 0
    &&
       \text{$(u,v) \in E(\graphstd)$, $j \in[\numcolours]$,}
  \end{align}
\end{subequations}
with the intended meaning that
$\varx_{v,j} = 1 $
if vertex~$v$ has colour
$\colouringstd(v) = j$. 
It is clear that this system of equations has a solution if and only
if the graph~$G$ is \mbox{$\numcolours$-colourable.}

We will also be interested in an alternative algebraic representation
of the $\numcolours$-colouring problem appearing, e.g., in
\ifthenelse{\boolean{conferenceversion}}
{\cite{DLMM08Hilbert,DLMM11ComputingInfeasibility,DLMO09ExpressingCombinatorial}.} 
{\cite{DLMM08Hilbert,DLMO09ExpressingCombinatorial,DLMM11ComputingInfeasibility}.} 
In this encoding every vertex~$v\in V$ has a single associated
variable~$\vary_{v}$ which takes values in
$\set{1,\runity,\runity^{2},\ldots,\runity^{k-1}}$, 
where $\runity$ is a primitive \mbox{$\numcolours$th root} of unity.
The intended meaning is that
$\vary_{v} = \runity^{j}$
if vertex~$v$ has colour $j \in \set{0,1, \ldots, \numcolours-1}$.
The colouring constraints are enforced by the polynomial equations
\begin{subequations}
  \label{eq:colouring_alt}
  \begin{align}
    \label{eq:colouring_alt_defined}
      \vary_{v}^{\numcolours} &= 1
    &&
       \text{$v\in V(\graphstd)$,}
    \\
    \label{eq:colouring_alt_edges}
      \sum^{\numcolours-1}_{j=0}
      {(\vary_{u})}^{j}{(\vary_{v})}^{k-1-j}  
    &= 0 
    &&
       \text{$(u,v)\in E(\graphstd)$,}
  \end{align}
\end{subequations}
where the polynomials live in a polynomial ring over a field
of characteristic that is not a positive number dividing~$\numcolours$.
Clearly, 
Equation~\refeq{eq:colouring_alt_defined}
forces the vertex~$v$ to take some colour.
A moment of thought reveals that Equation~\refeq{eq:colouring_alt_edges}
correctly encodes an edge constraint:
if $\vary_{u}=\runity^{a}$ and $\vary_{v}=\runity^{b}$, then
the sum
evaluates
to $\runity^{b(k-1)}\sum^{\numcolours-1}_{j=0} \runity^{j(a-b)}$, which equals~$0$ when
$ a \neq b$ and $\numcolours \runity^{b(k-1)} \neq 0$~otherwise.
The latter formulation of \kcolouring only makes sense
if the
characteristic of the underlying field $\F$
is either $0$ or a positive integer that does not divide
$\numcolours$.
In this case, we also know that there exists an extension field~$\E$
of~$\F$ that contains a primitive $\numcolours$th root of
unity~$\runity$ \cite[Chapter VI.3]{Lang2005Algebra}. 

A simple but important observation for us is that
the choice of the polynomial encoding
is not too important if we want to study how large
degree is needed in polynomial calculus when proving that some
graph~$\graphg$ is not $\numcolours$-colourable, provided that the
field we are in contains, or can be extended to contain, a
primitive $\numcolours$th~root of unity.

\begin{proposition}
  \label{pr:alternative-encoding-reduction}
  Suppose that 
  \mbox{Equations~\eqref{eq:colouring_alt_defined}--%
    \eqref{eq:colouring_alt_edges}}
  have a polynomial calculus refutation of degree~$\degreestd$ 
  over some field~$\F$ 
  of characteristic that is not a positive number dividing~$k$.
  Then $\F$ can be extended to a field~$\E$ containing a primitive
  $k$th~root of unity~$\runity$, and it holds that
  \mbox{Equations~\eqref{eq:colouring_defined}--%
    \eqref{eq:colouring_edges}}
  have a polynomial calculus refutation over~$\E$
  of degree~$\maxofexpr{2\numcolours,\degreestd}$.
\end{proposition}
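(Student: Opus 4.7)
The plan is to translate the given refutation from the roots-of-unity encoding to the $0/1$-encoding by means of a simple linear substitution
\begin{equation*}
  \vary_v \; \mapsto \; L_v(x) \; := \; \sum_{j=1}^{\numcolours} \runity^{j-1} \varx_{v,j}
\end{equation*}
and then to derive the substituted initial axioms from \eqref{eq:colouring_defined}--\eqref{eq:colouring_edges} in degree at most $2\numcolours$. First I invoke the standard fact \cite[Chapter VI.3]{Lang2005Algebra} that under the given hypothesis on the characteristic of~$\F$ one can pass to an extension~$\E$ containing a primitive $\numcolours$th root of unity~$\runity$, which makes the substitution well-defined over~$\E$.

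Since each $L_v(x)$ is of degree~$1$ in the $\varx$-variables, any polynomial $\polyp$ of degree~$\le \degreestd$ in the $\vary$-variables yields, under the substitution, a polynomial $\polyp(L(x))$ of degree~$\le \degreestd$ in the $\varx$-variables. Next I would simulate the refutation step by step. Linear-combination steps carry over verbatim. A multiplication step $\vary_v \cdot \polyp_i$ becomes $L_v(x)\cdot\polyp_i(L(x)) = \sum_{j=1}^{\numcolours} \runity^{j-1}(\varx_{v,j}\cdot \polyp_i(L(x)))$, which is produced by $\numcolours$ single-variable multiplications followed by one linear combination, preserving the degree in~$\varx$.

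The main step is to derive each substituted initial axiom from the $0/1$-axioms together with the Boolean axioms $\varx_{v,j}^2 - \varx_{v,j}$ in degree $\le 2\numcolours$. Using \eqref{eq:colouring_unique} to kill cross terms $\varx_{v,j}\varx_{v,j'}$ with $j\neq j'$ and the Boolean axioms to collapse $\varx_{v,j}^{\ell}$ to $\varx_{v,j}$, the expansion of $L_v^{\numcolours}$ reduces to $\sum_{j=1}^{\numcolours}\runity^{(j-1)\numcolours}\varx_{v,j} = \sum_{j=1}^{\numcolours}\varx_{v,j}$, which equals~$1$ by~\eqref{eq:colouring_defined}; this gives the derivation of $L_v(x)^{\numcolours}-1$. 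For the edge axiom, the same reductions give $L_u^j L_v^{\numcolours-1-j} \equiv \sum_{a,b}\runity^{(a-1)j+(b-1)(\numcolours-1-j)}\varx_{u,a}\varx_{v,b}$ modulo the axioms; summing over~$j$ and using the identity $\sum_{j=0}^{\numcolours-1}\runity^{j(a-b)} = 0$ whenever $a\neq b$ leaves only the diagonal terms $\numcolours\sum_{a}\runity^{(a-1)(\numcolours-1)}\varx_{u,a}\varx_{v,a}$, which are annihilated by~\eqref{eq:colouring_edges}. Each of these derivations multiplies a degree-$2$ axiom by monomials of degree $\le \numcolours-2$, so all intermediate polynomials stay within degree~$2\numcolours$.

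The only obstacle I expect to be mildly delicate is the bookkeeping needed to verify that the derivations of the substituted axioms really fit within the $2\numcolours$ budget (as opposed to, say, $\numcolours+O(1)$) and that the simulation of multiplication by $L_v$ introduces no hidden degree inflation; but both are routine given that $L_v$ is linear. Composing the degree-$\le 2\numcolours$ derivations of the substituted initial axioms with the translated refutation of degree~$\le \degreestd$ yields a PC refutation of \eqref{eq:colouring_defined}--\eqref{eq:colouring_edges} over~$\E$ of degree $\maxofexpr{2\numcolours,\degreestd}$, as claimed.
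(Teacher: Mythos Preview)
Your approach is essentially the same as the paper's: apply the linear substitution $\vary_v \mapsto L_v(x)$, simulate the refutation step by step, and derive the substituted initial axioms from \eqref{eq:colouring_defined}--\eqref{eq:colouring_edges}. The simulation of linear combination and multiplication is identical to the paper's.

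There is one genuine (though easily repaired) gap in your treatment of the edge axiom. You claim that ``the same reductions'' yield
\[
  L_u^{\,j}\,L_v^{\,\numcolours-1-j}\;\equiv\;\sum_{a,b}\runity^{(a-1)j+(b-1)(\numcolours-1-j)}\,\varx_{u,a}\varx_{v,b}
\]
for all $j$, but this fails at the boundary: for $j=0$ one has $L_u^{\,0}=1$, so the left-hand side reduces only to $\sum_b \runity^{(b-1)(\numcolours-1)}\varx_{v,b}$, a degree-$1$ polynomial with no $\varx_{u,a}$ factor at all (and symmetrically for $j=\numcolours-1$). Consequently, when you sum over $j$ and try to invoke $\sum_{j=0}^{\numcolours-1}\runity^{j(a-b)}=0$, the $j=0$ and $j=\numcolours-1$ summands do not contribute terms indexed by $(a,b)$, and the cancellation of off-diagonal terms does not go through as written.

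The paper handles exactly this issue by first deriving a \emph{padded} version $\sum_{j}L_u^{\,\numcolours+j}L_v^{\,2\numcolours-1-j}$ (all exponents now strictly positive, so every summand reduces to a genuine $\varx_{u,a}\varx_{v,b}$ term), and only afterwards passing back to the unpadded form using the already-derived $L_v^{\numcolours}-1$. This padding is what pushes the degree up to $2\numcolours-1$ and accounts for the $2\numcolours$ in the statement. Your direct route can also be salvaged without padding: multiply the $j=0$ term by $\sum_a \varx_{u,a}$ using axiom~\eqref{eq:colouring_defined} (and symmetrically for $j=\numcolours-1$) to restore the missing $\varx_{u,a}$ factor before summing. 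Either fix is short; you just need to say it.
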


\jncomment{EDITED UP TO THIS POINT on Aug 13, 2017.}
\jncomment{We should fill in whatever details that are missing below
  (if any) to make the ``proof sketch'' into a proper proof.}

\mlcomment{Ok, but not for the conference version I hope.
  \textbf{Jakob:} Sure, but now we are writing the full-length
  version, right? ;-)
}

\begin{proof}
  By the assumption on the characteristic of $\F$, we already argued
  that there exists some extension field~$\E$ of~$\F$ that contains
  a primitive $\numcolours$th root of unity~$\runity$.
  We plan to translate a polynomial calculus refutation~$\proofstd$ of
  \mbox{Equations~\eqref{eq:colouring_alt_defined}--%
    \eqref{eq:colouring_alt_edges}}, into a refutation of
  \mbox{Equations~\eqref{eq:colouring_defined}--%
    \eqref{eq:colouring_edges}}, and the first step of this process is
  to apply the linear substitutions
  \begin{equation}
    \label{eq:substitution_prelims}
    \vary_{v} \mapsto \sum^{\numcolours}_{j=1} \varx_{v,j} \runity^{j}
  \end{equation}
  to all variables in all polynomials in $\proofstd$ to obtain
  a new sequence of polynomials~$\proofstd'$ in
  variables~$\varx_ {v,j}$. These substituted polynomials in
  $\proofstd'$ have coefficients in $\E$ and we will use them to form
  the skeleton of our new refutation. In order to turn $\proofstd'$
  into a refutation of \mbox{Equations~\eqref{eq:colouring_defined}--%
    \eqref{eq:colouring_edges}} we are going to
  \begin{itemize}
    \item show that for every application of a derivation rule in
    $\proofstd$, it is possible to derive the corresponding
    substituted consequence from the substituted premises with no
    increase in the degree;
    \item show that the substituted axioms in $\proofstd'$ can be
    derived from \mbox{Equations~\eqref{eq:colouring_defined}--%
      \eqref{eq:colouring_edges}} in degree $2k$.
  \end{itemize}
  The first item is almost immediate. All applications of the linear
  combination rule in~$\proofstd$ remain valid in~$\proofstd'$, since
  the substitution is a linear operator.
  When $\vary_{u} p$ is derived from $p$ in $\proofstd$ by the means of
  an application of the multiplication rule, in the new refutation we
  need to derive $\sum^{\numcolours}_{j=1} \varx_{v,j} \runity^{j} p'$
  from $p'$, where $p'$ is the substituted version of $p$. To do that
  it is sufficient to derive each $ \varx_{v,j} p'$ from $p'$ and then
  take a linear combination. Notice that the degree of this derivation
  is the same as in $\proofstd$.
   
  It remains to argue is that the substituted versions of the initial
  \mbox{axioms~\eqref{eq:colouring_alt_defined}--%
    \eqref{eq:colouring_alt_edges}} in~$\proofstd$ can be derived
  from the
  \mbox{axioms~\eqref{eq:colouring_defined}--\eqref{eq:colouring_edges}}
  available to the new refutation.
  We first derive the substituted version of the axiom
  $\vary_{v}^{\numcolours} - 1$ (which is just
  axiom~\eqref{eq:colouring_alt_defined}
  with the additive~$1$ moved to the left side), namely
  \begin{equation}
    \label{eq:defined-axiom-substituted_unexpanded}
    {\left(\sum^{\numcolours}_{j=1} \varx_{v,j} \runity^{j} \right)}^{\numcolours} - 1
    \eqcomma
  \end{equation}
  which after expansion becomes
  \begin{equation}
    \label{eq:defined-axiom-substituted}
    \sum^{\numcolours}_{j=1} \varx_{v,j}^{\numcolours} \runity^{j\numcolours} - 1 + \polyq
    \eqperiod
  \end{equation}
  where each monomial in $\polyq$ contains some factor on the
  form~$\varx_{v,j}\varx_{v,j'}$ \mbox{for $j \neq j'$}.
  We use that $\runity^{k}$ is $1$ and we
  rewrite~\refeq{eq:defined-axiom-substituted} as
  \begin{equation}
    \sum^{\numcolours}_{j=1} \sum_{\ell=0}^{\numcolors-2}\varx_{v,j}^{\ell}(\varx_{v,j}^{2} - \varx_{v,j})
    +
    \left( \sum^{\numcolours}_{j=1} \varx_{v,j} - 1  \right) + \polyq
    \eqcomma
  \end{equation}
  The $\polyq$ part is derivable from
  axioms~\eqref{eq:colouring_unique},
  and the rest can be derived from boolean axioms and
  axiom~\refeq{eq:colouring_defined}.

  Now we focus on the substituted
  axiom~\eqref{eq:colouring_alt_edges}, which is
  \begin{equation}
    \label{eq:edge-axiom-substituted_unexpanded}
    \sum^{\numcolours-1}_{j=0} 
    {\left(\sum^{\numcolours}_{a=1} \varx_{u,a} \runity^{a} \right)}^{j}
    {\left(\sum^{\numcolours}_{b=1} \varx_{v,b} \runity^{b} \right)}^{k-1-j}
    \eqperiod
  \end{equation}
  It will be more convenient for us to first derive the polynomial of
  degree $2\numcolours-1$ 
  \begin{equation}
    \label{eq:edge-axiom-substituted_padded}
    \sum^{\numcolours-1}_{j=0} 
    {\left(\sum^{\numcolours}_{a=1} \varx_{u,a} \runity^{a} \right)}^{\numcolours+j}
    {\left(\sum^{\numcolours}_{b=1} \varx_{v,b} \runity^{b} \right)}^{2\numcolours-1-j}
    \eqcomma
  \end{equation}
  and then show that it is equivalent to
  \eqref{eq:edge-axiom-substituted_unexpanded} using the
  equation~\eqref{eq:defined-axiom-substituted_unexpanded}
  proved above.
  After expansion polynomial~\eqref{eq:edge-axiom-substituted_padded} becomes
  \begin{equation}
    \label{eq:variable_substitution_edges_nonmult}
    \sum^{\numcolours-1}_{j=0} 
    \sum^{\numcolours}_{a=1} 
    \sum^{\numcolours}_{b=1} \varx_{u,a}^{\numcolours+j} \varx_{v,b}^{(2\numcolours-1-j)}
    \runity^{(\numcolours+j)a}\runity^{(2\numcolours-1-j)b}
    \quad
    +
    \quad
    \polyq'\eqcomma
  \end{equation}
  where each monomial in $\polyq'$ contains either some factor
  $\varx_{u,a}\varx_{u,a'}$ for $a\neq a'$ or some
  $\varx_{v,b}\varx_{v,b'}$ for $b\neq b'$.
  Since $\numcolours+j$ and $2\numcolours-1+j$ are both greater than
  zero for $j \in [0,\numcolours-1]$, the Boolean axioms can be used
  to prove the equivalence of each
  $\varx_{u,a}^{\numcolours+j} \varx_{v,b}^{(2\numcolours-1-j)}$ with
  $\varx_{u,a}\varx_{v,b}$. This fact, together with the fact that
  $\runity^{\numcolours}=1$, allows us to reduce to the derivation of
  \begin{equation}
    \label{eq:variable_substitution_edges}
    \sum^{\numcolours-1}_{j=0} 
    \sum^{\numcolours}_{a=1} 
    \sum^{\numcolours}_{b=1} \varx_{u,a} \varx_{v,b}
    \runity^{ja}\runity^{(k-1-j)b}
    \quad
    +
    \quad
    \polyq'\eqperiod
  \end{equation}
  All monomial in~$\polyq'$ are
  derivable from axioms~\eqref{eq:colouring_unique}.
  To derive first summand we change the order of the
  summation and 
  split it into two parts, depending on whether
  $a=b$,
  to obtain
  \begin{equation}
    \label{eq:variable_substitution_edges_split}
    \sum^{\numcolours}_{a=1} 
    \varx_{u,a} \varx_{v,a}
    \sum^{\numcolours-1}_{j=0} 
    \runity^{ja}\runity^{(k-1-j)a}
    \quad
    +
    \quad
    \sum^{\numcolours}_{a=1} 
    \sum^{\numcolours}_{b=1,b \neq a}
    \varx_{u,a} \varx_{v,b} 
    \underbrace{%
      \left(
       \runity^{b(k-1)}
      \sum^{\numcolours-1}_{j=0}
      \runity^{j(a-b)}
      \right)
    }_{\text{equal to $0$ for $a \neq b$}}\eqperiod
  \end{equation}
  The axioms~\eqref{eq:colouring_edges} allow to derive the first part of
  \eqref{eq:variable_substitution_edges_split}, while the second part
  is identically zero.
  In conclusion we have shows how to derive substituted axioms
  \eqref{eq:defined-axiom-substituted_unexpanded} and
  \eqref{eq:edge-axiom-substituted_unexpanded} in degree
  $\leq 2\numcolours$ and therefore we have
  concluded the translation of refutation $\proofstd$.
\end{proof}

For later use, we note that we can also
encode the $\numcolours$-colourability problem for a
graph~$\graphg$ as a system of linear inequalities 
\begin{subequations}
  \begin{align}
    \label{eq:colouring_defined_cp}
      \sum^{\numcolours}_{j=1} \varx_{v,j} &\geq 1 
    &&
       \text{$v\in V(\graphstd)$,}
    \\
    \label{eq:colouring_unique_cp}
      \varx_{v,j} + \varx_{v,j'} &\leq 1
    &&
       \text{$ v \in V(\graphstd)$, $j \neq j' \in[\numcolours]$,}
    \\
    \label{eq:colouring_edges_cp}
      \varx_{u,j} + \varx_{v,j} &\leq 1
    &&
       \text{$(u,v)\in E(\graphstd)$, $j \in[\numcolours]$,}
  \end{align}
  in a format amenable to cutting planes reasoning.
\end{subequations}

\section{Worst-Case Lower Bound for Polynomial Calculus}
\label{sec:pc-lower-bound}

We now show how to explicitly construct a family of graphs which are
not \kcolourable but for which polynomial calculus proofs of this fact
(over any field)
require degree linear in the number of vertices in the graphs.
We do this in three steps:
\begin{enumerate}       \itemsep=0pt
\item 
  First, we show how to reduce 
  instances of
  \introduceterm{functional pigeonhole principle (FPHP) formulas} 
  defined over bipartite graphs of bounded degree  to
  graph colouring instances so that there is a one-to-one mapping of
  pigeons to holes if and only if the graph is \kcolourable.
\item
  Then we show that polynomial calculus is able to carry out this
  reduction in constant degree, so that a low-degree PC proof of graph
  non-colourability can be used to obtain a low-degree refutation of
  the corresponding FPHP instance.
\item
  Finally, we appeal to a linear lower bound on degree for refuting
  FPHP instances over bipartite expander graphs
  from~\cite{MN15GeneralizedMethodDegree}. 
\end{enumerate}

Let us start by giving a precise description of our functional
pigeonhole principle instances.
We have a set of pigeons~$\pigeonset$ which want to fly into a set of
holes~$\holeset$, with each pigeon flying into exactly one hole in a
one-to-one fashion. However, the choices of holes for the pigeons are
constrained, so that pigeon~$\pigeonindex$ can fly only to the holes 
in~$\holesforpigeon{\pigeonindex} \subseteq \holeset$,
where we have
$\setsize{\holesforpigeon{\pigeonindex}} = \numcolours$.
If we use variables
$p_{\pigeonindex,\holeindex}$
to denote that 
pigeon~$\pigeonindex$ flies into hole~$\holeindex$, 
we can write the constraints on such a mapping as a set of polynomial
equations 
\begin{subequations}
  \begin{align}
    \label{eq:fphp_defined}
      \sum_{\holeindex \in \holesforpigeon{\pigeonindex}}
      p_{\pigeonindex,\holeindex} 
      &= 1
    &&
       \text{$\pigeonindex \in \pigeonset$,}
    \\
    \label{eq:fphp_unique}
      p_{\pigeonindex,\holeindex}p_{\pigeonindex,\holeindex'} 
      &= 0
    &&
       \text{$\pigeonindex \in I$, 
       $\holeindex \neq \holeindex' \in \holesforpigeon{\pigeonindex}$.}
    \\
    \label{eq:fphp_holes}
      p_{\pigeonindex ,\holeindex}p_{\pigeonindex ',\holeindex} 
           &= 0
    &&
       \text{$\pigeonindex \neq  \pigeonindex ' \in \pigeonset$,
       $\holeindex \in \holesforpigeon{\pigeonindex} \intersection
       \holesforpigeon{\pigeonindex'}$.}
  \end{align}
\end{subequations}
Note that an instance encoded by 
\mbox{Equations~\refeq{eq:fphp_defined}--\refeq{eq:fphp_holes}}
can also be naturally viewed as a bipartite graph~$\bipartstd$ with
left vertex set~$\pigeonset$, right vertex set~$\holeset$, and edges
from each $\pigeonindex \in  \pigeonset$ to all $\holeindex \in
\holesforpigeon{\pigeonindex}$.  In what follows, we will mostly
reason about FPHP instances in terms of their representations as
bipartite graphs.

In the standard setting, we let
$\pigeonset = [n]$
and
$\holeset = [n-1]$
for some~$n \in \N$,
in which case the collection of  
\mbox{constraints~\refeq{eq:fphp_defined}--\refeq{eq:fphp_holes}}
is clearly unsatisfiable.
Nevertheless, it was shown in~\cite{MN15GeneralizedMethodDegree} 
that if the underlying bipartite graph is a so-called
\introduceterm{boundary expander}, 
then any \PC refutation of
\mbox{Equations~\refeq{eq:fphp_defined}--\refeq{eq:fphp_holes}}
requires $\bigomega{n}$ degree and thus, by \refth{th:ips},
exponential size.
The \introduceterm{boundary} of some set $X$ of vertices in a (either
simple or bipartite) graph $\graphstd$ is the set of vertices in
$V(\graphstd)\setminus X$ that have exactly one neighbor in $X$.
Informally, a (bipartite) boundary expander is (bipartite) graph for
which every set of (left) vertices of reasonable size has
large boundary.
For our results we do not need to go into the technical details of
the lower bound for FPHP. It suffices to use the following claim
as a black box.

\mlcomment{I fixed the claim from \cite{MN15GeneralizedMethodDegree},
  in the sense that I don't attribute anymore to you the existence of
  the bipartite family, but only that given the right family the
  result holds. There are two points that must be addressed and
  i leave you to decide how to do that.
  \begin{itemize}
    \item We could say that the paper does not need to use the right
    degree bound;
    \item the paper does not claim a lower bound for FPHP(B) in case
    of boundary expansion with additive error.
  \end{itemize}}

\begin{theorem}[By the proof of \cite{MN15GeneralizedMethodDegree}]
  \label{thm:fphp}
  For any integer $\numcolours \geq 3$, consider a family of bipartite
  graphs $\{\bipartstd_{n}\}_{n\in \N}$ with
  \begin{itemize}
    \item $n$~vertices on the left side, $n-1$ vertices on the right
    side, left degree~$\numcolours$, and right
    degree~$\bigoh{\numcolours}$,
    \item there are universal constants $\alpha,\delta$ so that for any
    $n\in\N$ and any set $I$ of at most $\alpha n$ vertices on the left
    side of $\bipartstd_{n}$, the size of the boundary of $I$ is at
    least $\delta|I|-1$.
  \end{itemize}
  Any polynomial calculus refutation of the
  \mbox{constraints~\refeq{eq:fphp_defined}--\refeq{eq:fphp_holes}}
  corresponding to $\bipartstd_{n}$ requires \mbox{degree
    $\bigomega{n}$.}
\end{theorem}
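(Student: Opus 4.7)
The proof follows the pseudo-reduction framework developed in~\cite{MN15GeneralizedMethodDegree}, specialized to FPHP instances defined over bipartite boundary expanders. My plan is to recover the key structural property this framework needs, namely a strong ``matchability'' condition, and then invoke the framework as a black box.

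The first step is to translate the expansion hypothesis into a Hall-like property: any set $\pigeonset'\subseteq\pigeonset$ of pigeons with $\setsize{\pigeonset'}\leq\alpha n$ admits a partial matching into its neighbourhood in $\bipartstd_n$. If $\setsize{\pigeonset'}\geq 2$ then, since the boundary $\boundary{\pigeonset'}$ has size at least $\delta\setsize{\pigeonset'}-1$, the set $\pigeonset'$ has more than $\setsize{\pigeonset'}$ distinct neighbours, and a standard Hall-style inductive argument (removing one boundary-witnessed vertex at a time) yields the matching. This matchability property is precisely what the framework calls the ``extension'' or ``closure'' condition.

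The second step is to build the pseudo-reduction operator $R$. For each partial matching $\mu$ consistent with $\bipartstd_n$ of size at most $\alpha n /2$, one defines the restriction $\partassign_{\mu}$ setting $p_{\pigeonindex,\mu(\pigeonindex)}=1$ and the other $p_{\pigeonindex,\holeindex}$ on matched rows/columns to~$0$. The operator~$R$ sends a polynomial $\polyp$ to a canonical representative obtained by extending any matching that already ``covers'' the variables of $\polyp$: by the matchability property of step~1, as long as $\Vars{\polyp}$ touches $\leq\alpha n/(2\numcolours)$~pigeons such an extension exists. One then verifies, exactly as in~\cite{MN15GeneralizedMethodDegree}, that $R$ annihilates every polynomial in \refeq{eq:fphp_defined}--\refeq{eq:fphp_holes} but satisfies $R(1)\neq 0$.

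The third, and technically most delicate, step is to show that $R$ is preserved by the PC derivation rules up to bounded degree growth. Linear combination is immediate from linearity of $R$, the Boolean axioms vanish since $p_{\pigeonindex,\holeindex}$ takes values in $\set{0,1}$ under any matching, and the multiplication rule is where expansion is truly used: multiplying by a variable $p_{\pigeonindex,\holeindex}$ enlarges the support by at most one pigeon, so as long as the current support touches fewer than $\alpha n/(2\numcolours)$ pigeons we can still extend the matching. The main obstacle is precisely bookkeeping this: showing that if every polynomial in a putative refutation has degree $d$, then every polynomial touches at most $\bigoh{d}$ pigeons and therefore lives inside the matchable regime as long as $d\leq c n$ for a suitable constant $c=c(\numcolours,\alpha,\delta)$. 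Consequently the final line~$1$ would satisfy $R(1)=0$, contradicting $R(1)\neq 0$, and we conclude that any PC refutation must have degree $\Bigomega{n}$.
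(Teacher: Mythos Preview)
The paper does not prove this theorem at all: it is quoted as a black-box result from~\cite{MN15GeneralizedMethodDegree}, and the only argument the paper supplies is the short paragraph immediately following the statement, which observes that the encoding in \mbox{\refeq{eq:fphp_defined}--\refeq{eq:fphp_holes}} differs from the CNF-based encoding used in~\cite{MN15GeneralizedMethodDegree} and that the two are interderivable in degree~$\bigoh{\numcolours}$ by implicational completeness. So strictly speaking your proposal goes far beyond what the paper does, and for the purposes of matching the paper you could simply cite~\cite{MN15GeneralizedMethodDegree} and add the encoding remark.

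That said, since you chose to sketch the underlying argument, a couple of points deserve flagging. In your step~1 the inference ``$\setsize{\boundary{\pigeonset'}}\geq\delta\setsize{\pigeonset'}-1$ implies $\pigeonset'$ has more than $\setsize{\pigeonset'}$ neighbours'' requires $\delta>1$, which the theorem statement does not assume; and in any case the argument in~\cite{MN15GeneralizedMethodDegree} does not go via Hall's condition but via an iterative closure that repeatedly peels off a vertex with a unique neighbour (for which one only needs the boundary to be non-empty, not large). In your step~3 you correctly identify the multiplication rule as the crux, but the actual content of the~\cite{MN15GeneralizedMethodDegree} argument is showing that the operator is well-defined independently of which matching extension is chosen and that it commutes with multiplication up to the closure; your sketch names the difficulty but does not supply the mechanism. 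None of this matters for the present paper, which is content to invoke the result, but if you intend your write-up to stand on its own you would need to tighten these two points.
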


\jncomment{\textbf{Looking at what happens next, it seems that you also need
    bounded right degree}, and so I added this to the theorem statement.
  We should probably also claim/make clear in running text that
  left-regular
  bipartite boundary expanders with bounded right degree also exist.
  (Which is probably true, but I honestly don't know a reference off
  the top of my head --- maybe we should simply try proof by
  intimidation? ;-) )}
\mlcomment{Yes bounded right degree $O(\numcolours)$ is necessary to
  have bounded 
  degree in the final graph. If it was not stated then it was
  a mistake. I always state it when I explain the result. }
\mlcomment{Isn't true that an $n$ by $n$ $d$-regular random graph is
  an expander with high probability? Or maybe a non bipartite expander
  which is ``doubled'', by taking two copies and adding cross edges
  where the original edges were. 
  Then it is sufficient to remove one vertex on one side and
  reallocate the $d$ edges at random without collision.
  \textbf{I believe the answer is ``yes'' in both cases, but we would
    need a reference for this. -Jakob}
}

To be precise, 
the lower bound in \refth{thm:fphp} was proven for a slightly
different encoding of
\mbox{Equations~\refeq{eq:fphp_defined}--\refeq{eq:fphp_holes}}---%
namely the one obtained from the natural translation of CNF formulas
into polynomial equations---%
but the two encodings imply each other and can be used to derive each
other in degree $\bigoh{\numcolours}$ by the implicational
completeness of polynomial calculus. Hence, the lower bound holds for
both encodings.

\jncomment{Any particular reason why we don't simply use the CNF
  encoding? \\
  This claim is probably also true by intimidation --- at least if
  functionality axioms are used ;-) --- but I didn't check how to do
  the derivations. Did you?
}

\mlcomment{This paper is not about CNFs. If we start mentioning the
  CNF encoding we end up with two different encoding for truth
  and falsity. That would be confusing and not so neat.
  Regarding the translation between the two encoding: each equation to prove
  in one formulation follows by a bunch of equations in the other formulation
  that involve just $O(\numcolors)$ variables. Therefore each bit of the
  translation can be done in degree $O(\numcolors)$ by implicational completeness.}

We proceed to describe the reduction from functional pigeonhole
principle instances to graph colouring instances.
Our starting point is an FPHP instance 
on a bipartite graph~$\bipartstd$
with pigeons $\pigeonset = [n]$ and holes~$\holeset$
where every pigeon has exactly
$d_{\pigeonset}=\numcolours$ holes to choose from and every hole can
take
$\bigoh{\numcolours}$~pigeons; \ie the bipartite graph~$\bipartstd$
is left-regular of degree~$\numcolours$ and has right
degree~$\bigoh{\numcolours}$. 
Based on this instance we construct a graph 
$\graphstd = \graphstd(\bipartstd)$ 
such that $\graphstd$ is \kcolourable if and only if the functional
pigeonhole principle on $B$ is satisfiable.

By way of overview, the graph~$\graphstd(\bipartstd)$ has $n$~special
vertices corresponding to the pigeons, and the colours of these
vertices encode how the pigeons are mapped to holes.
For every pair of pigeons~$i,i'$ that can be mapped to the same
hole~$j$ we add a gadget that forbids the colouring of the pigeon
vertices~$i$ and~$i'$ that corresponds to them being mapped to
hole~$j$.  These gadgets have a couple of pre-coloured vertices, but
we eliminate such pre-colouring by adding one more simple gadget.

In more detail, 
the main idea behind the reduction is to view the choices 
$\holesforpigeon{\pigeonindex}$
for each pigeon
$\pigeonindex \in \pigeonset$
as taking the first, second, \ldots, $\numcolours$th edge.
We fix
an arbitrary enumeration of the elements of~$\holesforpigeon{\pigeonindex}$ 
for each $\pigeonindex \in \pigeonset$, associating distinct numbers
$1, 2, \ldots, \numcolours$
to the edges out of the vertex~$\pigeonindex$ in~$\bipartstd$. 
We say that
\introduceterm{pigeon $\pigeonindex$ flies to hole~$\holeindex$ using
  its $\colstd$th edge}
if the edge connecting pigeon~$\pigeonindex$ to hole~$\holeindex$ is
labelled by $\colstd \in [\numcolours]$, and use the notation
$\flightviaedge{\pigeonindex}{\colstd}$
for this (suppressing the information about the hole~$\holeindex$).
Pigeon~$\pigeonindex$ taking the $\colstd$th edge corresponds to the
special $i$th~pigeon vertex being coloured with colour~$\colstd$.

Consider two distinct pigeons 
$\pigeonindex \neq \pigeonindex' \in \pigeonset$ 
and a hole
$\holeindex \in 
\holesforpigeon{\pigeonindex} \intersection \holesforpigeon{\pigeonindex'}$.
If pigeon $\pigeonindex{}$ flies to hole~$\holeindex$ using its
$\colstd$th~edge and pigeon~$\pigeonindex{}'$ flies to
hole~$\holeindex$ using its $\colstd'$th edge, then 
the translation of the  injectivity constraint~\eqref{eq:fphp_holes}
expressed in terms of \kcolouring{}s is that
vertices $\pigeonindex$ and~$\pigeonindex{}'$ cannot be
simultaneously coloured by colours~$\colstd$ and~$\colstd'$,
respectively.

\begin{figure}[tp]
  \centering
  \begin{subfigure}[b]{0.52\textwidth}
    \centering
    \includegraphics{colouringgadget.11}
  \ifthenelse{\boolean{conferenceversion}}
  {\caption{Forbidding
      $\flightviaedge{\pigeonindex}{\colstd}$
      and
      $\flightviaedge{\pigeonindex'}{\colstd}$.}}
  {\caption{Gadget ruling out
      $\flightviaedge{\pigeonindex}{\colstd}$
      and
      $\flightviaedge{\pigeonindex'}{\colstd}$.}}
    \label{fig:gadgetsame}
  \end{subfigure}%
  \hfill
  \begin{subfigure}[b]{0.42\textwidth}
    \centering
    \includegraphics{colouringgadget.12}
  \ifthenelse{\boolean{conferenceversion}}
  {\caption{Forbidding
      $\flightviaedge{\pigeonindex}{\colstd}$
      and
      $\flightviaedge{\pigeonindex'}{\colstd'}$
      for
      $\colstd \neq \colstd'$.}}
  {\caption{Gadget ruling out
      $\flightviaedge{\pigeonindex}{\colstd}$
      and
      $\flightviaedge{\pigeonindex'}{\colstd'}$
      for
      $\colstd \neq \colstd'$.}}
    \label{fig:gadgetdiff}
  \end{subfigure}
  \caption{Injectivity constraint gadgets 
    $\diffcolgadgetstd$ for $\numcolours=4$.}
  \label{fig:gadget}
\end{figure}

Let us now give a precise description of
the graph gadgets we 
employ
to enforce
such injectivity constraints.
These will be 
partially pre-coloured graphs
$\diffcolgadgetstd$
as depicted in 
\reftwofigs{fig:gadgetsame}{fig:gadgetdiff}. 
The gadget
constructions start with two disjoint
$\numcolours$\nobreakdash-cliques for pigeons~$\pigeonindex$
\ifthenelse{\boolean{conferenceversion}}
{and~$\pigeonindex'$, which we will refer to as the left and right
  cliques, respectively.}
{and~$\pigeonindex'$.  For the 
sake of exposition we call the former the left clique and the latter
the right clique, although the construction fully symmetric.}
We refer to the vertices in the left clique as 
$\lcvertex{1}, \ldots, \lcvertex{\numcolours}$ numbered in
a clockwise fashion starting with the first vertex at the bottom, and
in a symmetric fashion the vertices in the right clique 
are referred to as
$\rcvertex{1}, \ldots, \rcvertex{\numcolours}$ 
numbered anti-clockwise starting at the bottom.

To the first vertex~$\lcvertex{1}$ in the left
\mbox{$\numcolours$-clique} we connect the vertex~$\pigeonindex$. To vertices 
$\lcvertex{2}, \ldots, \lcvertex{\numcolours-1}$ 
we connect a new vertex pre-coloured with colour~$\colstd$. 
For the right \mbox{$\numcolours$-clique} we do a similar construction:
to the first vertex~$\rcvertex{1}$ 
we connect the vertex $\pigeonindex'$ and to the next
$\numcolours-2$ vertices 
$\rcvertex{2}, \ldots, \rcvertex{\numcolours-1}$ 
we connect a new vertex pre-coloured with
colour~$\colstd'$. 

The final step of the construction depends on whether
$\colstd = \colstd'$ or not. If $\colstd = \colstd'$, 
then we add an edge between the final two vertices 
$\lcvertex{\numcolours}$
and~$\rcvertex{\numcolours}$
in the cliques.
\mbox{If $\colstd \neq \colstd'$}, then we instead merge
these two vertices into a single vertex 
as shown in \reffig{fig:gadgetdiff}.
We want to stress that except for $\pigeonindex$ and~$\pigeonindex'$
all vertices in the construction are new vertices that do not occur in
any other gadget. 
Let us collect for the record some properties of this gadget construction.

\begin{claim}
  \label{clm:gadget}
  The pre-coloured graph gadget
  $\diffcolgadgetstd$
  has the following properties:
  \begin{enumerate} \itemsep=0pt
  \item 
    $\diffcolgadgetstd$
    has $\bigoh{\numcolours}$ vertices.
    
  \item 
    $\diffcolgadgetstd$
    has two pre-coloured vertices of degree $\bigoh{\numcolours}$.
    
  \item\label{item:complete} 
    For every 
    $(\colalt,\colalt') \neq (\colstd,\colstd')$ 
    there is
    a legal \kcolouring~$\vcolouring$ of $\diffcolgadgetstd$ extending the
    pre-colouring and 
    satisfying
    $\vcolour{\pigeonindex} = \colalt$ and
    $\vcolour{\pigeonindex'} = \colalt'$.
    No such legal \kcolouring of~$\diffcolgadgetstd$ exists for
    $(\colalt,\colalt')=(\colstd,\colstd')$.

  \end{enumerate}
\end{claim}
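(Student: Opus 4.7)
The plan is to verify the three properties in sequence, with the bulk of the work going into the colouring analysis in item~3. Items~1 and~2 are essentially bookkeeping: the gadget consists of two $\numcolours$-cliques (contributing $2\numcolours$ vertices, or $2\numcolours-1$ if $\lcvertex{\numcolours}$ and $\rcvertex{\numcolours}$ are identified in the $\colstd\ne \colstd'$ case), together with the two pre-coloured vertices, for a total of $O(\numcolours)$ vertices; the external pigeon vertices $\pigeonindex,\pigeonindex'$ are shared with the rest of the construction and not counted. The two pre-coloured vertices are adjacent to exactly $\numcolours-2$ clique vertices each, hence have degree $O(\numcolours)$.

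For the impossibility direction of item~3, I would argue as follows. Suppose $(\colalt,\colalt')=(\colstd,\colstd')$, so that $\vcolour{\pigeonindex}=\colstd$ and $\vcolour{\pigeonindex'}=\colstd'$. In the left clique, $\lcvertex{1}$ is adjacent to $\pigeonindex$ so $\vcolour{\lcvertex{1}}\ne \colstd$, while $\lcvertex{2},\ldots,\lcvertex{\numcolours-1}$ are adjacent to the pre-coloured vertex of colour $\colstd$ and hence avoid $\colstd$ as well. Since the $\numcolours$ clique vertices $\lcvertex{1},\ldots,\lcvertex{\numcolours}$ must realize all $\numcolours$ colours, this forces $\vcolour{\lcvertex{\numcolours}}=\colstd$. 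The symmetric argument on the right forces $\vcolour{\rcvertex{\numcolours}}=\colstd'$. In the $\colstd=\colstd'$ case, $\lcvertex{\numcolours}$ and $\rcvertex{\numcolours}$ are joined by an edge and receive the same colour $\colstd$, a contradiction; in the $\colstd\ne \colstd'$ case, these two vertices are identified so the single vertex would have to equal both $\colstd$ and $\colstd'$ simultaneously, again a contradiction.

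For the existence direction, assume $(\colalt,\colalt')\ne (\colstd,\colstd')$, so without loss of generality $\colalt\ne\colstd$ (the case $\colalt'\ne\colstd'$ being symmetric). I would construct an explicit extension by splitting on whether $\colalt'=\colstd'$ and whether $\colstd=\colstd'$. The guiding idea is that the only obstacle to completing the colouring is the argument above, which forced colour $\colstd$ onto $\lcvertex{\numcolours}$. Since $\colalt\ne\colstd$, we are free to place $\colstd$ on $\lcvertex{1}$ instead, thereby liberating $\lcvertex{\numcolours}$ to take any other colour; we then let $\lcvertex{2},\ldots,\lcvertex{\numcolours-1}$ realize an arbitrary bijection onto the remaining $\numcolours-2$ colours avoiding $\colstd$. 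For the right clique, if $\colalt'\ne\colstd'$ we proceed symmetrically with $\rcvertex{1}=\colstd'$; if $\colalt'=\colstd'$, then the argument above correctly forces $\rcvertex{\numcolours}$ (or the merged vertex) to take colour $\colstd'$, and since $\colstd'\ne \vcolour{\lcvertex{\numcolours}}$ can be arranged (the left clique leaves the colour of $\lcvertex{\numcolours}$ as a free parameter, picked from $\numcolours-1\ge 2$ choices avoiding $\colstd$ and, if needed, $\colstd'$), the merge or edge constraint between $\lcvertex{\numcolours}$ and $\rcvertex{\numcolours}$ is satisfied. Each subcase reduces to choosing a permutation on $\numcolours-2$ free colours, which exists because $\numcolours\ge 3$.

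The main obstacle is a careful case analysis keeping track of three binary choices (whether $\colstd=\colstd'$, whether $\colalt=\colstd$, and whether $\colalt'=\colstd'$), but none of the cases is technically subtle once one observes the key freedom gained by having $\colalt\ne\colstd$ (or symmetrically $\colalt'\ne\colstd'$): it lets colour $\colstd$ be parked on $\lcvertex{1}$, breaking the forcing chain that yields the contradiction in the $(\colstd,\colstd')$ case.
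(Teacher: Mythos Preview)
Your proposal is essentially correct and follows the same approach as the paper: both arguments hinge on the observation that $\vcolour{\pigeonindex}=\colstd$ forces $\vcolour{\lcvertex{\numcolours}}=\colstd$ (and symmetrically on the right), while $\vcolour{\pigeonindex}\neq\colstd$ frees $\lcvertex{\numcolours}$ to take any of the $\numcolours-1$ remaining colours by parking $\colstd$ on~$\lcvertex{1}$.

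There is one slip worth flagging. In the subcase $\colalt\neq\colstd$, $\colalt'=\colstd'$, $\colstd\neq\colstd'$ (merged gadget), you write that ``$\colstd'\neq\vcolour{\lcvertex{\numcolours}}$ can be arranged'' so that ``the merge or edge constraint between $\lcvertex{\numcolours}$ and $\rcvertex{\numcolours}$ is satisfied''. But in the merged gadget $\lcvertex{\numcolours}$ and $\rcvertex{\numcolours}$ are the \emph{same} vertex, so the right-side forcing $\vcolour{\rcvertex{\numcolours}}=\colstd'$ means you must set $\vcolour{\lcvertex{\numcolours}}=\colstd'$, not avoid it. This is still fine---since $\colstd'\neq\colstd$, the colour $\colstd'$ is among the $\numcolours-1$ free choices for $\lcvertex{\numcolours}$---but your phrasing has the constraint backwards. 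The paper handles this by symmetry: colour the merged vertex with~$\colstd'$ first (as forced from the right), then complete the left clique around it.
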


\begin{proof}
  The first two properties obviously hold by construction.
    
  To prove Property~\ref{item:complete}, let us focus on the left
  clique in either of the two variant of the gadget.
  If  $\vcolour{\pigeonindex} = \colstd$, then clearly 
  vertex~$\lcvertex{1}$ in
  the left clique cannot take colour~$\colstd$. Since the pre-coloured
  vertex connected to vertices 
  \mbox{$\lcvertex{2}, \ldots, \lcvertex{\numcolours-1}$}
  of the clique also has colour~$\colstd$, and since any legal
  colouring must use all available colours for the clique, this forces
  $\vcolour{\lcvertex{\numcolours}} = \colstd$. 
  If $\vcolour{\pigeonindex} \neq \colstd$, 
  however, then we can colour vertex~$\lcvertex{1}$ with colour~$\colstd$, 
  and then choose any permutation of the remaining colours for the
  other vertices in the left clique, giving the 
  vertex~$\lcvertex{\numcolours}$
  at least two distinct colours to choose 
  \ifthenelse{\boolean{conferenceversion}}
  {between.}
  {between
  after the other clique vertices 
  \mbox{$\lcvertex{2}, \ldots, \lcvertex{\numcolours-1}$}
  connected to the pre-coloured vertex have been coloured.}

  Consider now the case
  $\colstd = \colstd'$, 
  so that we have the graph gadget
  $\diffcolgadgetsamestd$ in 
  \reffig{fig:gadgetsame}.
  By symmetry, if
  $\vcolour{\pigeonindex'} = \colstd'$, 
  then this forces
  $\vcolour{\rcvertex{\numcolours}} = \colstd$, 
  but there are at least two choices for the colour 
  of~$\rcvertex{\numcolours}$
  if $\vcolour{\pigeonindex'} \neq \colstd'$.
  It follows that if
  $\flightviaedge{\pigeonindex}{\colstd}$
  and~$\flightviaedge{\pigeonindex'}{\colstd}$,
  then 
  vertices~$\lcvertex{\numcolours}$ and~$\rcvertex{\numcolours}$
  both have to get the same
  colour~$\colstd$ to avoid conflicts in the left and right
  $\numcolours$\nobreakdash-cliques, respectively, 
  which causes a conflict 
  along the edge $(\lcvertex{\numcolours}, \rcvertex{\numcolours})$.
  As long as one of $\pigeonindex$ and~$\pigeonindex'$ is assigned a
  colour other than~$\colstd$, however, 
  $\diffcolgadgetsamestd$~can be legally \kcoloured.
  \ifthenelse{\boolean{conferenceversion}}
  {For
    $\colstd \neq \colstd'$
    we reason analogously but use instead the graph gadget
    $\diffcolgadgetstd$ in 
    \reffig{fig:gadgetdiff}.}
  {

  For
  $\colstd \neq \colstd'$
  we instead have the graph gadget
  $\diffcolgadgetstd$ in 
  \reffig{fig:gadgetdiff}.
  Following the same line of reasoning as above, if
  $\flightviaedge{\pigeonindex}{\colstd}$, 
  then the left $\numcolours$\nobreakdash-clique  
  forces~$\lcvertex{\numcolours}$ 
  to take colour~$\colstd$, and if
  $\flightviaedge{\pigeonindex'}{\colstd'}$,
  then the right $\numcolours$\nobreakdash-clique forces
  $\rcvertex{\numcolours} = \lcvertex{\numcolours}$
  to take colour~$\colstd' \neq \colstd$, which is a
  conflict. If
  $\flightviaedge{\pigeonindex}{\colstd}$
  but~$\flightnotviaedge{\pigeonindex'}{\colstd'}$, 
  however, then
  $\lcvertex{\numcolours}$
  can first be coloured with colour~$\colstd$ to satisfy
  the left $\numcolours$\nobreakdash-clique, after which the
  \kcolouring of the right  $\numcolours$\nobreakdash-clique can be
  completed taking 
  $\vcolour{\lcvertex{\numcolours}}$ 
  into consideration,
  and the reasoning is symmetric if
  $\flightviaedge{\pigeonindex'}{\colstd'}$
  but~$\flightnotviaedge{\pigeonindex}{\colstd}$.
  The case when both $\flightnotviaedge{\pigeonindex}{\colstd}$
  and~$\flightnotviaedge{\pigeonindex'}{\colstd'}$ is dealt with in a
  similar way.
  This establishes the claim.}
\end{proof}

We write 
$\alldiffcolgadgets = \alldiffcolgadgets(\bipartstd)$ 
to denote the graph consisting of
the union of all gadgets
$\diffcolgadgetstd$
for all
$\pigeonindex \neq \pigeonindex' \in \pigeonset$
and all
$\colstd,\colstd'$
such that if 
pigeon~$\pigeonindex$ uses  its $\colstd$th~edge and
pigeon~$\pigeonindex'$ uses  its $\colstd'$th~edge in~$\bipartstd$,
then they both end up in the same hole~$\holeindex \in \holeset$. 
All vertices corresponding to  pigeons
$\pigeonindex \in \pigeonset$
are shared between gadgets~$\diffcolgadgetstd$
in~$\alldiffcolgadgets$, but apart from this all subgraphs
$\diffcolgadgetstd$ are vertex-disjoint.  We next state some
properties of~$\alldiffcolgadgets$.

\begin{lemma}
  \label{lem:completion}
  Consider an FPHP instance encoded by
  \mbox{Equations~\refeq{eq:fphp_defined}--\refeq{eq:fphp_holes}}
  for a left-regular bipartite graph with left degree
  $d_{\pigeonset} = \numcolours$
  and bounded right degree
  $d_{\holeset} = \bigoh{\numcolours}$, 
  and let 
  $\alldiffcolgadgets$ 
  be the partially \kcoloured graph
  obtained as described above. 
  Then
  $\alldiffcolgadgets$ 
  has 
  $\Bigoh{\numcolours^{4} \setsize{\pigeonset}}$ 
  vertices and maximal vertex degree~$\bigoh{\numcolours^{2}}$, 
  and the number of pre-coloured vertices is
  $\Bigoh{\numcolours^{2} \setsize{\pigeonset}}$. 
  Furthermore, 
  the partial \kcolouring of $\alldiffcolgadgets$ can be extended to a
  complete, legal \kcolouring 
  of~$\alldiffcolgadgets$
  if and only if there is a way to map each pigeon
  $\pigeonindex \in \pigeonset$ to some hole
  $\holeindex \in \holeset$
  without violating any constraint 
  in~\mbox{\refeq{eq:fphp_defined}--\refeq{eq:fphp_holes}}.
\end{lemma}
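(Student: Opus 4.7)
The plan is to establish the quantitative bounds by a straightforward counting argument, and then to handle the biconditional characterisation of colourability by leveraging \refclaim{clm:gadget} together with the observation that distinct gadgets interact only through the shared pigeon vertices.

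For the counting, I would first note that each hole $\holeindex \in \holeset$ is reached by at most $d_{\holeset} = \bigoh{\numcolours}$ pigeons, so the number of unordered pairs of pigeons that compete for a given hole is $\bigoh{\numcolours^{2}}$. Summing over at most $\setsize{\pigeonset}$ holes gives $\Bigoh{\numcolours^{2}\setsize{\pigeonset}}$ gadgets in total. By \refclaim{clm:gadget} each gadget contributes $\bigoh{\numcolours}$ new vertices (two cliques of size $\numcolours$ plus two pre\nobreakdash-coloured vertices) and exactly two pre\nobreakdash-coloured vertices, from which the stated bounds on $\setsize{\Vertices{\alldiffcolgadgets}}$ and on the number of pre\nobreakdash-coloured vertices follow (loosely; the actual bound is even a little better than claimed). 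For the vertex degree, the non\nobreakdash-pigeon vertices sit inside a single gadget and have degree $\bigoh{\numcolours}$ there. A pigeon vertex~$\pigeonindex$ appears in the gadgets associated with the at most $\numcolours \cdot \bigoh{\numcolours} = \bigoh{\numcolours^{2}}$ pairs $(\pigeonindex,\pigeonindex')$ of pigeons competing for a common hole via a specified pair of edge indices, contributing degree $1$ per gadget. This yields maximal vertex degree $\bigoh{\numcolours^{2}}$.

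For the ``if'' direction of the biconditional, I would take any satisfying assignment to \mbox{\refeq{eq:fphp_defined}--\refeq{eq:fphp_holes}} and, for each pigeon~$\pigeonindex$ that flies to its chosen hole along its $\colstd$th edge, colour the pigeon vertex~$\pigeonindex$ with colour~$\colstd$. The injectivity of the pigeon-to-hole mapping then guarantees that for every gadget $\diffcolgadgetstd$ the pair of pigeon colours $(\vcolour{\pigeonindex},\vcolour{\pigeonindex'})$ is different from the forbidden pair $(\colstd,\colstd')$, so by \refitem{item:complete} of \refclaim{clm:gadget} each gadget can be completed to a legal \kcolouring extending its pre\nobreakdash-colouring. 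Since non\nobreakdash-pigeon vertices are local to a single gadget, these completions can be carried out independently, producing a legal \kcolouring of all of $\alldiffcolgadgets$.

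For the converse, assume a legal \kcolouring~$\vcolouring$ of $\alldiffcolgadgets$ extending the pre\nobreakdash-colouring exists, and define a pigeon-to-hole mapping by sending each pigeon~$\pigeonindex$ to the hole reached via its $\vcolour{\pigeonindex}$th edge. Every pigeon is assigned exactly one hole by construction, so \refeq{eq:fphp_defined} and \refeq{eq:fphp_unique} are satisfied. To verify injectivity \refeq{eq:fphp_holes}, I would argue by contradiction: if two distinct pigeons $\pigeonindex$ and $\pigeonindex'$ ended up in the same hole via edges with labels $\colstd$ and $\colstd'$, the gadget $\diffcolgadgetstd$ would be present in $\alldiffcolgadgets$ and its pigeon vertices would be coloured by the forbidden pair $(\colstd,\colstd')$, contradicting \refitem{item:complete} of \refclaim{clm:gadget}. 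The only real subtlety, and the step to be spelled out most carefully, is to make sure the vertex and degree counts do not accidentally double\nobreakdash-count shared pigeon vertices and that the independence of the per-gadget completions is rigorously justified; beyond this the proof is a direct assembly of \refclaim{clm:gadget}.
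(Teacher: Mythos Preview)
Your proof plan is correct and follows essentially the same route as the paper: count gadgets via pairs of competing pigeons per hole, invoke \refclaim{clm:gadget} for the per-gadget vertex and pre-colour counts, bound pigeon-vertex degree by the number of gadgets containing a fixed pigeon, and establish the biconditional by reading pigeon colours as edge choices and applying \refitem{item:complete} of \refclaim{clm:gadget} gadget by gadget. One small slip to fix: you sum over ``at most $\setsize{\pigeonset}$ holes,'' but the lemma does not assume $\setsize{\holeset} \leq \setsize{\pigeonset}$; the paper instead notes that without loss of generality $\setsize{\holeset} \leq \numcolours\,\setsize{\pigeonset}$ (unreachable holes can be discarded), which is what yields the stated $\bigoh{\numcolours^{4}\setsize{\pigeonset}}$ vertex bound.
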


\begin{proof}
  Without loss of generality we can assume that
  $\setsize{\holeset} \leq \numcolours \setsize{\pigeonset}$
  \ifthenelse{\boolean{conferenceversion}}
  {(otherwise there are holes that cannot be used by any pigeon).}
  {(otherwise there are holes that cannot be used by any pigeon and
  that can thus be discarded).}
  Each gadget 
  $\diffcolgadgetstd$
  has $\bigoh{\numcolours}$ vertices and there are   at most
  $(d_{\holeset})^{2} = \Bigoh{\numcolours^{2}}$ 
  distinct pairs of pigeons that can fly to any single
  hole~$\holeindex$,
  meaning that we have a total of at most
  $\Bigoh{\numcolours^{2} \setsize{\holeset} }$ 
  injectivity constraint gadgets
  $\diffcolgadgetstd$.
  Therefore, by a crude estimate   
  $\alldiffcolgadgets$ has at most
  $\Bigoh{\numcolours^{4}\setsize{\pigeonset}}$ vertices 
  in total. 

  By Claim~\ref{clm:gadget} at most
  $\Bigoh{\numcolours^{2} \setsize{\pigeonset} }$ 
  vertices in~$\alldiffcolgadgets$ 
  are pre-coloured.
  Each pigeon vertex labelled by $\pigeonindex \in \pigeonset$
  is involved in at most
  $d_{\pigeonset} d_{\holeset} = \Bigoh{\numcolours^{2}}$
  injectivity constraint gadgets, so such vertices have degree
  $\Bigoh{\numcolours^{2}}$, 
  while all other vertices have degree  $\bigoh{\numcolours}$.
  
  For any complete colouring of~$\alldiffcolgadgets$
  extending the pre-colouring,   the colours 
  $\vcolour{\pigeonindex} = \colstd_\pigeonindex$
  assigned to pigeon vertices
  $\pigeonindex \in  \pigeonset$
  define a mapping from pigeons to holes via the chosen
  edges~$\colstd_\pigeonindex$. 
  It follows from  Claim~\ref{clm:gadget} that this colouring is
  legal only if pigeons are mapped to holes in a one-to-one fashion,
  which implies that
  \mbox{Equations \refeq{eq:fphp_defined}--\refeq{eq:fphp_holes}}
  are satisfiable. In the other direction, for any one-to-one mapping
  of pigeons to holes we can colour vertex~$\pigeonindex$ by the
  colour~$\colstd_i$ corresponding to the edge it uses to fly to its
  hole, and such a colouring can be combined with the pre-colouring
  complete, 
  to produce a legal \kcolouring.
\end{proof}

To 
finalize 
our reduction we need to get rid of the pre-coloured
vertices in~$\alldiffcolgadgets$.
To this end, we first make the following observation.
Recall that for every every pigeon~$\pigeonindex \in \pigeonset$ we
fixed an enumeration of the edges to holes
$\holeindex \in \holesforpigeon{\pigeonindex}$ in~$\bipartstd$, so
that the choice of an edge corresponds to the choice of a colour.
Suppose we 
apply some arbitrary but fixed permutation $\permsigma$
on~$[\numcolours]$ to all such enumerations for the pigeons
$\pigeonindex \in \pigeonset$. Clearly, this does not change the
instance in any significant way. 
If it was the case before that
pigeon~$\pigeonindex$ and~$\pigeonindex'$
could not simultaneously take the $\colstd$th and $\colstd'$th~edges,
respectively, then now these pigeons cannot simultaneously take the
$\permsigma(\colstd)$th
and
$\permsigma(\colstd')$th~edges, respectively.
In other words,  \reflem{lem:completion} 
is invariant with respect to any permutation of the
colours~$[\numcolours]$, and we could imagine the reduction as
first picking some permutation~$\permsigma$ and then
constructing $\alldiffcolgadgets$ \wrt this permutation.

A simple way of achieving this effect would be to construct a separate
``pre-colouring $\numcolours$\nobreakdash-clique'' consisting
of $\numcolours$ special vertices
$\clrvertex{1}, \ldots, \clrvertex{\numcolours}$, and then identify all vertices
in~$\alldiffcolgadgets$ pre-coloured with colour~$\colstd$
with the vertex~$\clrvertex{\colstd}$.
It is not hard to see that the resulting graph would be \kcolourable
if and only if the pre-colouring of~$\alldiffcolgadgets$ could be
extended to a complete, legal \kcolouring, and using
\reflem{lem:completion} we would obtain a valid reduction from
the functional pigeonhole principle to graph \kcolouring. However, the
final graph would have degree 
$\Bigomega{\numcolours^{3}\setsize{\pigeonset}}$, 
and we would like to obtain a graph of bounded degree.

\begin{figure}
    \centering
    \includegraphics{colouringgadget.13}
    \ifthenelse{\boolean{conferenceversion}}
    {\caption{Pre-colouring gadget with vertices to be
      identified with  the 
      pre-coloured vertices in $\alldiffcolgadgets$.}}
    {\caption{Pre-colouring gadget with uncoloured vertices to be
      identified with  the 
      pre-coloured vertices in $\alldiffcolgadgets$.}}
    \label{fig:gadgetseq}
\end{figure}

To keep the 
\ifthenelse{\boolean{conferenceversion}}
{vertex degrees} 
{graph vertex degree} 
independent
of 
the size~$\setsize{\pigeonset}$ of the left-hand side of the FPHP
bipartite graph~$\bipartstd$, we instead construct a pre-colouring
gadget using a slight modification of the above idea.  Consider a set
$\set{\clrvertex{1},\clrvertex{2},\ldots,\clrvertex{\precolgadgetsize}}$
of new vertices, for $\precolgadgetsize$ to be fixed later. For every segment of
$\numcolours$~consecutive vertices
$\set{\clrvertex{\clrvindex},\clrvertex{\clrvindex+1},\ldots,\clrvertex{\clrvindex+\numcolours-1}}$
we add all edges
$\Setdescr
{(\clrvertex{\colstd}, \clrvertex{\colstd'})}
{\colstd \neq \colstd' \in 
  \set{\clrvindex, \clrvindex+1, \ldots, \clrvindex+\numcolours-1}}$
so that they form a $\numcolours$\nobreakdash-clique as illustrated in 
\reffig{fig:gadgetseq}
(where as in
\reffig{fig:gadget} we have~$\numcolours=4$). 
Next, we go through all the pre-coloured
vertices in $\alldiffcolgadgets$: if a vertex should be pre-coloured
by~$\colstd$, then
we identify it with the first vertex~$\clrvertex{\clrvindex}$ such that 
$\clrvindex \equiv \colstd \pmod{\numcolours}$ and such
that $\clrvertex{\clrvindex}$ has not already been 
\ifthenelse{\boolean{conferenceversion}}
{used at a previous step.}
{assigned at a previous step to some other pre-coloured vertex.} 
If we choose 
$\precolgadgetsize = \Bigoh{\numcolours^{3} \setsize{\pigeonset}}$, 
then we are guaranteed to have enough
vertices~$\clrvertex{\clrvindex}$ to be able to  
process all pre-coloured vertices in this way.

Our final graph
$\graphg = \graphg(\bipartstd)$
is the previous graph
$\alldiffcolgadgets$
with pre-coloured vertices identified with (uncoloured) vertices in
the additional pre-colouring gadget as just described.
Clearly, 
$\graphg$ is \kcolourable if and only if 
the pre-colouring of 
$\alldiffcolgadgets$ can be completed to a legal \kcolouring.
We summarize the properties of our reduction in the 
\ifthenelse{\boolean{conferenceversion}}
{following proposition, stated here without proof.}
{following proposition.}

\begin{proposition}
  \label{pr:graph_construction}
  Given a graph FPHP formula over 
  a left-regular bipartite graph~$\bipartstd$ with left degree
  $d_{\pigeonset} = \numcolours$
  and bounded right degree
  $d_{\holeset} = \bigoh{\numcolours}$, 
  there is an explicit construction of a graph
  $\graphg = \graphg(\bipartstd)$
  such that $\graphg$   has 
  $\Bigoh{\numcolours^{4} \setsize{\pigeonset}}$ 
  vertices of maximal vertex degree~$\bigoh{\numcolours^{2}}$
  and is \kcolourable \ifaoif
  \ifthenelse{\boolean{conferenceversion}}
  {\mbox{Equations \refeq{eq:fphp_defined}--\refeq{eq:fphp_holes}}
    are simultaneously satisfiable.}
  {it is possible to map pigeons to holes in accordance with the
    restrictions in~$\bipartstd$ in a one-to-one fashion,
    \ie \ifaoif
    \mbox{Equations \refeq{eq:fphp_defined}--\refeq{eq:fphp_holes}}
    are simultaneously satisfiable.}
\end{proposition}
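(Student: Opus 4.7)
The plan is to assemble $\graphg = \graphg(\bipartstd)$ from the graph $\alldiffcolgadgets$ analysed in \reflem{lem:completion} together with the pre-colouring gadget introduced immediately above the proposition, and then to verify the three claims by working through the size accounting, the degree accounting, and the biconditional in turn. The main non-trivial step is identifying the correct structural property of the pre-colouring gadget that makes the identifications behave as intended.

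For the counting, I would simply add up contributions. From \reflem{lem:completion} we inherit $\Bigoh{\numcolours^{4}\setsize{\pigeonset}}$ vertices and at most $\Bigoh{\numcolours^{2}\setsize{\pigeonset}}$ pre-coloured vertices in $\alldiffcolgadgets$, with maximum degree $\bigoh{\numcolours^{2}}$. The pre-colouring gadget contributes $\precolgadgetsize = \Bigoh{\numcolours^{3}\setsize{\pigeonset}}$ further vertices, which is more than enough to accommodate all pre-coloured vertices under the residue-preserving identification rule. Each $\clrvertex{\clrvindex}$ lies in at most $\numcolours$ of the overlapping length-$\numcolours$ windows, giving internal degree $\bigoh{\numcolours^{2}}$; identifying it with a single pre-coloured vertex of $\alldiffcolgadgets$ (which had local degree $\bigoh{\numcolours}$ in its unique host injectivity gadget) preserves this $\bigoh{\numcolours^{2}}$ bound. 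Summing yields the claimed totals.

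The heart of the argument is the biconditional, and here the key observation I would establish first is that the pre-colouring gadget forces a periodic colouring. In any legal $\numcolours$-colouring $\vcolouring$ of $\graphg$, each window $\{\clrvertex{\clrvindex}, \ldots, \clrvertex{\clrvindex+\numcolours-1}\}$ is a $\numcolours$-clique and must exhaust all $\numcolours$ colours, so comparing two consecutive windows forces $\vcolour{\clrvertex{\clrvindex}} = \vcolour{\clrvertex{\clrvindex+\numcolours}}$ for every~$\clrvindex$. Consequently $\vcolour{\clrvertex{\clrvindex}}$ depends only on $\clrvindex \bmod \numcolours$, so the gadget determines a permutation $\permsigma$ on $[\numcolours]$ with the property that every vertex identified by the rule $\clrvindex \equiv \colstd \pmod{\numcolours}$ receives colour $\permsigma(\colstd)$. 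Since the construction of $\alldiffcolgadgets$ is invariant under relabelling of the colour set (as observed in the paragraph preceding the proposition), the restriction of $\vcolouring$ to $\alldiffcolgadgets$ is a legal completion of the $\permsigma$-renamed pre-colouring, and \reflem{lem:completion} then yields a satisfying assignment of \refeq{eq:fphp_defined}--\refeq{eq:fphp_holes}. Conversely, given an FPHP solution, \reflem{lem:completion} provides a legal $\numcolours$-colouring of $\alldiffcolgadgets$ extending its original pre-colouring, which glues onto the canonical periodic colouring $\vcolour{\clrvertex{\clrvindex}} = \clrvindex \bmod \numcolours$ of the gadget to give a legal $\numcolours$-colouring of $\graphg$.

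The main thing to be careful about is to verify that identification does not secretly create conflicting edges: but since distinct injectivity gadgets $\diffcolgadgetstd$ share only pigeon vertices, and since by the identification rule each $\clrvertex{\clrvindex}$ absorbs at most one pre-coloured vertex of $\alldiffcolgadgets$, all edge incidences at the merged vertex come from unrelated parts of the construction, and the periodicity fact ensures the forced colour is exactly the one the pre-coloured vertex demanded. Once this is checked, the three assertions of the proposition follow directly.
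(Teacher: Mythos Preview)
Your proposal is correct and follows essentially the same approach as the paper: both arguments add the $\bigoh{\numcolours^{3}\setsize{\pigeonset}}$ pre-colouring vertices to the count from \reflem{lem:completion}, both observe that the overlapping-clique structure forces the colouring of the $\clrvertex{\clrvindex}$ to be periodic modulo~$\numcolours$ (determined by the first $\numcolours$-clique), and both then reduce the biconditional to \reflem{lem:completion} via the colour-permutation invariance. Your degree estimate for the pre-colouring gadget is looser than necessary (the internal degree of each $\clrvertex{\clrvindex}$ is actually $2\numcolours - 2 = \bigoh{\numcolours}$, not $\bigoh{\numcolours^{2}}$), but this does not affect the overall $\bigoh{\numcolours^{2}}$ bound, which is attained at the pigeon vertices.
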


\ifthenelse{\boolean{conferenceversion}}
{}
{\begin{proof}
  The number of vertices of 
  $\graphg = \graphg(\bipartstd)$ is at most the number of
  vertices of 
  $\alldiffcolgadgets = \alldiffcolgadgets(\bipartstd)$ 
  plus 
  $\precolgadgetsize = \Bigoh{\numcolours^{3} \setsize{\pigeonset}}$
  additional vertices enforcing the pre-colouring.
  The pre-coloured vertices in the injectivity constraint gadgets 
  get at most $2\numcolours - 1$ new neighbours, so their degree is
  still~$\bigoh{\numcolours}$.
  Hence, the number of vertices and the vertex degree bound in 
  \reflem{lem:completion} remain valid.

  To prove the soundness and completeness of the reduction, note that
  any colouring
  $(\colstd_{1},\ldots,\colstd_{\precolgadgetsize})$
  of 
  $(\clrvertex{1},\ldots,\clrvertex{\precolgadgetsize})$ 
  is completely determined by the colouring
  of the first  $\numcolours$\nobreakdash-clique
  $\set{\clrvertex{1},\ldots,\clrvertex{\numcolours}}$, 
  so that
  $\colstd_{\clrvindex} = \colstd_{\clrvindex'}$ 
  holds whenever
  $\clrvindex \equiv \clrvindex' \pmod{\numcolours}$.
    
  Assume that $\graphstd$ has a \kcolouring~$\chi$.
  Every vertex that was pre-coloured with colour $c\in[\numcolours]$
  in~$\alldiffcolgadgets$ has colour $\vcolour{\clrvertex{c}}$ in $\graphstd$.
  This means that the colouring of $\graphstd$ induces, up to renaming
  of colours,  a completion of the partial colouring
  of~$\alldiffcolgadgets$. 
  By \reflem{lem:completion}, this implies that the FPHP instance is
  satisfiable. 
    
  In the other direction, if the FPHP instance is satisfiable 
  we do as in the proof of \reflem{lem:completion} and colour each
  vertex~$\pigeonindex$ by the colour~$\colstd_i$ corresponding to the
  edge it uses to fly to its hole in some one-to-one mapping.
  We also colour vertices
  $\clrvertex{1}, \ldots, \clrvertex{\numcolours}$
  with colours
  $\vcolour{\clrvertex{\clrvindex}} = \clrvindex$, and extend this
  colouring to the rest of 
  the vertices in the pre-colouring gadget in~\reffig{fig:gadgetseq}.
  By \reflem{lem:completion} this partial colouring can be completed
  to obtain a  legal \kcolouring.
\end{proof}}

Since our reduction encodes local injectivity constraints into local
colouring constraints, it stands to reason that we should be able to
translate between these two types of constraints using low degree
derivations.  In particular, it seems reasonable to expect that any
low-degree refutation of the \kcolouring problem
for~$\graphg(\bipartstd)$ should yield a low-degree refutation for the
functional pigeohole principle on~$\bipartstd$.  This is indeed the
case, as stated in the next lemma.

\begin{lemma}
  \label{lem:reduction}
  Consider the graph
  $\graphstd = \graphstd (\bipartstd)$ 
  obtained from a bipartite graph~$\bipartstd$ as
  in 
  \refpr{pr:graph_construction}.
  If the
  $\numcolours$-colourability constraints
  \mbox{\eqref{eq:colouring_defined}--%
    \eqref{eq:colouring_edges}}
  for $\graphstd$ have a \PC refutation in degree~$\degreestd$,
  then the functional pigeonhole principle constraints 
  \mbox{\refeq{eq:fphp_defined}--\refeq{eq:fphp_holes}}
  defined over~$\bipartstd$ have a PC refutation of degree at
  most~$2\degreestd$. 
\end{lemma}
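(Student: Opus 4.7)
The plan is to convert any PC refutation $\refpi$ of \eqref{eq:colouring_defined}--\eqref{eq:colouring_edges} for $\graphstd = \graphstd(\bipartstd)$ into a PC refutation of \eqref{eq:fphp_defined}--\eqref{eq:fphp_holes} for $\bipartstd$ by defining a substitution $\sigma$ that sends every colouring variable~$\varx_{v,\colstd}$ to a polynomial of degree at most~$2$ in the FPHP variables $\{p_{\pigeonindex, \holeindex}\}$ and then applying $\sigma$ pointwise to~$\refpi$. The guiding principle is that whenever the FPHP variables take a $0/1$ assignment consistent with \eqref{eq:fphp_defined}--\eqref{eq:fphp_holes}, the polynomial $\sigma(\varx_{v,\colstd})$ should evaluate to the indicator that the canonical legal $\numcolours$\nobreakdash-colouring of $\graphstd$ described in Claim~\ref{clm:gadget} and Lemma~\ref{lem:completion} assigns colour~$\colstd$ to vertex~$v$.

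Concretely, I would first define $\sigma$ case by case. For a pigeon vertex $\pigeonindex$ whose $\colstd$\nobreakdash-th edge reaches hole $\holeindex$, set $\sigma(\varx_{\pigeonindex, \colstd}) = p_{\pigeonindex, \holeindex}$ (degree~$1$). For a pre-colouring gadget vertex $\clrvertex{\clrvindex}$ intended to carry the pre-colour $\tau(\clrvindex)$, set $\sigma(\varx_{\clrvertex{\clrvindex}, \colstd})$ to~$1$ if $\colstd = \tau(\clrvindex)$ and to~$0$ otherwise (degree~$0$). For the interior vertices of an injectivity gadget $\diffcolgadgetstd$ with shared hole $h = h_{\pigeonindex}(\colstd) = h_{\pigeonindex'}(\colstd')$, parameterize the colour of each vertex as a polynomial of degree at most~$2$ in the two pigeon variables $p_{\pigeonindex, h}$ and $p_{\pigeonindex', h}$. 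The most delicate case is the merged (different-colour gadget) or linked (same-colour gadget) vertex $\lcvertex{\numcolours}$, whose colour must be simultaneously consistent with both cliques: it is forced to be~$\colstd$ when $p_{\pigeonindex, h} = 1$ and to be~$\colstd'$ when $p_{\pigeonindex', h} = 1$, with the conflict between these two requirements ruled out precisely by the FPHP injectivity axiom $p_{\pigeonindex, h} p_{\pigeonindex', h} = 0$.

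The next step is to verify that $\sigma(\refpi)$ becomes, with only a degree blow-up of at most a factor of~$2$, a valid PC derivation in the FPHP variables. Linear combinations commute with substitution, while a multiplication step $\polyp \mapsto \varx \polyp$ becomes $\sigma(\polyp) \mapsto \sigma(\varx) \sigma(\polyp)$ and can be simulated by at most two single-variable multiplications followed by a linear combination, producing a polynomial of degree at most~$2\degreestd$. The substituted Boolean axioms $\sigma(\varx_{v,\colstd})^2 - \sigma(\varx_{v,\colstd})$ reduce to zero modulo the FPHP Boolean axioms in constant degree, since each $\sigma(\varx_{v,\colstd})$ is a polynomial in variables that are themselves Boolean. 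The substituted colouring axioms are then handled axiom by axiom: the pigeon-vertex axioms \eqref{eq:colouring_defined} and \eqref{eq:colouring_unique} reduce verbatim to the FPHP axioms \eqref{eq:fphp_defined} and \eqref{eq:fphp_unique}; the pre-colouring axioms vanish syntactically; and every edge axiom internal to a gadget reduces to zero modulo the Boolean and FPHP axioms, with the unique exception of the critical edge inside each injectivity gadget whose substituted form is exactly the FPHP injectivity axiom~\eqref{eq:fphp_holes} for~$h$.

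The main technical obstacle is the design of $\sigma$ on the interior of each injectivity gadget so that, for every edge of the gadget, the substituted edge axiom is derivable from the Boolean and FPHP axioms in constant degree. The canonical legal colourings of the left and right cliques must be chosen to agree on the shared vertex, and the only cross term involving both $p_{\pigeonindex, h}$ and $p_{\pigeonindex', h}$ that cannot be eliminated by Boolean axioms is exactly $p_{\pigeonindex, h} p_{\pigeonindex', h}$, which is the FPHP injectivity axiom available as a hypothesis. Once this case analysis is carried out, splicing together $\sigma(\refpi)$ with the constant-degree derivations of the substituted axioms yields the desired PC refutation of degree at most~$2\degreestd$.
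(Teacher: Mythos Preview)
Your high-level strategy---substitute each colouring variable by a degree-$\leq 2$ polynomial in the FPHP variables, then verify that the substituted axioms are derivable from the FPHP axioms---is exactly the paper's strategy, and the degree-$2d$ bound follows for the same reason. However, your concrete substitution is genuinely different from the one the paper uses, and this is worth spelling out.

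The paper does \emph{not} parameterise the interior of a gadget $\diffcolgadgetstd$ by the two variables $p_{\pigeonindex,h},p_{\pigeonindex',h}$ alone. Instead it fixes, for every pair $(\colalt,\colalt')\neq(\colstd,\colstd')$, a legal colouring $\chi_{(\colalt,\colalt')}$ of the gadget (guaranteed by Claim~\ref{clm:gadget}), and substitutes
\[
  \varx_{v,j}\ \mapsto\ \sum_{\substack{(\colalt,\colalt')\neq(\colstd,\colstd')\\ \chi_{(\colalt,\colalt')}(v)=j}} \varx_{\pigeonindex,\colalt}\,\varx_{\pigeonindex',\colalt'}\eqcomma
\]
i.e.\ a degree-$2$ polynomial in \emph{all} $2\numcolours$ pigeon-colour variables. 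The payoff is a clean, uniform verification: for any edge $(u,v)$ of the gadget, every term $\varx_{\pigeonindex,\colalt_1}\varx_{\pigeonindex',\colalt'_1}\varx_{\pigeonindex,\colalt_2}\varx_{\pigeonindex',\colalt'_2}$ in the substituted edge axiom either has $\colalt_1\neq\colalt_2$ or $\colalt'_1\neq\colalt'_2$ (otherwise the fixed colouring $\chi_{(\colalt_1,\colalt'_1)}$ would violate the edge), so it follows from the FPHP functionality axioms. The collision axiom $\varx_{\pigeonindex,\colstd}\varx_{\pigeonindex',\colstd'}$ enters only through the substituted version of~\eqref{eq:colouring_defined}, via the auxiliary identity $\sum_{(\colalt,\colalt')\neq(\colstd,\colstd')}\varx_{\pigeonindex,\colalt}\varx_{\pigeonindex',\colalt'}=1$.

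Your two-variable parametrisation also works, but two points in your write-up need tightening. First, the claim that ``the unique exception'' is the critical edge is not accurate: with substitutions like $\sigma(\varx_{\lcvertex{\numcolours},e})=1-p_{\pigeonindex,h}-p_{\pigeonindex',h}$, several internal edge axioms (and the substituted Boolean axioms themselves) reduce to multiples of $p_{\pigeonindex,h}p_{\pigeonindex',h}$ modulo the Boolean axioms, not just the one across the cliques---so the hole axiom is used repeatedly, not once. Second, for the edge $(\lcvertex{1},\pigeonindex)$ you leave the gadget interior: the substituted axiom $\sigma(\varx_{\lcvertex{1},j})\cdot p_{\pigeonindex,h_j}$ for $j\neq\colstd$ is a product like $p_{\pigeonindex,h}\,p_{\pigeonindex,h_j}$, which is killed by the FPHP \emph{functionality} axiom~\eqref{eq:fphp_unique}, not by Boolean axioms or the hole axiom. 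Your phrase ``modulo the Boolean and FPHP axioms'' covers this, but your heuristic about cross terms in $p_{\pigeonindex,h},p_{\pigeonindex',h}$ does not; you should flag this case explicitly. With those corrections the argument goes through.
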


We will spend what remains of this section on proving this lemma.
The proof is quite similar in spirit to that of
\refpr{pr:alternative-encoding-reduction}.
We start by assuming that we have a PC refutation of
\mbox{Equations \eqref{eq:colouring_defined}--%
  \eqref{eq:colouring_edges}}
in degree $\degreestd$. 
Our first step is to 
substitute all variables~$\varx_{v,j}$ in this refutation with
polynomials of degree at most~$2$ in
variables~$p_{\pigeonindex,\holeindex}$. 
In the second step, we argue that if we apply this substitution to the
axioms in~\mbox{\eqref{eq:colouring_defined}--\eqref{eq:colouring_edges}}, 
then we can derive the resulting substituted polynomials from
\mbox{Equations~\refeq{eq:fphp_defined}--\refeq{eq:fphp_holes}} 
by PC derivations in low degree.
Taken together, this yields a PC refutation in low degree of
the FPHP instance
\mbox{\refeq{eq:fphp_defined}--\refeq{eq:fphp_holes}}.

To describe the substitution, let us focus on a single
gadget~$\diffcolgadgetstd$. The first step is to express all equations
for this gadget as equations over variables
$\varx_{i,1},\ldots \varx_{i,\numcolours}, \varx_{i',1}, \ldots
\varx_{i',\numcolours}$. 
Note that these variables are  essentially the same as those from the
pigeonhole principle instance, except that instead of
$p_{\pigeonindex,\holeindex}$
we use the variable~$\varx_{\pigeonindex,\colstd}$
where $\colstd$ is the number of the edge pigeon~$\pigeonindex$ uses
to fly to hole~$\holeindex$, but for the sake of exposition we want to
keep using the language of colourings.

Let $w$ and $w'$ be the vertices that are supposed to be pre-coloured
with colours $\colstd$ and $\colstd'$, respectively. We stress that now we are
considering the graph $\graphstd$ which has no pre-coloured vertices,
and in particular all the variables mentioning the vertices~$w$ and $w'$ are
unassigned. Recall that  $w$ and~$w'$ also appear in the gadget
depicted in \reffig{fig:gadgetseq}, where they are identified with
some vertices $\clrvertex{\clrvindex}$ and~$\clrvertex{\clrvindex'}$ such that
$\clrvindex \equiv \colstd 
$
and
$\clrvindex' \equiv \colstd' \pmod{\numcolours}$.

For any pair 
$(\colalt,\colalt')$ 
of colours different from $(\colstd, \colstd')$,
Claim~\ref{clm:gadget} guarantees that we can pick some colouring
$\chi_{(\colalt,\colalt')}$ for the gadget
$\diffcolgadgetstd$
such that 
$\chi_{(\colalt,\colalt')}(\pigeonindex) = \colalt$,
$\chi_{(\colalt,\colalt')}(\pigeonindex') = \colalt'$,
$\chi_{(\colalt,\colalt')}(w) = \colstd$ and
$\chi_{(\colalt,\colalt')}(w')=\colstd'$.
Fix for the rest of this proof
such a colouring
$\chi_{(\colalt,\colalt')}$ 
for the gadget~$\diffcolgadgetstd$
for every
$(\colalt,\colalt') \neq (\colstd, \colstd')$.
Then we can write the colour of any vertex~$v$ in
$\diffcolgadgetstd$
other than the pigeon vertices
$\pigeonindex$
and~$\pigeonindex'$ as a  function of 
$(\colalt,\colalt')$. 
In more detail,  we can express every variable
$\varx_{v,\holeindex}$, for $v\not\in \{\pigeonindex,\pigeonindex'\}$,
as a degree\nobreakdash-$2$ polynomial    over the variables 
$\varx_{\pigeonindex,1},\ldots \varx_{\pigeonindex,\numcolours},
\varx_{\pigeonindex',1},\ldots 
\varx_{\pigeonindex',\numcolours}$ 
by
summing over the monomials
$\varx_{\pigeonindex,\colalt}\varx_{\pigeonindex',\colalt'}$
corresponding to the choices of colours
$(\colalt,\colalt')$
for
$(\pigeonindex, \pigeonindex')$
for which the colouring
$\chi_{(\colalt,\colalt')}$
assigns colour~$j$ to vertex~$v$, 
or in symbols
\begin{equation}
  \label{eq:substitution}
  \varx_{v,\holeindex} 
  \mapsto %
  \sum_{
    (\colalt, \colalt') \neq (\colstd, \colstd'), \,
      \chi_{(\colalt, \colalt')}(v) = \holeindex}
  \varx_{\pigeonindex,\colalt}\varx_{\pigeonindex',\colalt'} 
  \eqperiod
\end{equation}
Notice that for the vertices~$w$ and~$w'$ the substitutions we obtain 
from~\refeq{eq:substitution} are
\begin{subequations}
  \begin{align}
    \label{eq:precoloured_substitution-1}
     \varx_{w,\colstd} 
    &
      \substarrowdisplayed
       \sum_{
       {(\colalt, \colalt') \neq (\colstd, \colstd')}
       }
       \varx_{\pigeonindex,\colalt}\varx_{\pigeonindex',\colalt'} 
            \eqcomma
    \\
    \label{eq:precoloured_substitution-2}
    \varx_{w',\colstd'} 
    &
      \substarrowdisplayed
       \sum_{
       { (\colalt, \colalt')\neq (\colstd, \colstd')}
       }
       \varx_{\pigeonindex,\colalt}\varx_{\pigeonindex',\colalt'} 
            \eqcomma
    \\
    \label{eq:precoloured_substitution-3}
    \varx_{w,\colalt} 
    &
      \substarrowdisplayed
            0 && \text{(for $ \colstd \neq \colalt)$,}
    \\
    \label{eq:precoloured_substitution-4}
    \varx_{w',\colalt'} 
    &
      \substarrowdisplayed
            0 && \text{(for $\colstd'\neq \colalt'$),}
  \end{align}
\end{subequations}
since $w$ always gets colour~$\colstd$
and  $w'$ always gets colour~$\colstd'$
in any colouring~$\chi_{(\colalt, \colalt')}$.

\ifthenelse{\boolean{conferenceversion}}
{}
{\begin{example}
  \label{ex:substitution}
  Let us give a concrete example to make clearer how the substitution
  in~\refeq{eq:substitution} works.
  Suppose
  $\numcolours=3$ and consider an arbitrary
  (non-pigeon)
  vertex $v$ in some  gadget
  $\diffcolgadget{\pigeonindex}{\pigeonindex'}{3}{1}$.
  For every pair of colours
  $
  (\colalt, \colalt') \in
  ([3] \times [3]) \setminus \set{(3,1)}
  $
  for
  $(\pigeonindex,\pigeonindex')$
  we have fixed some (legal) colouring of the whole gadget.
  Suppose that in these fixed colourings it holds for some 
  vertex~$v$ in the gadget that
  \begin{enumerate}
  \item \itemsep=0pt
    $v$ takes colour $1$ for 
    $
    (\colalt, \colalt') \in
    \{(1,2),(2,2),(3,2),(3,3)\}$,
    
  \item \itemsep=0pt
    $v$ takes 
    colour $2$ for $     (\colalt, \colalt') \in \{(1,1)\}$ and 
    
  \item \itemsep=0pt
    $v$ takes 
    colour $3$ for $     (\colalt, \colalt') \in 
\{(1,3),(2,1),(2,3)\}$. 
  \end{enumerate}
  In this case, the substitutions for the variables
  $\varx_{v,j}$  mentioning~$v$
  become
  \begin{subequations}
    \begin{align}
      \varx_{v,1} 
      &
        \substarrowdisplayed
        \varx_{\pigeonindex,1}\varx_{\pigeonindex',2} 
        + \varx_{\pigeonindex,2}\varx_{\pigeonindex',2} 
        + \varx_{\pigeonindex,3}\varx_{\pigeonindex',2}
        +  \varx_{\pigeonindex,3}\varx_{\pigeonindex',3}
        \eqcomma
      \\
      \varx_{v,2} 
      &
        \substarrowdisplayed
        \varx_{\pigeonindex,1}\varx_{\pigeonindex',1} 
        \eqcomma
      \\
      \varx_{v,3}
      &
        \substarrowdisplayed
        \varx_{\pigeonindex,1}\varx_{\pigeonindex',3} 
        + \varx_{\pigeonindex,2}\varx_{\pigeonindex',1} 
        + \varx_{\pigeonindex,2}\varx_{\pigeonindex',3}
        \eqperiod
    \end{align}
  \end{subequations}
  We hope that the substitutions in this example can serve as
  a guide for following the rest of the proof.
\end{example}
}

Let us next discuss how the polynomials obtained from
\mbox{\eqref{eq:colouring_defined}--\eqref{eq:colouring_edges}}
after the substitution~\refeq{eq:substitution} 
can be derived in PC from
\mbox{\refeq{eq:fphp_defined}--\refeq{eq:fphp_holes}}.
More precisely we argue that all substituted axioms can be derived
from the equations 
\begin{subequations}
\begin{align}
  \label{eq:axiom_start-1}
    \sum^{\numcolours}_{\colalt=1} \varx_{\pigeonindex,\colalt} &= 1 
    \eqcomma
  \\
  \label{eq:axiom_start-2}
  \varx_{\pigeonindex,\colalt}\varx_{\pigeonindex,\colalt'} &= 0
  && \text{(for $\colalt \neq \colalt'$),}
  \\
  \label{eq:axiom_start-3}
    \sum^{\numcolours}_{\colalt'=1} \varx_{\pigeonindex',\colalt'} &= 1
    \eqcomma
  \\
  \label{eq:axiom_start-4}
  \varx_{\pigeonindex',\colalt}\varx_{\pigeonindex',\colalt'} &= 0
  && \text{(for $\colalt \neq \colalt'$),}
  \\
  \label{eq:axiom_start-5}
    \varx_{\pigeonindex,\colstd}\varx_{\pigeonindex',\colstd'} &= 0 \eqcomma
\end{align}
\end{subequations}
which are just the same, except for variables renaming, as the
pigeon axioms~\refeq{eq:fphp_defined} and~\refeq{eq:fphp_unique} for
pigeons~$\pigeonindex$ and~$\pigeonindex'$ plus the 
collision axiom~\refeq{eq:fphp_holes} for the hole which is the
common neighbour of $\pigeonindex$ and $\pigeonindex'$.
In what follows 
we will need 
the
equation
\begin{equation}
  \label{eq:axiom_more}
  \sum^{\numcolours}_{\colalt=1} \sum^{\numcolours}_{\colalt'=1}
  \varx_{\pigeonindex,\colalt}\varx_{\pigeonindex',\colalt'}
  - \varx_{\pigeonindex,\colstd}\varx_{\pigeonindex',c'} 
  -1 = 0 
\end{equation}
which has the \mbox{degree-$2$} proof 
\begin{equation}
  \sum^{\numcolours}_{\colalt=1} \varx_{\pigeonindex,\colalt} \left( \sum^{\numcolours}_{\colalt'=1}
  \varx_{\pigeonindex',\colalt'} - 1  \right) +
\left( \sum^{\numcolours}_{\colalt=1} \varx_{\pigeonindex,\colalt}  -
  1\right)
  - \varx_{\pigeonindex,\colstd}\varx_{\pigeonindex',c'} 
  = 0 
\end{equation}
from
\mbox{\refeq{eq:axiom_start-1}--\refeq{eq:axiom_start-5}}.

We consider first 
axioms
$\sum^{\numcolours}_{j=1} \varx_{v,j} = 1 $
as in~\refeq{eq:colouring_defined}
for vertices~$v$ 
that are not a pigeon vertex~$\pigeonindex$ or~$\pigeonindex'$. 
It is straightforward to verify that 
such an
axiom after substitution 
as in~\refeq{eq:substitution} 
becomes an equality on the 
\ifthenelse{\boolean{conferenceversion}}
{form~\refeq{eq:axiom_more}.}
{form~\refeq{eq:axiom_more}.%
  \footnote{Consider the example of substitution we saw in 
    \refex{ex:substitution}.
    The
    equation $\varx_{v,1}+\varx_{v,2}+\varx_{v,3}=1$ after substitution becomes
    \begin{equation}
      \left( \varx_{\pigeonindex,1}\varx_{\pigeonindex',2} +
        \varx_{\pigeonindex,2}\varx_{\pigeonindex',2} +
         \varx_{\pigeonindex,3}\varx_{\pigeonindex',2}+ 
         \varx_{\pigeonindex,3}\varx_{\pigeonindex',3} \right) 
       + 
       \left( \varx_{\pigeonindex,1}\varx_{\pigeonindex',1} \right) 
       + 
       \left( 
         \varx_{\pigeonindex,1}\varx_{\pigeonindex',3} + 
         \varx_{\pigeonindex,2}\varx_{\pigeonindex',1} + 
         \varx_{\pigeonindex,2}\varx_{\pigeonindex',3}
       \right) 
       = 1
     \end{equation}
     which is equivalent to
     \begin{equation}
       \left( \varx_{\pigeonindex,1} +
         \varx_{\pigeonindex,2} + \varx_{\pigeonindex,3} \right) 
       \left( \varx_{\pigeonindex',1} + 
         \varx_{\pigeonindex',2} + \varx_{\pigeonindex',3} \right) 
       - \varx_{\pigeonindex,3}\varx_{\pigeonindex',1} = 1\eqperiod
     \end{equation} }}
   If $v$ is a pigeon vertex~$\pigeonindex$ or~$\pigeonindex'$, 
   then no substitution is made and we simply keep the 
   axiom~\eqref{eq:axiom_start-1}
   or~\eqref{eq:axiom_start-3}, respectively.

   Next, we consider axioms~\refeq{eq:colouring_unique}
   on the form $\varx_{v,j} \varx_{v,j'} =  0$,
   where we assume that $v$ is not a pigeon vertex~$\pigeonindex$
   or~$\pigeonindex'$ since in that case we have one of the
   axioms~\eqref{eq:axiom_start-2} and~\eqref{eq:axiom_start-4}. 
   After substitution
   an axiom~\eqref{eq:colouring_unique}
   for $v \notin \set{i,i'}$
   becomes a sum of \mbox{degree-$4$} terms of the form
   $
   \varx_{\pigeonindex,\colalt_{1}}
   \varx_{\pigeonindex',\colalt'_{1}}
   \varx_{\pigeonindex,\colalt_{2}}
   \varx_{\pigeonindex',\colalt'_{2}}
   $.
   Recall that the substitution associates disjoint sets of pairs
   $(\colalt,\colalt')$ to the colours 
   \ifthenelse{\boolean{conferenceversion}}
   {for~$v$.}
   {for~$v$ (see again Example~\ref{ex:substitution}).} 
   Therefore, for each term
   $
   \varx_{\pigeonindex,\colalt_{1}}
   \varx_{\pigeonindex',\colalt'_{1}}
   \varx_{\pigeonindex,\colalt_{2}}
   \varx_{\pigeonindex',\colalt'_{2}}
   $
   it must be that either $\colalt_{1}\neq\colalt_{2}$ or
   $\colalt'_{1}\neq\colalt'_{2}$ 
   holds,  %
   and such a term can be
   derived from~\eqref{eq:axiom_start-2} or~\eqref{eq:axiom_start-4}
   by multiplication.

   Let us finally consider axioms on the form
   $\varx_{u,j}\varx_{v,j} = 0$
   for
   $(u,v) \in E(\graphstd)$
   as in~\eqref{eq:colouring_edges}. 
   There is no edge between $\pigeonindex$ and $\pigeonindex'$ in our
   constructed graph, 
   so for the size of the intersection
   between $\set{u,v}$ and $\set{\pigeonindex,\pigeonindex'}$ 
   it holds that
   $
   0 \leq
   \Setsize{
     \set{u,v} \intersection \set{\pigeonindex,\pigeonindex'}
   }
   \leq 1
   $. 

   If
   $
   \Setsize{
     \set{u,v} \intersection \set{\pigeonindex,\pigeonindex'}
   }
   = 0$,
   then  after substitution the axiom~\eqref{eq:colouring_edges} 
   becomes a sum of degree\nobreakdash-$4$ terms of the form
   $
   \varx_{\pigeonindex,\colalt_{1}}
   \varx_{\pigeonindex',\colalt'_{1}}
   \varx_{\pigeonindex,\colalt_{2}}
   \varx_{\pigeonindex',\colalt'_{2}}
   $.
   Consider any such term. If either $\colalt_{1}\neq\colalt_{2}$ or
   $\colalt'_{1}\neq\colalt'_{2}$, then the term can be derived
   from~\eqref{eq:axiom_start-2} or~\eqref{eq:axiom_start-4}.
   We claim that no term can have
   $\colalt_{1}=\colalt_{2}$ and $\colalt'_{1}=\colalt'_{2}$.
   To see this, note that this would imply that when performing
   substitution as in~\refeq{eq:substitution} the variables
   $\varx_{u,j}$ and~$\varx_{v,j}$ both get expanded to a sum containing
   $\varx_{\pigeonindex,\colalt_{1}}\varx_{\pigeonindex',\colalt'_{1}}$. 
   But this would in turn mean that the colouring
   $\chi_{(\colalt_{1},\colalt_{1}')}$
   that we fixed for the gadget~$\diffcolgadgetstd$ at the start of
   the proof  assigned  colours
   $\chi_{(\colalt_{1},\colalt_{1}')}(u) =
   \chi_{(\colalt_{1},\colalt_{1}')}(v)$, 
   which is impossible since there is an edge between $u$ and $v$ and
   $\chi_{(\colalt_{1},\colalt_{1}')}$ was chosen to be a legal colouring.

   The remaining case is when we have intersection size
   $
   \Setsize{
     \set{u,v} \intersection \set{\pigeonindex,\pigeonindex'}
   }
   = 1$.   
   Without loss of generality because of symmetry we can assume that
   we have an axiom 
   $\varx_{u,j}\varx_{v,j} = 0$
   for 
   $u \notin \set{i,i'}$
   and
   $v=\pigeonindex$.
   The axiom becomes after substitution a sum of terms of the form
   $\varx_{\pigeonindex,\colalt}
   \varx_{\pigeonindex',\colalt'}
   \varx_{\pigeonindex,\holeindex}$.
   If for some term we would have $\colalt=\holeindex$, then
   $\chi_{(\holeindex,\colalt')}$ would assign the same colour
   $\holeindex$ to both $u$ and $\pigeonindex$. This is again
   impossible since $\chi_{(\holeindex,\colalt')}$ is a legal colouring of the
   gadget by construction. Hence we have $\colalt\neq j$ and it
   follows that
   $
   \varx_{\pigeonindex,\colalt}
   \varx_{\pigeonindex',\colalt'}
   \varx_{\pigeonindex,\holeindex}
   $
   is 
   derivable
   from~\eqref{eq:axiom_start-2}.

\jncomment{Actually, would it be possible to get a slightly simpler
  reduction below by applying this restriction to every single variable 
  $\clrvertex{\clrvindex}$
  in the pre-colouring gadget? Or this doesn't work? It doesn't seem to
  important and I don't have time to think about it carefully now, but
  I am just asking\ldots}
\mlcomment{I recall that if we try to do that, then we would have more
cases to analyze (e.g.\ edges incident to a precoloured vertex) while
in this way we do not need it. Anyway your restriction would work as
well but we would need to change the proof.}

We are now almost done with the proof of
\reflem{lem:reduction}.
We have defined how to  substitute  variables~$\varx_{v,j}$ in
\mbox{\eqref{eq:colouring_defined}--\eqref{eq:colouring_edges}}
and have shown that the equations that we obtain after these
substitutions can be derived from
\mbox{Equations~\refeq{eq:fphp_defined}--\refeq{eq:fphp_holes}} 
in low degree.
The final issue that remains it to get rid of all vertices~$\clrvertex{\clrvindex}$
in the pre-colouring gadget in \reffig{fig:gadgetseq} that are not
members of any injectivity constraint gadget
$\diffcolgadgetstd$.
For such variables the substitution is simply an assignment:
we let
$\varx_{\clrvertex{\clrvindex},\colalt} \mapsto 1$ 
when
$\clrvindex \equiv \colalt \pmod{\numcolours}$
and 
$\varx_{\clrvertex{\clrvindex},\colalt} \mapsto 0$ 
otherwise.%
\footnote{Note that 
  here the substitution for $\varx_{\clrvertex{\clrvindex},\colalt}$ where
  $\clrvindex \equiv \colalt \pmod{\numcolours}$
  is different from the one used for
  vertices that are members of some gadget
  $\diffcolgadgetstd$
  in~\refeq{eq:precoloured_substitution-1}
  and~\refeq{eq:precoloured_substitution-2}.
  For variables
  $\varx_{\clrvertex{\clrvindex},\colalt}$ where
  $\clrvindex \not\equiv \colalt\pmod{\numcolours}$ the substitution is the same
  as
  in~\refeq{eq:precoloured_substitution-3}
  and~\refeq{eq:precoloured_substitution-4}, though.}
This immediately satisfies all
axioms~\eqref{eq:colouring_defined}
and~\eqref{eq:colouring_unique}
for these vertices, removing these axioms from the refutation.
It remains to check the axioms~\eqref{eq:colouring_edges} for any
pair of connected vertices~$\clrvertex{\clrvindex}$ and~$\clrvertex{\clrvindex'}$. But by
construction, if $\clrvertex{\clrvindex}$ and~$\clrvertex{\clrvindex'}$ are connected it holds
that $\clrvindex \not\equiv \clrvindex'\pmod{\numcolours}$.
Therefore, for every $\colalt \in [\numcolours]$
we have that either $\varx_{\clrvertex{\clrvindex},\colalt} \mapsto 0$
or $\varx_{\clrvertex{\clrvindex'},\colalt} \mapsto 0$ holds,
regardless of whether these two
vertices are in some gadget
$\diffcolgadgetstd$
or not.

To summarize what we have done, we started with any arbitrary refutation of
\mbox{\eqref{eq:colouring_defined}--\eqref{eq:colouring_edges}}
and substituted all variables with \mbox{degree-$2$} polynomials over
the variables $\varx_{\pigeonindex,\holeindex}$ for $\pigeonindex \in [n]$. 
Then we proved that all these substituted axioms (and therefore the
whole refutation) follow from
Equations~\mbox{\eqref{eq:axiom_start-1}--\eqref{eq:axiom_start-3}}. 
It is straightforward to verify that, up to variable renaming, these
axioms are nothing other than the FPHP axioms in
  \mbox{\refeq{eq:fphp_defined}--\refeq{eq:fphp_holes}}.
This concludes the proof of
\reflem{lem:reduction}.
Putting everything together, we can now state and prove our main theorem.

\begin{theorem}\label{thm:worstcase}
  For any integer $\numcolours \geq 3$ there is an efficiently
  constructible family of graphs $\{\graphstd_{n}\}_{n\in \N}$ with
  $\bigoh{\numcolours^{4}n}$ 
  vertices of degree $\bigoh{k^{2}}$ that do not possess
  \kcolouring{}s, but for which the corresponding system of polynomial
  equations 
  \mbox{\eqref{eq:colouring_defined}--\eqref{eq:colouring_edges}}
  require  degree~$\bigomega{n}$, and hence size~$\exp(\bigomega{n})$,
  to be refuted in polynomial calculus.
\end{theorem}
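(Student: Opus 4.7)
The plan is to combine the three main ingredients developed in this section: the FPHP degree lower bound (Theorem~\ref{thm:fphp}), the reduction from graph FPHP to graph \kcolouring (Proposition~\ref{pr:graph_construction}), and the PC-simulation of that reduction (Lemma~\ref{lem:reduction}), and then appeal to the size-degree trade-off (Theorem~\ref{th:ips}) to convert the degree lower bound into a size lower bound.

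Concretely, I would start by invoking Theorem~\ref{thm:fphp} to fix a family $\{\bipartstd_{n}\}_{n \in \N}$ of bipartite graphs with $n$ vertices on the left, $n-1$ on the right, left degree~$\numcolours$, right degree $\bigoh{\numcolours}$, and the required boundary expansion. Such families exist (for example, they can be obtained by standard probabilistic constructions of sparse bipartite expanders). I would then define $\graphstd_{n} := \graphstd(\bipartstd_{n})$ using the construction preceding Proposition~\ref{pr:graph_construction}. The proposition guarantees that $\graphstd_{n}$ has $\Bigoh{\numcolours^{4} n}$ vertices and maximum degree $\bigoh{\numcolours^{2}}$, and is \kcolourable if and only if the FPHP instance on $\bipartstd_{n}$ is satisfiable. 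Since $\setsize{\pigeonset} = n > n-1 = \setsize{\holeset}$, the FPHP instance is unsatisfiable, so $\graphstd_{n}$ is not \kcolourable.

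For the degree lower bound, I would argue by contradiction. Suppose \mbox{Equations~\eqref{eq:colouring_defined}--\eqref{eq:colouring_edges}} for $\graphstd_{n}$ have a polynomial calculus refutation of degree $\degreestd$. By Lemma~\ref{lem:reduction}, this yields a polynomial calculus refutation of the FPHP constraints~\eqref{eq:fphp_defined}--\eqref{eq:fphp_holes} over $\bipartstd_{n}$ in degree at most $2\degreestd$. But Theorem~\ref{thm:fphp} forces any such refutation to have degree $\bigomega{n}$, so $\degreestd = \bigomega{n}$, as claimed. This step is completely mechanical once the preceding lemmas are in place.

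Finally, to get the size lower bound, I would apply Theorem~\ref{th:ips}: the polynomials in~\eqref{eq:colouring_defined}--\eqref{eq:colouring_edges} have constant degree (at most~$2$) and are over $N = \Bigoh{\numcolours^{4} n}$ variables, so a degree lower bound of $\Bigomega{n}$ translates into a size lower bound of $\exp\bigl(\Bigomega{n^{2}/N}\bigr) = \exp(\bigomega{n})$, since $\numcolours$ is constant. No real obstacle arises here, since all the heavy lifting has already been done in Theorem~\ref{thm:fphp} and Lemma~\ref{lem:reduction}; the only point requiring mild care is ensuring that the parameters stated in the theorem (vertex count $\bigoh{\numcolours^{4} n}$ and degree $\bigoh{\numcolours^{2}}$) match exactly what Proposition~\ref{pr:graph_construction} provides, and that the constants hidden in the $\bigoh{\cdot}$ and $\bigomega{\cdot}$ notation are compatible when chaining the reductions.
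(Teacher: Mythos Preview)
Your overall strategy matches the paper's proof: combine Theorem~\ref{thm:fphp}, Proposition~\ref{pr:graph_construction}, and Lemma~\ref{lem:reduction}, then invoke Theorem~\ref{th:ips}. The chaining of the degree bounds and the size calculation are exactly as in the paper.

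There is one genuine gap. The theorem asks for an \emph{efficiently constructible} family $\{\graphstd_{n}\}$, and this hinges on the bipartite graphs $\{\bipartstd_{n}\}$ being explicit. You write that such families ``can be obtained by standard probabilistic constructions of sparse bipartite expanders'', but a probabilistic existence argument does not yield an efficiently constructible family. Moreover, Theorem~\ref{thm:fphp} as stated does not supply the graphs; it is a conditional statement (``consider a family of bipartite graphs with the following properties \ldots''), so you must actually exhibit $\{\bipartstd_{n}\}$. The paper handles this by invoking the explicit unique-neighbour expanders of Alon and Capalbo~\cite{AlonCapalbo2002Explicit}: it takes their degree-$\leq \numcolours$ simple expander $H_{n}$, forms the bipartite double cover, and deletes one right vertex to get $n$ versus $n-1$ sides while preserving boundary expansion up to an additive loss of~$1$ (which is why Theorem~\ref{thm:fphp} is phrased with $\delta|I|-1$ rather than $\delta|I|$). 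Without this step, or some other explicit construction, the ``efficiently constructible'' clause of the theorem is not established.
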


\begin{proof}
  We need to build a family of bipartite graphs
  $\{\bipartstd_{n}\}_{n\in \N}$ with the properties needed to apply
  \refth{thm:fphp} and \refpr{pr:graph_construction}. This would yield
  a family of graphs $\{\graphstd_{n}\}_{n\in \N}$ as in the
  theorem statement.
  Indeed any sublinear degree refutation for $\numcolours$-colouring of
  $\graphstd_{n}$ would imply, by \reflem{lem:reduction}, a sublinear
  degree refutation for the functional pigeonhole principle for
  $\bipartstd_{n}$, but this is impossible by a construction
  of~$\bipartstd_{n}$ that triggers the lower bound in \refth{thm:fphp}.

  For every $\numcolors \geq 3$ the starting point of our construction
  is the paper by Alon and Capalbo~\cite{AlonCapalbo2002Explicit},
  which provide universal constants $\alpha, \delta$ and simple graphs
  $\{H_{n}\}_{n\in\N}$ of degree at most $\numcolors$ so that $H_{n}$
  has $n$ vertices and for every set $X$ of vertices in $H_{n}$, the
  boundary of $X$ in $V(H_{n})\setminus X$ has size at least
  $\delta|X|$.
  Consider the bipartite graph obtained taking two copies of the
  vertices of $H_{n}$, one identified with the left side and one with
  the right side, and adding an edge from the left copy of $u$ to the
  right copy of $v$ for every edge $\{u,v\} \in E(H_{n})$. Then take
  an arbitrary vertex $\hat{u} \in V(H_{n})$ and remove the
  corresponding right copy from the graph.
  The resulting graph, that we define to be $\bipartstd_{n}$, has $n$
  vertices on the left side, $n-1$ vertices on the right side and
  degree at most $\numcolors$ on both sides. Moreover for every set
  $I$ of left vertices of size at most $\alpha n$ the boundary of $I$
  consists in the right copies of the corresponding boundary in
  $H_{n}$, minus vertex $\hat{u}$. Therefore $\bipartstd_{n}$
  satisfies the properties needed to apply \refth{thm:fphp}.
\end{proof}

\ifthenelse{\boolean{conferenceversion}}
{\section{Short Proofs
  for $k$-Colouring Instances
  in Cutting Planes}}
{\section{Polynomial-Length Proofs
  for $k$-Colouring Instances
  in Cutting Planes}}
\label{sec:cp-refutation}

Theorem~\ref{thm:worstcase} tells us that there are
non-\kcolourable graphs~$\graphstd_{n}$ for which it is impossible for
polynomial calculus to certify non-\kcolourability efficiently.
As is clear from our reduction, the \kcolouring formulas for these
graphs  are essentially obfuscated instances of the functional
pigeonhole principle. 

It is well-known that cutting planes can easily prove that pigeonhole
principle formulas are unsatisfiable by just counting the number of
pigeons and holes and deduce that the pigeons are too many to fit in
the holes~\cite{CCT87ComplexityCP}. 
As we show in this section, the instances of
$\numcolours$-colouring obtained via the reduction from FPHP also have
short cutting planes refutations.  What these refutations do is
essentially to ``de-obfuscate'' the \kcolouring formulas to recover
the original functional pigeonhole principle instances, which can then
be efficiently refuted.

We are going to describe our cutting planes refutation as a decision
tree such that at every leaf we have a cutting planes
refutation of the formula restricted by the partial assignment defined
by the tree branch reaching that leaf. These refutations of the
restricted versions of the formula can then be combined to yield a
refutation of the original, unrestricted formula as stated in 
\reflem{lem:cp_weakening}
and
\refpr{pr:refutation-brute-force}.
\ifthenelse{\boolean{conferenceversion}}
{The proofs of these statements are fairly routine and we omit them in
  this conference version of the paper.}
{The proofs of these statements are fairly routine, but we include them
  below for completeness.}

We recall that as discussed in \refsec{sec:preliminaries} we will use
$  \sum_{i} \lincoeff[i]\varx_{i}  \leq \linconst $ 
as an alias for
\mbox{$  \sum_{i} - \lincoeff[i]\varx_{i}  \geq -\linconst $}
and 
$  \sum_{i} \lincoeff[i]\varx_{i}  = \linconst $ 
as an alias for the combination of
$  \sum_{i} \lincoeff[i]\varx_{i}  \leq \linconst $ 
and
$  \sum_{i} \lincoeff[i]\varx_{i}  \geq \linconst $.
In particular, we will frequently write $\varx=b$ for some
variable~$\varx$ and $b \in \set{0,1}$ as a shorthand for the pair of
inequalities $\varx \leq b$ and $-\varx \leq -b$.

\begin{lemma}
  \label{lem:cp_weakening}
  Let $b \in \{0,1\}$ and
  suppose that there exists a cutting planes derivation 
  $
  (\linaux[1], \ldots, \linaux[\cplength])
  $
  in length~$\cplength$
  of the inequality 
  $
  \sum_{i} \lincoeff[i]\varx_{i} \leq \linconst
  $ 
  from the system of inequalities $\linset{S} \cup \{\varx=b\}$.  
  Then for some $K \in \N$ there is a CP derivation in length
  $\bigoh{\cplength}$
  of the  inequality
  \begin{equation}
    \label{eq:desired_inequality}
    (-1)^{1-b} K \cdot (\varx -b) + \sum_{i}
    \lincoeff[i]\varx_{i} \leq \linconst 
  \end{equation}
  from~$\linset{S}$.
\end{lemma}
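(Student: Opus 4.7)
The plan is to prove the lemma by induction on the length~$\cplength$ of the given cutting planes derivation from $\linset{S} \cup \{\varx = b\}$, maintaining the invariant that for every inequality $\linaux[t]$ appearing in that derivation there exist a natural number~$K_{t}$ and a cutting planes derivation from~$\linset{S}$ of the \emph{shifted inequality} $(-1)^{1-b} K_{t}(\varx - b) + \linaux[t]$ whose length increases by only $\bigoh{1}$ per original step. Applied to the final inequality $\linaux[\cplength] = \sum_{i} \lincoeff[i]\varx_{i} \leq \linconst$, this will yield the desired derivation with $K = K_{\cplength}$.

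The crucial observation is that the \emph{slack inequality} $(-1)^{1-b}(\varx - b) \leq 0$, henceforth denoted $I_{b}$, is a disguised variable axiom: for $b = 0$ it reads $-\varx \leq 0$, and for $b = 1$ it reads $\varx \leq 1$. Hence arbitrary non-negative integer multiples of $I_{b}$ can be produced from one variable axiom followed by a single scalar multiplication, and then summed into any existing inequality, without invalidating it (the added term is $\leq 0$ over Boolean inputs and vanishes precisely when $\varx = b$). For initial inequalities drawn from~$\linset{S}$ and for variable axioms we take $K_{t} = 0$; and for the two members of $\{\varx = b\}$ one coincides with a variable axiom (so $K_{t} = 0$ suffices), while the other, after a suitable small choice of~$K_{t}$, coincides either with a variable axiom or with the trivially derivable inequality $0 \leq 0$.

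The sum and scalar multiplication rules of the original derivation are simulated directly, by taking $K_{t} = K_{i} + K_{j}$ and $K_{t} = \lindiv K_{i}$ respectively and applying the same rule to the shifted premises (one extra step per original step). The hardest case is the division rule $\linaux[t] = \lceil \linaux[i] / \lindiv \rceil$: although $\lindiv$ divides every coefficient of $\linaux[i]$ by hypothesis, it may fail to divide the extra shift coefficient~$K_{i}$, breaking the divisibility precondition on the shifted premise. The fix is to pad the shifted premise by adding $m := (-K_{i}) \bmod \lindiv \in \{0, 1, \ldots, \lindiv - 1\}$ copies of $I_{b}$; this is achieved in three additional steps (derive $I_{b}$ from a variable axiom, scalar-multiply by~$m$, sum into the shifted premise) and raises the effective shift from $K_{i}$ to $K_{i} + m$, which is now divisible by~$\lindiv$. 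The division rule can then be applied to the padded inequality, producing the shifted version of $\linaux[t]$ with $K_{t} = (K_{i}+m)/\lindiv$, at an overall cost of $\bigoh{1}$ extra steps per original step. Summing across the~$\cplength$ steps of the original derivation yields a simulation of length $\bigoh{\cplength}$ from~$\linset{S}$, completing the proof.
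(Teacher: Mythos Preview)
Your proof is correct and follows essentially the same approach as the paper's own proof: a forward induction over the derivation that maintains, for each line~$\linaux[t]$, a shifted inequality $(-1)^{1-b}K_t(\varx-b)+\linaux[t]$ derivable from~$\linset{S}$, with the same case analysis (axioms with $K_t=0$ or $K_t=1$, sum with $K_t=K_i+K_j$, scalar multiplication with $K_t=\lindiv K_i$, and division handled by padding the shift up to the next multiple of~$\lindiv$ using the slack inequality~$I_b$). The only cosmetic differences are that the paper treats only $b=1$ explicitly and handles the spurious line $0\leq 0$ by postprocessing rather than by calling it trivially derivable.
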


\ifthenelse{\boolean{conferenceversion}}
{}
{
\begin{proof}
  We only consider the case of $b=1$, since the case of $b=0$ is
  essentially the same.
  Observe that $\varx \leq 1$ is a boolean axiom, 
  and so
  the only axiom lost when passing from $\linset{S} \cup
  \{\varx=b\}$ to $\linset{S}$ is $ - \varx \leq - 1 $.
  The proof is by forward induction over the derivation. 
  We use the notation
  $
  \linaux[t] 
  = \sum_{i} \lincoeff[i]^{(t)}\varx_{i} \leq
  \linconst^{(t)}
  $
  for the linear inequalities
  $\linaux[t]$,  $t \in [\cplength]$,
  in the original derivation.
  We show how to derive 
  the inequalities
  \begin{equation}
    \label{eq:cp_weakening_generic}
    K^{(t)} \cdot (\varx-1) + \sum_{i} \lincoeff[i]^{(t)}\varx_{i}
  \leq \linconst^{(t)}
  \end{equation}
  for $t \in [\cplength]$
  with a constant number of rule
  applications per step, assuming that the 
  preceding inequalities have already been derived.
  We proceed by cases depending on which rule
  was
  used to derive $\linaux[t]$ in the original derivation.
  
  If $\linaux[t]$ is the axiom $ -\varx \leq -1 $ we can substitute it
  with axiom $ 0 \leq 0$, which can be
  written 
  as
  $ (\varx - 1 ) - \varx \leq -1$. 
  Notice that, technically speaking, $0 \leq 0$ 
  is not an axiom of the cutting planes proof system, 
  but it is convenient to allows such lines in our derivation and
  remove them in a final postprocessing step.
  If $\linaux[t]$ is a variable axiom $\varx \geq 0$ or an initial
  axiom $\lin[j] \in \linset{S}$, then it is already on the
  form~\eqref{eq:cp_weakening_generic} with $K^{(t)}=0$.

  If $\linaux[t]$ is derived as a sum
  $\linaux[t] = \linaux[t_{1}]+\linaux[t_{2}]$ 
  for some
  $1\leq t_{1} < t_{2} < t$,
  then by 
  the
  induction hypothesis we have already derived 
  \begin{equation}
    K^{(t_{1})}(\varx-1) + \sum_{i} \lincoeff[i]^{(t_{1})} \varx_{i}
    \leq \linconst^{(t_{1})}
  \end{equation}
  and
  \begin{equation}
    K^{(t_{2})}(\varx-1) + \sum_{i} \lincoeff[i]^{(t_{2})} \varx_{i}
    \leq \linconst^{(t_{2})}
  \end{equation}
  and the sum of these two inequalities is already on the
  form~\eqref{eq:cp_weakening_generic} with
  $K^{(t)} = K^{(t_{1})} + K^{(t_{2})}$.
  
  If $\linaux[t]$ is derived 
  by multiplication
  $\linaux[t] = \alpha\linaux[t']$ 
  for some $1\leq t' < t$, then
  then by 
  the
  induction hypothesis we have already derived 
  \begin{equation}
    K^{(t')}(\varx-1) + \sum_{i} \lincoeff[i]^{(t')} \varx_{i} 
    \leq \linconst^{(t')}
    \eqcomma
  \end{equation}
  and so 
  $\linaux[t] = \alpha\linaux[t']$ 
  is on the
  form~\eqref{eq:cp_weakening_generic} with
  $K^{(t)} = \alpha K^{(t')}$.
  
  If $\linaux[t]$ is obtained by the application of the division rule
  to some previously derived inequality~$\linaux[t']$, then
  $\linaux[t]$ 
  has the form
  \begin{equation}
    \label{eq:legal_div}
    \sum_{i} \frac{\lincoeff[i]^{(t')}}{\lindiv}\varx_{i}  \leq
    \left\lfloor
      \frac{\linconst^{(t')}}{\lindiv} 
    \right\rfloor
    \eqperiod
  \end{equation}
  By
  the
  induction hypothesis
  we have already derived
  \begin{equation}
    \label{eq:almost_div}
    K^{(t')}(\varx - 1) + \sum_{i} \lincoeff[i]^{(t')}\varx_{i}  
    \leq \linconst^{(t')}
  \end{equation}
  and we want to divide this inequality by~$\lindiv$. In order to do
  so, however, we need to ensure that all coefficients
  in~\refeq{eq:almost_div} are divisible by~$\lindiv$.
  Since by assumption the  application of the
  division rule  in~\eqref{eq:legal_div} was legal 
  we have that
  $\lindiv$ divides all of $\lincoeff[1], \ldots, \lincoeff[n]$.
  Choose the smallest~$K'$ divisible by~$\lindiv$ such that
  $K' - K^{t'} \geq 0$,
  multiply
  $\varx \leq 1$,
  which we can also write as 
  $\varx - 1 \leq 0$, 
  by
  $K' - K^{t'}$
  and then add to~\eqref{eq:almost_div} to obtain
  \begin{equation}
    \label{eq:ready_for_div}
    K'(\varx - 1) + \sum_{i} \lincoeff[i]^{(t')}\varx_{i}  
    \leq \linconst^{(t')}
    \eqperiod
  \end{equation}
  In order to apply division we have to collect all constant terms on
  the right-hand side,
  meaning that we rewrite
  \refeq{eq:ready_for_div}
  as
  \begin{equation}
    K'    \varx  + \sum_{i} \lincoeff[i]^{(t')}\varx_{i}  
    \leq \linconst^{(t')} +     K'
    \eqcomma
  \end{equation}
  and division now yields
  \begin{equation}
    \label{eq:final_div}
    \frac{K'}{\lindiv}\varx 
    + 
    \sum_{i} \frac{\lincoeff[i]^{(t')}}{\lindiv}\varx_{i}  
    \leq 
    \left\lfloor \frac{\linconst^{(t')} + K'}{\lindiv} \right\rfloor
  \end{equation}
  which we can write as
  \begin{equation}
    \label{eq:final_div_right_form}
    \frac{K'}{\lindiv} ( \varx - 1)
    + 
    \sum_{i} \frac{\lincoeff[i]^{(t')}}{\lindiv}\varx_{i}  
    \leq 
    \left\lfloor \frac{\linconst^{(t')} + K'}{\lindiv} \right\rfloor
    = 
    \left\lfloor 
      \frac{\linconst^{(t')}}{\lindiv}
    \right\rfloor 
  \end{equation}
  since $\lindiv$ divides~$K'$.
  The inequality~\refeq{eq:final_div_right_form} is on the
  form~\eqref{eq:cp_weakening_generic}, and so we are done with our
  analysis of the division step.

  By the induction principle we
  obtain a derivation in lenght~$\bigoh{\cplength}$
  of~\refeq{eq:desired_inequality}, as claimed. As the last step we
  remove all occurrences of lines $0 \leq 0$. It is clear that any
  addition of such an inequality to another inequality can simply be ignored.
\end{proof}

We next show how \reflem{lem:cp_weakening} can be used to piece
together refutations of restricted versions of a \kcolouring formula
to one refutation of the unrestricted formula.
}

\begin{proposition}
  \label{pr:refutation-brute-force}
  Let $\graphstd$ be a graph and
  $\numcolours \geq 2$ be a positive integer, 
  and 
  let $\linset{S}$ be the set of inequalities
  \mbox{\refeq{eq:colouring_defined_cp}--\refeq{eq:colouring_edges_cp}}
  for $\graphstd$ and~$\numcolours$.
  If for a fixed set of vertices
  $u_{1}, u_{2}, \ldots, u_{\ell}$ in~$\graphstd$
  and   every choice of colours
  $(\cpcol_{1}, \cpcol_{2}, \ldots, \cpcol_{\ell}) \in [\numcolours]^{\ell}$
  for these vertices there is a CP refutation in length 
  at most~$\cplength$ 
  of
  the set of inequalities
  $
  \linset{S} \cup 
  \set{\varx_{u_{1},\cpcol_{1}}=1, \varx_{u_{2},\cpcol_{2}}=1,
    \ldots, \varx_{u_{\ell},\cpcol_{\ell}}=1}
  $, 
  then there is a CP refutation of~$\linset{S}$ 
  in \mbox{length $\numcolours^{\bigoh{\ell}} \cdot \cplength$}.
\end{proposition}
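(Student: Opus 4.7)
The plan is to argue by induction on $\ell$, using Lemma \ref{lem:cp_weakening} to peel off the fixed vertices one at a time. The base case $\ell=0$ is immediate, since the hypothesis directly provides a refutation of $\linset{S}$ of length~$\cplength$.

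For the inductive step, I would fix an arbitrary choice $(\cpcol_{1}, \ldots, \cpcol_{\ell-1}) \in [\numcolours]^{\ell-1}$ and, for each $\cpcol_{\ell} \in [\numcolours]$, view the given length-$\cplength$ refutation of $\linset{S} \cup \{\varx_{u_{1},\cpcol_{1}}=1, \ldots, \varx_{u_{\ell},\cpcol_{\ell}}=1\}$ as a derivation of the contradictory inequality $0 \leq -1$. Applying Lemma \ref{lem:cp_weakening} with $b=1$ to strip off the assumption $\varx_{u_{\ell},\cpcol_{\ell}}=1$ yields a derivation from $\linset{S} \cup \{\varx_{u_{1},\cpcol_{1}}=1, \ldots, \varx_{u_{\ell-1},\cpcol_{\ell-1}}=1\}$ of an inequality of the form $K_{\cpcol_{\ell}} (\varx_{u_{\ell},\cpcol_{\ell}} - 1) \leq -1$ in length $\bigoh{\cplength}$. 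Rewriting this as $K_{\cpcol_{\ell}} \varx_{u_{\ell},\cpcol_{\ell}} \leq K_{\cpcol_{\ell}} - 1$ and applying the division rule with divisor $K_{\cpcol_{\ell}}$, I would derive $\varx_{u_{\ell},\cpcol_{\ell}} \leq 0$ in $\bigoh{1}$ additional steps, using that $\lfloor (K_{\cpcol_{\ell}}-1)/K_{\cpcol_{\ell}} \rfloor = 0$. Summing these $\numcolours$ inequalities over $\cpcol_{\ell} \in [\numcolours]$ and combining with the axiom $\sum_{\cpcol_{\ell}=1}^{\numcolours} \varx_{u_{\ell},\cpcol_{\ell}} \geq 1$ from \refeq{eq:colouring_defined_cp} produces the contradiction $0 \leq -1$. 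This gives a refutation of $\linset{S} \cup \{\varx_{u_{1},\cpcol_{1}}=1, \ldots, \varx_{u_{\ell-1},\cpcol_{\ell-1}}=1\}$ of length $\bigoh{\numcolours \cplength}$, uniformly in $(\cpcol_{1}, \ldots, \cpcol_{\ell-1})$, so the induction hypothesis applied with this new length bound delivers a refutation of $\linset{S}$ of length $\numcolours^{\bigoh{\ell}} \cdot \cplength$ as claimed.

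The only mild subtlety I foresee is handling the edge case $K_{\cpcol_{\ell}} = 0$, which arises when the original branch refutation never actually used the assumption $\varx_{u_{\ell},\cpcol_{\ell}} = 1$; in that case Lemma \ref{lem:cp_weakening} already outputs a refutation of $\linset{S} \cup \{\varx_{u_{1},\cpcol_{1}}=1, \ldots, \varx_{u_{\ell-1},\cpcol_{\ell-1}}=1\}$ directly, and no summation over $\cpcol_{\ell}$ is needed. Aside from this tiny case distinction and the standard verification that the division rule is applied legally, the entire construction reduces to routine bookkeeping on derivation lengths, so I expect no further obstacles.
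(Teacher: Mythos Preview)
Your proposal is correct and follows essentially the same approach as the paper: induction on~$\ell$, using Lemma~\ref{lem:cp_weakening} to derive $K_{\cpcol}(\varx_{u_\ell,\cpcol}-1)\leq -1$ for each colour, dividing to obtain $\varx_{u_\ell,\cpcol}\leq 0$, and summing against axiom~\eqref{eq:colouring_defined_cp} to obtain a contradiction. Your explicit handling of the edge case $K_{\cpcol_\ell}=0$ is in fact a small improvement over the paper's write-up, which glosses over this point.
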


\ifthenelse{\boolean{conferenceversion}}
{}
{

\begin{proof}
  We prove the claim by induction on $\ell$. If $\ell=0$ then the
  statement is vacuous.
  For 
  $\ell > 0$
  we assume that we can derive $1 \leq 0$ in
  length
  at most $\cplength$ from
  $
  \linset{S} \cup 
  \set{\varx_{u_{1},\cpcol_{1}}=1, \varx_{u_{2},\cpcol_{2}}=1,
    \ldots, \varx_{u_{\ell},\cpcol_{\ell}}=1,  \varx_{u,\cpcol}=1}
  $
  for a fixed vertex $u$ and every $\cpcol \in [k]$.

  For each  $\cpcol \in [\numcolours]$ we use
  \reflem{lem:cp_weakening} to construct a 
  CP derivation 
  of the inequality
  $K_{\cpcol}(\varx_{u,\cpcol} - 1) + 1 \leq 0$ 
  from
  $ 
  \linset{S} \union
  \set{\varx_{u_{1},\cpcol_{1}}=1, \varx_{u_{2},\cpcol_{2}}=1,
    \ldots, \varx_{u_{\ell},\cpcol_{\ell}}=1} 
  $
  in
  length~$\cplengthconst\cplength$, where $\cplengthconst$ is a
  universal constant.
  By dividing each such inequality by 
  $K_{\cpcol}$ we get $\varx_{u,\cpcol} \leq 0$ 
  for all $\cpcol \in [\numcolors]$.
  By summing all 
  these inequalities with the initial axiom 
  $ \sum_{\cpcol} \varx_{u,\cpcol} \geq 1$ 
  we 
  obtain  
  $0 \geq 1$.
  The total length of 
  this
  refutation is
  $\numcolours \cplengthconst \cplength + 
  2 \numcolours
  $.
  The proposition 
  follows by the 
  induction principle.
\end{proof}
}

We can now state the main result of this section, namely that the hard
\kcolouring instances for polynomial calculus constructed in
\refsec{sec:pc-lower-bound} are easy for cutting planes.

\begin{proposition}
  Let $\bipartstd$ be a left-regular bipartite graph~$\bipartstd$ with
  left degree~$ \numcolours$
  and bounded right degree~$\bigoh{\numcolours}$, 
  and  consider 
  the graph $\graphstd=\graphstd(B)$ in 
  \refpr{pr:graph_construction}.
  Then if there is no complete matching of the left-hand side
  of~$\bipartstd$ into the right-hand side, then 
  the set of inequalities
  \mbox{\refeq{eq:colouring_defined_cp}--\refeq{eq:colouring_edges_cp}}
  encoding the
  $\numcolours$-colouring problem on~$\graphstd$ has
  a cutting planes refutation 
  in length
  $
  {\numcolours}^{\bigoh{\numcolours}} \cdot 
  \setsize{V(\bipartstd)}^{\bigoh{1}}
  $.
\end{proposition}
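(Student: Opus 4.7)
The plan is to use \refpr{pr:refutation-brute-force} with $u_{1}, \ldots, u_{\numcolours}$ being the vertices $\clrvertex{1}, \ldots, \clrvertex{\numcolours}$ of the first $\numcolours$-clique in the pre-colouring gadget of \reffig{fig:gadgetseq}. Branching on their colours gives at most $\numcolours^{\numcolours}$ partial assignments to dispatch, and it suffices to show that each branch admits a cutting planes refutation of length $\polybound{\setsize{V(\bipartstd)}}$ from the restricted system.

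Fix one such branch. If the branch assigns two vertices in the clique the same colour, the clique edge axioms \refeq{eq:colouring_edges_cp} yield $0 \geq 1$ in constant length, so we may assume the assignment is a permutation, and after renaming colours we may assume $\clrvertex{j}$ has colour~$j$. First I would propagate this pre-colouring along the chain of overlapping $\numcolours$-cliques in the pre-colouring gadget: if $\clrvertex{i},\clrvertex{i+1},\ldots,\clrvertex{i+\numcolours-1}$ form a clique in which $\numcolours-1$ of the vertices have their colours already fixed to the $\numcolours-1$ colours distinct from~$j$ (where $i \equiv j \pmod{\numcolours}$), then the axioms \refeq{eq:colouring_defined_cp}--\refeq{eq:colouring_edges_cp} force $\varx_{\clrvertex{i+\numcolours-1},j} = 1$ by a constant-length CP derivation. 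Iterating this argument gives the entire pre-colouring of all $\clrvertex{i}$ in polynomial length.

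Next I would use these forced values to unpack each injectivity gadget $\diffcolgadgetstd$. In the setting of \refclaim{clm:gadget}, once the pre-coloured vertices adjacent to the clique vertices $\lcvertex{2},\ldots,\lcvertex{\numcolours-1}$ and $\rcvertex{2},\ldots,\rcvertex{\numcolours-1}$ have their colours fixed, CP can derive $\varx_{\pigeonindex,\colstd} + \varx_{\pigeonindex',\colstd'} \leq 1$ in constant length using only the clique constraints of the gadget together with the edge (or identification) connecting $\lcvertex{\numcolours}$ and $\rcvertex{\numcolours}$. Carrying this out for every gadget recovers, up to variable renaming $\varx_{\pigeonindex,\colstd} \leftrightarrow p_{\pigeonindex,\holeindex}$, precisely the hole-injectivity inequalities of the functional pigeonhole principle on $\bipartstd$, together with the inequalities \refeq{eq:fphp_defined} inherited from \refeq{eq:colouring_defined_cp} and \refeq{eq:colouring_unique_cp} applied to the pigeon vertices.

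Finally I would refute this restricted FPHP instance in polynomial length by the standard Hall-type counting argument in cutting planes. Since there is no complete matching of the left side of $\bipartstd$ into the right side, there is a set $S \subseteq \pigeonset$ with $\Setsize{\Union_{\pigeonindex \in S}\holesforpigeon{\pigeonindex}} < \setsize{S}$. Summing \refeq{eq:fphp_defined} over $\pigeonindex \in S$ gives $\sum_{\holeindex \in N(S)} \sum_{\pigeonindex \in S,\, \holeindex \in \holesforpigeon{\pigeonindex}} p_{\pigeonindex,\holeindex} \geq \setsize{S}$, while summing the injectivity inequalities $\sum_{\pigeonindex : \holeindex \in \holesforpigeon{\pigeonindex}} p_{\pigeonindex,\holeindex} \leq 1$ (derivable from \refeq{eq:fphp_holes} and the $0/1$-axioms in length polynomial in the right degree) over $\holeindex \in N(S)$ gives an upper bound of $\setsize{N(S)} < \setsize{S}$, a contradiction. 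The total length per branch is $\polybound{\setsize{V(\bipartstd)}}$, and multiplying by the $\numcolours^{\numcolours}$ branches yields the stated bound $\numcolours^{\bigoh{\numcolours}} \cdot \setsize{V(\bipartstd)}^{\bigoh{1}}$. The main bookkeeping obstacle is the second step, where one must carefully track how the pre-colouring propagates through the gadgets and verify that each gadget-local derivation of a pigeon-collision inequality is indeed constant-length; the rest is routine.
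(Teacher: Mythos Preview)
Your plan follows essentially the same structure as the paper's proof: branch on the colours of $\clrvertex{1},\ldots,\clrvertex{\numcolours}$ via \refpr{pr:refutation-brute-force}, dismiss non-permutation branches trivially, propagate the pre-colouring along the chain of overlapping cliques, extract the injectivity inequalities $\varx_{\pigeonindex,\colstd}+\varx_{\pigeonindex',\colstd'}\leq 1$ from each gadget, and then refute the resulting graph-PHP instance.

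The one place where you diverge from the paper is the gadget step. You propose an explicit CP derivation of the collision inequality from the gadget constraints and call it ``constant length''; this is imprecise (the gadget has $\bigoh{\numcolours}$ vertices, so the length must depend on~$\numcolours$) and you correctly flag it as the step requiring real work. The paper sidesteps this entirely: it observes that the gadget inequalities together with $\varx_{\pigeonindex,\colstd}=1$ and $\varx_{\pigeonindex',\colstd'}=1$ form an unsatisfiable system of size~$\bigoh{\numcolours}$, invokes refutational completeness of CP to get a refutation of length $\exp(\bigoh{\numcolours})$, and then applies \reflem{lem:cp_weakening} twice to turn that refutation into a derivation of $K_{1}(\varx_{\pigeonindex,\colstd}-1)+K_{2}(\varx_{\pigeonindex',\colstd'}-1)\leq -1$, from which division yields the desired inequality. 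This is blunter but avoids any gadget-specific CP reasoning, and the $\exp(\bigoh{\numcolours})$ cost is absorbed into the~$\numcolours^{\bigoh{\numcolours}}$ factor anyway. For the final PHP step the paper simply cites the $\bigoh{\setsize{V(\bipartstd)}^{3}}$ refutation from~\cite{CCT87ComplexityCP} rather than redoing the Hall-type counting, but your version is fine too.
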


\begin{proof}[Proof sketch]
  Consider the first $\numcolours$~vertices
  $\clrvertex{1}, \ldots, \clrvertex{\numcolours}$ 
  in the pre-colouring gadget in~$\graphstd$
  as depicted in
  \reffig{fig:gadgetseq}, which form a \mbox{$\numcolours$-clique}.
  For every partial colouring
  $(\cpcol_{1}, \cpcol_{2},\ldots, \cpcol_{\numcolours}) \in
  [\numcolours]^{\numcolours}$
  of this  \mbox{$\numcolours$-clique}
  we build a cutting planes refutation of 
  \begin{equation}
    \label{eq:colouring-instance-to-be-refuted}
    \linset{S} \cup 
    \set{\varx_{\clrvertex{1},\cpcol_{1}}=1, \varx_{\clrvertex{2},\cpcol_{2}}=1,
      \ldots, \varx_{\clrvertex{\numcolours},\cpcol_{\numcolours}}=1 } 
    \eqperiod
  \end{equation}
  The result then follows by combining all of these refutations using
  \refpr{pr:refutation-brute-force}.

  Fix a choice of colours
  $(\cpcol_{1}, \cpcol_{2},\ldots, \cpcol_{\numcolours}) \in
  [\numcolours]^{\numcolours}$.
  Notice that if some colour occurs twice in this tuple, then we
  can derive contradiction in 
  length~$\bigoh{1}$ 
  from~\refeq{eq:colouring-instance-to-be-refuted}
  since one of the edge axioms~\refeq{eq:colouring_edges_cp} is violated.
  Suppose therefore that
  $(\cpcol_{1}, \cpcol_{2},\ldots, \cpcol_{\numcolours})$ 
  is a permutation of
  $[\numcolours]$. We will construct a CP refutation  
  of~\refeq{eq:colouring-instance-to-be-refuted}
  in length
  $
  {\numcolours}^{\bigoh{\numcolours}} \cdot 
  \setsize{V(\bipartstd)}^{\bigoh{1}}
  $.

  The system of inequalities $\linset{S}$ is symmetric which respect
  to the permutation of the colour indices,
  so \wolog we focus on giving a refutation for 
  \begin{equation}
    \label{eq:colouring-instance-symmetry}
    \linset{S} \cup 
    \set{\varx_{\clrvertex{1},1}=1, \varx_{\clrvertex{2},2}=1, \ldots,
      \varx_{\clrvertex{\numcolours},\numcolours}=1 } 
    \eqperiod
  \end{equation}
  The equations
  $
  \set{\varx_{\clrvertex{1},1}=1, 
    \varx_{\clrvertex{2},2}=1, 
    \ldots,
    \varx_{\clrvertex{\numcolours},\numcolours}=1 
  }
  $ 
  taken together with~$\linset{S}$  allow us to efficiently  infer 
  $\varx_{\clrvertex{i},i \bmod{\numcolours}}=1$ 
  for all the vertices $\clrvertex{i}$, $i \in [\precolgadgetsize]$,
  in the gadget in \reffig{fig:gadgetseq}
  (where we recall from 
  \refsec{sec:preliminaries}
  that we identify colours~$0$ and~$\numcolours$ when convenient).
  The resulting set of equalities and inequalities
  $
  \linset{S} \union 
  \setdescr{\varx_{\clrvertex{i},i \bmod{\numcolours}}=1}
  {i \in [\precolgadgetsize]}
  $
  is essentially an encoding of the $k$-colouring problem for the
  partially colored graph $\alldiffcolgadgets$ 
  in \reflem{lem:completion}
  consisting of the gadgets in
  \reffig{fig:gadget}.
  Indeed, since the partial assignment 
  $\set{\varx_{\clrvertex{1},1}=1, \varx_{\clrvertex{2},2}=1, \ldots,
    \varx_{\clrvertex{\numcolours},\numcolours}=1 }$ 
  forces the colours of all vertices
  $\clrvertex{i}$, $i \in [\precolgadgetsize]$,
  in \reffig{fig:gadgetseq},
  this gives us   back the pre-coloured vertices in the gadgets in
  \reffig{fig:gadget}.

  As argued in (the proof of)  \reflem{lem:completion},
  $\alldiffcolgadgets$ is the union of at most 
  $\Bigoh{\numcolours^{2} \setsize{V(\bipartstd)}}$
  injectivity constraint gadgets 
  $\diffcolgadget{i}{i'}{\cpcol}{\cpcol'}$ 
  that forbid
  pigeons~$i$ and~$i'$ taking their $\cpcol$th and $\cpcol'$th
  edges, respectively,  colliding in some hole~$j$.
  If we introduce the alias
  $p_{i,j}$
  for
  $x_{i,\cpcol}$,
  where $j$~is the hole to which the $\cpcol$th edge from pigeon~$i$
  leads, then our goal can be described as deriving the 
  pigeonhole axiom
  $
  p_{i,j} +   p_{i',j} 
  = 
  \varx_{i,\cpcol} + \varx_{i',\cpcol'} 
  \leq 
  1
  $
  from the set of inequalities of
  the corresponding gadget 
  $\diffcolgadget{u}{v}{\cpcol}{\cpcol'}$.
  We will see shortly how to do so in length
  $\Bigoh{\numcolours^{\bigoh{\numcolours}}}$.
  Once we extract these pigeonhole inequalities we observe that the
  collection of these inequalities together with the
  inequalities~\eqref{eq:colouring_defined_cp} form a cutting plane
  encoding 
  \begin{subequations}
    \begin{align}
      \label{eq:php_cp_pigeon}
      \sum_{\holeindex \in \holesforpigeon{\pigeonindex}}
      p_{\pigeonindex,\holeindex} 
      &\geq 1
      &&
         \text{$\pigeonindex \in \pigeonset$,}
      \\
      \label{eq:php_cp_hole}
      p_{\pigeonindex ,\holeindex} + p_{\pigeonindex ',\holeindex} 
      &\leq 0
      &&
         \text{$\pigeonindex \neq  \pigeonindex ' \in \pigeonset$,
         $\holeindex \in \holesforpigeon{\pigeonindex} \intersection
         \holesforpigeon{\pigeonindex'}$.}
    \end{align}
  \end{subequations}
  of the graph pigeonhole principle on the bipartite
  graph~$\bipartstd$ with left-hand side~$\pigeonset$ and right-hand
  side~$\holeset$.  
  Such a system of inequalities has
  a cutting plane refutation in length
  $\Bigoh{{\setsize{V(\bipartstd)}}^{3}}$~\cite{CCT87ComplexityCP}.

  In order to derive 
  $\varx_{i,\cpcol} + \varx_{i',\cpcol'} \leq 1$ 
  we consider the inequalities involving vertices of
  $\diffcolgadget{i}{i'}{\cpcol}{\cpcol'}$
  plus the equations $\varx_{i,\cpcol}=1$ and $\varx_{i,\cpcol'}=1$. 
  By Claim~\ref{clm:gadget}
  this is an unsatisfiable system of inequalities of size
  $\bigoh{\numcolors}$.
  By the refutational completeness of cutting planes, and
  using \reflem{lem:cp_weakening} twice, we obtain a derivation of
  $ K_{1}(\varx_{i,\cpcol} -1) + K_{2}(\varx_{i',\cpcol'} - 1 )\leq -1$
  in length~$\exp(\bigoh{\numcolors})$.
  Adding multiples of axioms on the form
  $\varx - 1 \leq 0$
  we get 
  the
  inequality
  $ K(\varx_{i,\cpcol} -1) + K(\varx_{i',\cpcol'} - 1 ) \leq -1$ 
  for some positive integer~$K$, 
  and division by~$K$ yields $\varx_{i,\cpcol} + \varx_{i',\cpcol'} \leq 1$.

  We have shown
  how to derive contradiction is length
  ${\numcolours}^{\bigoh{\numcolours}} |V(\bipartstd)|^{\bigoh{1}}$ 
  for any  given colouring of the vertices 
  $\clrvertex{1}, \ldots, \clrvertex{\numcolours}$. 
  We take such refutations for all 
  ${\numcolours^{\numcolours}}$~possible ways of assigning
  colours to these vertices and joint them together using
  \refpr{pr:refutation-brute-force}
  into a refutation of the original, unrestricted formula. The
  proposition follows.
\end{proof}

\section{Concluding Remarks}
\label{sec:conclusion}

In this work we exhibit explicitly constructible graphs which are
non-$\numcolours$-colourable but which require large degree in
polynomial calculus to certify this fact for the canonical encoding of
the \kcolouring problem into polynomial equations over
$\set{0,1}$\nobreakdash-valued variables.  This, in turn, implies that
the size of any polynomial calculus proof of
non-$\numcolours$-colourability for these graphs must be exponential
measured in the number of vertices.
    
Our degree lower bound also applies to a slightly different encoding
with primitive $\numcolours$th roots of unity used
in~\cite{DLMM08Hilbert,DLMM11ComputingInfeasibility} to build
\kcolouring algorithms based on Hilbert's Nullstellensatz. These
algorithms construct certificates of non-$\numcolours$-colourability by 
solving linear systems of equations over the coefficients of all
monomials up to a certain degree. 
The current paper yields explicit instances for which this method needs to
consider monomials up to a very large degree, and therefore has to
produce a linear system of exponential size. This answers an open
question raised in, 
for example,
\cite{DLMM08Hilbert,DLMO09ExpressingCombinatorial,DLMM11ComputingInfeasibility,LLO16LowDegreeColorability}.

This leads to an important observation, however. The degree lower
bound applies to both polynomial encodings discussed above, but the
size lower bound only applies to the encoding using
$\set{0,1}$\nobreakdash-valued variables.  It is still conceivable
that proofs of non-$\numcolours$-colourability in the roots-of-unity
encoding can be small although they must have large degree. This
raises the following question.

\begin{openproblem}
  Is there a family of non-$3$-colourable graphs such that any
  polynomial calculus proof of non-$3$-colourability using the roots
  of unity encoding must require large size?
\end{openproblem}

If the answer to the question is positive, then no matter how we
choose the monomials to consider for the linear system construction
in~\cite{DLMM08Hilbert,DLMM11ComputingInfeasibility}, the size of the
system will have to be large.

To further reduce the size of the linear system, the algorithms
in~\cite{DLMM08Hilbert,DLMM11ComputingInfeasibility} make use of the
symmetries in the graphs.  It is a natural question how much such an
approach could help for our non-$\numcolours$-colourable instances.
It seems plausible that if we apply our construction to a randomly
generated bipartite graph with appropriate parameters, then the final
graph will not have many symmetries except for the local symmetries
inside the gadgets. In that case our lower bound might apply for the
improved version of the algorithm as well.

The work in~\cite{AtseriasOchremiak2018ProofComplexity} addresses the
proof complexity of refuting constraint statisfaction problems (CSP)
and show that standard reductions between CSPs preserve hardness, to
some extent, and in particular degree. These reductions are able to
translate between various encodings, and indeed they may not preserve
monomial size.

\jncomment{Perhaps add here that there are also versions of the De
  Loera \etal algorithm with added redudant polynomials or alternative
  versions of the Nullstellensatz. Our lower bounds don't apply to
  these versions (but it is not obvious whether they could be extended
  to work).}

One 
limitation of our result is that our hard graphs are 
very specific, and arguably somewhat artificial. 
For the weaker resolution proof system an average-case exponential
lower bound has been shown for
Erd\H{o}s--Rényi random graphs
$\mathcal{G}(n,p)$ where $p$~is slightly above the threshold value
$p_{k}(n)$ at which the graph becomes highly likely to be 
non-$\numcolours$-colourable~\cite{BCCM05RandomGraph}. It is natural to ask
whether these instances are hard for polynomial calculus too.

\begin{openproblem}
  Consider a random graph sampled according to $\mathcal{G}(n,p)$ with
  $p>p_{k}(n)$, so that the graph is non-$\numcolours$-colourable with
  high probability.  Does polynomial calculus require large degree to
  certify non-\kcolourability of such graphs with high probability?
\end{openproblem}
Some progress on this open problem may come from the recent work
of~\cite{RomeroTuncel2022GraphsNullstellensatz} where they show degree
lower bound in Nullstellensatz for large classes of graphs, relying
just on the girth size.

In this paper, we also show that the graph colouring instances that
are provably hard for polynomial calculus are very easy for the
cutting planes proof system. 
It does not seem very likely that graph colouring would be an 
easy problem for cutting planes, however, and so
it would be interesting to find explicit candidates for hard instances
for cutting planes, even if proving the actual lower bounds may be
very hard.  This question is also interesting for the
Lasserre/Sums-of-Squares proof system. Our instances seem likely to be
easy for Lasserre, since they are based on the hardness of the
pigeonhole principle and this combinatorial principle is easy for
Lasserre.

\begin{openproblem}
  Find candidates for explicit hard instances of non-$3$-colourability
  for cutting planes and for Lasserre/Sums-of-squares proof systems,
  and then prove formally that these instances are indeed hard.
\end{openproblem}

A final, intriguing, observation, which is somewhat orthogonal to the
rest of this discussion, is that even though the graph colouring
instances in our paper are easy for cutting planes, results from the
\emph{Pseudo-Boolean Competition 2016} indicate that they are quite
hard in practice for state-of-the-art pseudo-Boolean
solvers~\cite{PB16resultsDECSMALLINT}. This is even more interesting
considering that the cutting planes refutations that we construct have
small rank (i.e., the maximum number of application of the division
rules along any path in the proof graph is small).

\section*{Acknowledgements}

We are grateful to 
Mladen~Mik\v{s}a
and
Alexander~Razborov
for stimulating discussions and helpful feedback during various stages
of this project.   
We would also like to thank Jan~Elffers for running experiments with
pseudo-Boolean solvers on instances obtained from our reduction from
functional pigeonhole principle formulas to graph colouring,
demonstrating that  these formulas are hard in practice.
Last but not least, a big thanks to the anonymous CCC reviewers, who
helped us catch some typos and bugs that really should not have been
there, and whose suggestions helped improve the exposition
considerably.  

Part of this research  was done while \theauthorML
was at KTH Royal Institute of Technology funded by the
European Research Council (ERC) under the European Union's Seventh Framework
Programme \mbox{(FP7/2007--2013) /} ERC grant agreement no.~279611.
Later work at Universitat Polit\`ecnica de Catalunya for
this project has received funding from the European Research Council
(ERC) under the European
Union's Horizon 2020 research and innovation programme (grant
agreement ERC-2014-CoG 648276 AUTAR).
\TheauthorJN was supported by the European Research Council under the
European Union's Seventh Framework Programme \mbox{(FP7/2007--2013) /}
ERC grant agreement no.~279611, by the Swedish Research Council grants
\mbox{621-2012-5645} and \mbox{2016-00782}, and by the Independent
Research Fund Denmark grant \mbox{9040-00389B}.

\newcommand{\etalchar}[1]{$^{#1}$}

\bibliographystyle{alpha}

\begin{thebibliography}{BCMM05}

\bibitem[ABRW02]{ABRW02SpaceComplexity}
Michael Alekhnovich, Eli {Ben-Sasson}, Alexander~A. Razborov, and Avi
  Wigderson.
\newblock Space complexity in propositional calculus.
\newblock {\em SIAM Journal on Computing}, 31(4):1184\nobreakdash--1211, 2002.
\newblock Preliminary version in \emph{STOC~'00}.

\bibitem[AC02]{AlonCapalbo2002Explicit}
Noga Alon and Michael Capalbo.
\newblock Explicit unique-neighbor expanders.
\newblock In {\em Proceedings of the 43th Annual {IEEE} Symposium on
  Foundations of Computer Science ({FOCS}~'02)}, pages 73\nobreakdash--79.
  IEEE, 2002.

\bibitem[AO18]{AtseriasOchremiak2018ProofComplexity}
Albert Atserias and Joanna Ochremiak.
\newblock Proof complexity meets algebra.
\newblock {\em ACM Transactions on Computational Logic}, 20(1):1--46, Dec 2018.

\bibitem[AR03]{AR03LowerBounds}
Michael Alekhnovich and Alexander~A. Razborov.
\newblock Lower bounds for polynomial calculus: {N}on-binomial case.
\newblock {\em Proceedings of the Steklov Institute of Mathematics},
  242:18\nobreakdash--35, 2003.
\newblock Available at
  \url{http://people.cs.uchicago.edu/~razborov/files/misha.pdf}. Preliminary
  version in \emph{FOCS~'01}.

\bibitem[AT92]{AT92Colorings}
Noga Alon and Michael Tarsi.
\newblock Colorings and orientations of graphs.
\newblock {\em Combinatorica}, 12(2):125\nobreakdash--134, June 1992.

\bibitem[Bay82]{Bayer82Division}
David~Allen Bayer.
\newblock {\em The Division Algorithm and the {H}ilbert Scheme}.
\newblock PhD thesis, Harvard University, Cambridge, MA, USA, June 1982.
\newblock Available at
  \url{https://www.math.columbia.edu/~bayer/papers/Bayer-thesis.pdf}.

\bibitem[BC96]{BC96CuttingPlanes}
Samuel~R. Buss and Peter Clote.
\newblock Cutting planes, connectivity and threshold logic.
\newblock {\em Archive for Mathematical Logic}, 35:33\nobreakdash--63, 1996.

\bibitem[BCMM05]{BCCM05RandomGraph}
Paul Beame, Joseph~C. Culberson, David~G. Mitchell, and Cristopher Moore.
\newblock The resolution complexity of random graph
  $k$\nobreakdash-colorability.
\newblock {\em Discrete Applied Mathematics},
  153(1\nobreakdash-3):25\nobreakdash--47, December 2005.

\bibitem[BE05]{Beigel05Coloring}
Richard Beigel and David Eppstein.
\newblock $3$\nobreakdash-coloring in time ${O}(n^{1.3289})$.
\newblock {\em Journal of Algorithms}, 54(2):168\nobreakdash--204, February
  2005.

\bibitem[BGIP01]{BGIP01LinearGaps}
Samuel~R. Buss, Dima Grigoriev, Russell Impagliazzo, and Toniann Pitassi.
\newblock Linear gaps between degrees for the polynomial calculus modulo
  distinct primes.
\newblock {\em Journal of Computer and System Sciences},
  62(2):267\nobreakdash--289, March 2001.
\newblock Preliminary version in \emph{CCC~'99}.

\bibitem[BI10]{BI10Random}
Eli {Ben-Sasson} and Russell Impagliazzo.
\newblock Random {CNF}'s are hard for the polynomial calculus.
\newblock {\em Computational Complexity}, 19:501\nobreakdash--519, 2010.
\newblock Preliminary version in \emph{FOCS~'99}.

\bibitem[BIK{\etalchar{+}}94]{BIKPP94LowerBounds}
Paul Beame, Russell Impagliazzo, Jan Kraj{\'\i}{\v{c}}ek, Toniann Pitassi, and
  Pavel Pudl{\'a}k.
\newblock Lower bounds on {Hilbert's} {Nullstellensatz} and propositional
  proofs.
\newblock In {\em Proceedings of the 35th Annual {IEEE} Symposium on
  Foundations of Computer Science ({FOCS}~'94)}, pages 794\nobreakdash--806,
  November 1994.

\bibitem[Bla37]{Blake37Thesis}
Archie Blake.
\newblock {\em Canonical Expressions in {B}oolean Algebra}.
\newblock PhD thesis, University of Chicago, 1937.

\bibitem[BS97]{BS97UsingCSP}
Roberto~J. {Bayardo~Jr.} and Robert Schrag.
\newblock Using {CSP} look-back techniques to solve real-world {SAT} instances.
\newblock In {\em Proceedings of the 14th National Conference on Artificial
  Intelligence ({AAAI~'97})}, pages 203\nobreakdash--208, July 1997.

\bibitem[BW01]{BW01ShortProofs}
Eli {Ben-Sasson} and Avi Wigderson.
\newblock Short proofs are narrow---resolution made simple.
\newblock {\em Journal of the ACM}, 48(2):149\nobreakdash--169, March 2001.
\newblock Preliminary version in \emph{STOC~'99}.

\bibitem[CCT87]{CCT87ComplexityCP}
William Cook, Collette~Rene Coullard, and György Turán.
\newblock On the complexity of cutting-plane proofs.
\newblock {\em Discrete Applied Mathematics}, 18(1):25\nobreakdash--38,
  November 1987.

\bibitem[CEI96]{CEI96Groebner}
Matthew Clegg, Jeffery Edmonds, and Russell Impagliazzo.
\newblock Using the {Groebner} basis algorithm to find proofs of
  unsatisfiability.
\newblock In {\em Proceedings of the 28th Annual {ACM} Symposium on Theory of
  Computing ({STOC}~'96)}, pages 174\nobreakdash--183, May 1996.

\bibitem[Chv73]{Chvatal73EdmondPolytopes}
Va{\v{s}}ek Chv{\'a}tal.
\newblock Edmond polytopes and a hierarchy of combinatorial problems.
\newblock {\em Discrete Mathematics}, 4(1):305\nobreakdash--337, 1973.

\bibitem[{De }95]{DeLoera95Grobner}
Jes{\'u}s~A. {De Loera}.
\newblock Gr{\"o}bner bases and graph colorings.
\newblock {\em Beitr{\"a}ge zur Algebra und Geometrie},
  36(1):89\nobreakdash--96, January 1995.
\newblock Available at \url{https://www.emis.de/journals/BAG/vol.36/no.1/}.

\bibitem[DLMM08]{DLMM08Hilbert}
Jesús~A. {De Loera}, Jon Lee, Peter~N. Malkin, and Susan Margulies.
\newblock {H}ilbert's {N}ullstellensatz and an algorithm for proving
  combinatorial infeasibility.
\newblock In {\em Proceedings of the 21st International Symposium on Symbolic
  and Algebraic Computation ({ISSAC}~'08)}, pages 197\nobreakdash--206, July
  2008.

\bibitem[DLMM11]{DLMM11ComputingInfeasibility}
Jesús~A. {De Loera}, Jon Lee, Peter~N. Malkin, and Susan Margulies.
\newblock Computing infeasibility certificates for combinatorial problems
  through {H}ilbert's {N}ullstellensatz.
\newblock {\em Journal of Symbolic Computation}, 46(11):1260\nobreakdash--1283,
  November 2011.

\bibitem[DLMO09]{DLMO09ExpressingCombinatorial}
Jesús~A. {De Loera}, Jon Lee, Susan Margulies, and Shmuel Onn.
\newblock Expressing combinatorial problems by systems of polynomial equations
  and {H}ilbert's {N}ullstellensatz.
\newblock {\em Combinatorics, Probability and Computing},
  18:551\nobreakdash--582, July 2009.

\bibitem[DMP{\etalchar{+}}15]{DMPRRSSS15GraphColouring}
Jes{\'u}s~A. {De Loera}, Susan Margulies, Michael Pernpeintner, Eric Riedl,
  David Rolnick, Gwen Spencer, Despina Stasi, and Jon Swenson.
\newblock Graph-coloring ideals: {N}ullstellensatz certificates, {G}r{\"o}bner
  bases for chordal graphs, and hardness of {G}r{\"o}bner bases.
\newblock In {\em Proceedings of the 40th International Symposium on Symbolic
  and Algebraic Computation (ISSAC~'15)}, pages 133\nobreakdash--140, July
  2015.

\bibitem[GL10a]{GL10Automatizability}
Nicola Galesi and Massimo Lauria.
\newblock On the automatizability of polynomial calculus.
\newblock {\em Theory of Computing Systems}, 47:491\nobreakdash--506, August
  2010.

\bibitem[GL10b]{GL10Optimality}
Nicola Galesi and Massimo Lauria.
\newblock Optimality of size-degree trade-offs for polynomial calculus.
\newblock {\em ACM Transactions on Computational Logic},
  12:4:1\nobreakdash--4:22, November 2010.

\bibitem[Gom63]{Gomory63AlgorithmIntegerSolutions}
Ralph~E. Gomory.
\newblock An algorithm for integer solutions of linear programs.
\newblock In R.L. Graves and P.~Wolfe, editors, {\em Recent Advances in
  Mathematical Programming}, pages 269\nobreakdash--302. McGraw-Hill, New York,
  1963.

\bibitem[Hus15]{Husfeldt15Colouring}
Thore Husfeldt.
\newblock Graph colouring algorithms.
\newblock In Lowell~W. Beineke and Robin~J. Wilson, editors, {\em Topics in
  Chromatic Graph Theory}, Encyclopedia of Mathematics and its Applications,
  chapter~13, pages 277\nobreakdash--303. Cambridge University Press, May 2015.

\bibitem[HW08]{HW08Algebraic}
Christopher~J. Hillar and Troels Windfeldt.
\newblock Algebraic characterization of uniquely vertex colorable graphs.
\newblock {\em Journal of Combinatorial Theory, Series B},
  98(2):400\nobreakdash--414, March 2008.

\bibitem[IPS99]{IPS99LowerBounds}
Russell Impagliazzo, Pavel Pudl{\'a}k, and Ji{\v{r}}\'i Sgall.
\newblock Lower bounds for the polynomial calculus and the {G}r{\"o}bner basis
  algorithm.
\newblock {\em Computational Complexity}, 8(2):127\nobreakdash--144, 1999.

\bibitem[Lan05]{Lang2005Algebra}
Serge Lang.
\newblock {\em Algebra}.
\newblock Springer New York, 2005.

\bibitem[LLO16]{LLO16LowDegreeColorability}
Bo~Li, Benjamin Lowenstein, and Mohamed Omar.
\newblock Low degree {N}ullstellensatz certificates for
  $3$\nobreakdash-colorability.
\newblock {\em The Electronic Journal of Combinatorics}, 23(1), January 2016.

\bibitem[Lov94]{Lovasz94Stable}
László Lovász.
\newblock Stable sets and polynomials.
\newblock {\em Discrete Mathematics},
  124(1\nobreakdash--3):137\nobreakdash--153, January 1994.

\bibitem[LP10]{LP10Sat4j}
Daniel {Le Berre} and Anne Parrain.
\newblock The {Sat4j} library, release 2.2.
\newblock {\em Journal on Satisfiability, Boolean Modeling and Computation},
  7:59\nobreakdash--64, 2010.

\bibitem[Mat74]{Matiyasevich74Criterion}
Yuri~V. Matiyasevich.
\newblock A criterion for vertex colorability of a graph stated in terms of
  edge orientations.
\newblock {\em Diskretnyi Analiz}, 26:65\nobreakdash--71, 1974.
\newblock {E}nglish translation of the Russian original. Available at
  \url{http://logic.pdmi.ras.ru/~yumat/papers/22_paper/}.

\bibitem[Mat04]{Matiyasevich04Algebraic}
Yuri~V. Matiyasevich.
\newblock Some algebraic methods for calculating the number of colorings of a
  graph.
\newblock {\em Journal of Mathematical Sciences},
  121(3):2401\nobreakdash--2408, May 2004.

\bibitem[McD84]{McDiarmid84Colouring}
Colin McDiarmid.
\newblock Colouring random graphs.
\newblock {\em Annals of Operations Research}, 1(3):183\nobreakdash--200,
  October 1984.

\bibitem[MMZ{\etalchar{+}}01]{MMZZM01Engineering}
Matthew~W. Moskewicz, Conor~F. Madigan, Ying Zhao, Lintao Zhang, and Sharad
  Malik.
\newblock Chaff: {E}ngineering an efficient {SAT} solver.
\newblock In {\em Proceedings of the 38th Design Automation Conference
  (DAC~'01)}, pages 530\nobreakdash--535, June 2001.

\bibitem[MN15]{MN15GeneralizedMethodDegree}
Mladen Mik\v{s}a and Jakob Nordström.
\newblock A generalized method for proving polynomial calculus degree lower
  bounds.
\newblock In {\em Proceedings of the 30th Annual Computational Complexity
  Conference ({CCC}~'15)}, volume~33 of {\em Leibniz International Proceedings
  in Informatics (LIPIcs)}, pages 467\nobreakdash--487, June 2015.

\bibitem[Mnu01]{Mnuk01Representing}
Michal Mnuk.
\newblock Representing graph properties by polynomial ideals.
\newblock In {\em Proceedings of the 4th International Workshop on Computer
  Algebra in Scientific Computing ({CASC}~'01)}, pages 431\nobreakdash--444,
  September 2001.

\bibitem[MS99]{MS99Grasp}
Jo{\~a}o~P. {Marques-Silva} and Karem~A. Sakallah.
\newblock {GRASP}: A search algorithm for propositional satisfiability.
\newblock {\em IEEE Transactions on Computers}, 48(5):506\nobreakdash--521, May
  1999.
\newblock Preliminary version in \emph{ICCAD~'96}.

\bibitem[PB16]{PB16resultsDECSMALLINT}
{P}seudo-{B}oolean competition 2016: {R}esults by benchmark for category no
  optimisation, small integers, linear constraints ({DEC-SMALLINT-LIN}).
\newblock
  \url{http://www.cril.univ-artois.fr/PB16/results/globalbybench.php?idev=81&idcat=47},
  2016.

\bibitem[RT22]{RomeroTuncel2022GraphsNullstellensatz}
Julian Romero and Levent Tunçel.
\newblock Graphs with large girth and chromatic number are hard for
  nullstellensatz.
\newblock {\em CoRR}, abs/2212.05365, 2022.

\bibitem[S4j]{Sat4j}
Sat4j: The {B}oolean satisfaction and optimization library in {J}ava.
\newblock \url{http://www.sat4j.org/}.

\end{thebibliography}

\end{document}